\author{Eric Price and Zhao Song
\\ ecprice@cs.utexas.edu zhaos@utexas.edu
\\ The University of Texas at Austin
}
\title{A Robust Sparse Fourier Transform in the Continuous Setting}
\newtheorem{theorem}{Theorem}[section]
\newtheorem{lemma}[theorem]{Lemma}
\newtheorem{definition}[theorem]{Definition}
\newtheorem{corollary}[theorem]{Corollary}
\newtheorem{claim}[theorem]{Claim}
\newcommand{\abs}[1]{|#1|}
\newcommand{\tabs}[1]{\left|#1\right|}
\newcommand{\norm}[1]{\|#1\|}
\newcommand{\wh}{\widehat}
\newcommand{\eps}{\epsilon}
\newcommand{\N}{\mathcal{N}}
\newcommand{\R}{\mathbb{R}}
\newcommand{\C}{\mathbb{C}}
\newcommand{\Z}{\mathbb{Z}}
\newcommand{\RN}[1]{%
  \textup{\uppercase\expandafter{\romannumeral#1}}%
}
\renewcommand{\i}{\mathbf{i}}
\DeclareMathOperator*{\E}{\mathbb{E}}
\DeclareMathOperator*{\median}{median}
\DeclareMathOperator{\supp}{supp}
\DeclareMathOperator{\poly}{poly}
\DeclareMathOperator{\sinc}{sinc}
\newcommand{\define}[4][ignore]{%
  \ifstrequal{#1}{ignore}{}{
  \@namedef{thmtitle@#2}{#1}}%
  \@namedef{thm@#2}{#4}%
  \@namedef{thmtypen@#2}{lemma}%
  \newtheorem{thmtype@#2}[theorem]{#3}%
  \newtheorem*{thmtypealt@#2}{#3~\ref{#2}}%
}
\newcommand{\state}[1]{%
  \@namedef{curthm}{#1}
  \@ifundefined{thmtitle@#1}{
  \begin{thmtype@#1}
    }{
  \begin{thmtype@#1}[\@nameuse{thmtitle@#1}]
  }
    \label{#1}
    \@nameuse{thm@#1}
  \end{thmtype@#1}
  \@ifundefined{thmdone@#1}{
  \@namedef{thmdone@#1}{stated}%
  }{}
}
\newcommand{\restate}[1]{%
  \@namedef{curthm}{#1}
  \@ifundefined{thmtitle@#1}{
    \begin{thmtypealt@#1}
    }{
  \begin{thmtypealt@#1}[\@nameuse{thmtitle@#1}]
  }
    \@nameuse{thm@#1}
  \end{thmtypealt@#1}
  \@ifundefined{thmdone@#1}{
  \@namedef{thmdone@#1}{stated}%
  }{}
}
\newcommand{\thmlabel}[1]{
  \@ifundefined{thmdone@\@nameuse{curthm}}{\label{#1}
    }{\tag*{\eqref{#1}}}
}
\begin{document}

\begin{titlepage}
  \maketitle
  \begin{abstract}
    In recent years, a number of works have studied methods for
    computing the Fourier transform in sublinear time if the output is
    sparse.  Most of these have focused on the discrete setting, even
    though in many applications the input signal is continuous and
    naive discretization significantly worsens the sparsity level.

    We present an algorithm for robustly computing sparse Fourier
    transforms in the continuous setting.  Let $x(t) = x^*(t) + g(t)$,
    where $x^*$ has a $k$-sparse Fourier transform and $g$ is an
    arbitrary noise term.  Given sample access to $x(t)$ for some
    duration $T$, we show how to find a $k$-Fourier-sparse
    reconstruction $x'(t)$ with
    \[
    \frac{1}{T}\int_0^T \abs{x'(t) - x(t)}^2 \mathrm{d} t \lesssim \frac{1}{T}\int_0^T \abs{g(t)}^2 \mathrm{d}t.
    \]
    The sample complexity is linear in $k$ and logarithmic in the
    signal-to-noise ratio and the frequency resolution.  Previous
    results with similar sample complexities could not tolerate an
    infinitesimal amount of i.i.d. Gaussian noise, and even algorithms
    with higher sample complexities increased the noise by a
    polynomial factor.  We also give new results for how precisely the
    individual frequencies of $x^*$ can be recovered.
  \end{abstract}
  \thispagestyle{empty}
\end{titlepage}

\section{Introduction}

The Fourier transform is ubiquitous in digital signal processing of a
diverse set of signals, including sound, image, and video.  Much of
this is enabled by the Fast Fourier Transform (FFT)~\cite{CT65}, which
computes the $n$-point discrete Fourier transform in $O(n \log n)$
time.  But can we do better?

In many situations, much of the reason for using Fourier transforms is
because the transformed signal is \emph{sparse}---i.e., the energy is
concentrated in a small set of $k$ locations.  In such situations, one
could hope for a dependency that depends nearly linearly on $k$ rather
than $n$.  Moreover, one may be able to find these frequencies while
only sampling the signal for some period of time.  This idea has led
to a number of results on \emph{sparse Fourier transforms}, including
\cite{GGIMS,GMS,HIKP,IK}, that can achieve $O(k \log (n/k) \log n)$
running time and $O(k \log (n/k))$ sample complexity (although not
quite both at the same time) in a robust setting.


These works apply to the discrete Fourier transform, but lots of
signals including audio or radio originally come from a continuous
domain.  The standard way to convert a continuous Fourier transform
into a discrete one is to apply a window function then subsample.
Unfortunately, doing so ``smears out'' the frequencies, blowing up the
sparsity.  Thus, one can hope for significant efficiency gains by
directly solving the sparse Fourier transform problem in the
continuous setting.  This has led researchers to adapt techniques from
the discrete setting to the continuous both in
theory~\cite{BCGLS,TBSR,CF14,DB13} and in practice~\cite{MRS-sparse}.
However, these results are not robust to noise: if the signal is
sampled with a tiny amount of Gaussian noise or decays very slightly
over time, no method has been known for computing a sparse Fourier
transform in the continuous setting with sample complexity linear in
$k$ and logarithmic in other factors.  That is what we present in this
paper.

Formally, a vector $x^*(t)$ has a $k$-sparse Fourier transform if it
can be written as
\[
x^*(t) = \sum_{i=1}^k v_i e^{2\pi \i f_i t}
\]
for some \emph{tones} $\{(v_i, f_i)\}$.  We consider the problem
where we can sample some signal
\[
x(t) = x^*(t) + g(t)
\]
at any $t$ we choose in some interval $[0, T]$, where $x^*(t)$ has a
$k$-sparse Fourier transform and $g(t)$ is arbitrary noise.  As long
as $g$ is ``small enough,'' one would like to recover a good
approximation to $x$ (or to $x^*$, or to $\{(v_i, f_i)\}$) using
relatively few samples $t \in [0, T]$ and fast running time.  Our
algorithm achieves several results of this form, but a simple one is
an $\ell_2/\ell_2$ guarantee: we reconstruct an $x'(t)$ with
$k$-sparse Fourier transform such that
\[
\frac{1}{T}\int_0^T \abs{x'(t) - x(t)}^2 \mathrm{d} t \lesssim \frac{1}{T}\int_0^T \abs{g(t)}^2 \mathrm{d}t
\]
using a number of samples that is $k$ times logarithmic
factors\footnote{We use $f \lesssim g$ to denote that $f \leq C g$ for
  some universal constant $C$.}.  To the best of our knowledge, this
is the first algorithm achieving such a constant factor approximation
with a sample complexity sublinear in $T$ and the signal-to-noise
ratio.

Our algorithm also gives fairly precise estimates of the individual
tones $(v_i, f_i)$ of the signal $x^*$.  To demonstrate what factors
are important, it is helpful to think about a concrete setting.  Let
us consider sound from a simplified model of a piano.

\paragraph{Thought experiment: piano tuning} In a simplified model of
a piano, we have keys corresponding to frequencies over some range
$[-F, F]$.  The noise $g(t)$ comes from ambient noise and the signals
not being pure tones (because, for example, the notes might decay
slowly over time).  For concrete numbers, a modern piano has 88 keys
spaced from about 27.5 Hz to $F = 4200$Hz. The space between keys
ranges from a few Hz to a few hundred Hz, but most chords will have an
$\eta = 30$Hz or more gap between the frequencies being played.  One
typically would like to tune the keys to within about $\pm \nu =
1$Hz. And piano music typically has $k$ around $5$.

Now, suppose you would like to build a piano tuner that can listen to
a chord and tell you what notes are played and how they are tuned.
For such a system, how long must we wait for the tuner to identify the
frequencies?  How many samples must the tuner take?  And how robust is
it to the noise?

If you have a constant signal-to-noise ratio, you need to sample for a
time $T$ of at least order $1/\nu = 1$ second in order to get $1$Hz
precision---frequencies within $1$Hz of each other will behave very
similarly over small fractions of a second, which noise can make
indistinguishable.  You also need at least $\Omega(k
\log\frac{F}{k\nu}) \approx 50$ samples, because the support of the
signal contains that many bits of information and you only get a
constant number per measurement (at constant SNR).  At higher
signal-to-noise ratios $\rho$, these results extend to
$\Omega(\frac{1}{\nu \rho})$ duration and $\Omega(k \log_\rho
\frac{F}{k\nu})$ samples.  But as the signal-to-noise ratio gets very
high, there is another constraint on the duration: for $T <
\frac{1}{\eta} \approx 33$ milliseconds the different frequencies
start becoming hard to distinguish, which causes the robustness to
degrade exponentially in $k$~\cite{M15} (though the lower bound there
only directly applies to a somewhat restricted version of our
setting).

This suggests the form of a result: with a duration $T >
\frac{1}{\eta}$, one can hope to recover the frequencies to within
$\frac{1}{\rho T}$ using $O(k \log_\rho \frac{FT}{k})$ samples.  We
give an algorithm that is within logarithmic factors of this ideal:
with a duration $T > \frac{O(\log (k/\delta))}{\eta}$, we recover the
frequencies to within $O(\frac{1}{\rho T})$ using $O(k \log_\rho
(FT) \cdot \log(k/\delta) \log k)$ samples, where $\rho$ and
$1/\delta$ are (roughly speaking) the minimum and maximum
signal-to-noise ratios that you can tolerate, respectively.

Instead of trying to tune the piano by recovering the frequencies
precisely, one may simply wish to record the sound for future playback
with relatively few samples.  Our algorithm works for this as well:
the combination $x'(t)$ of our recovered frequencies satisfies
\[
\frac{1}{T}\int_0^T \abs{x'(t) - x(t)}^2 \mathrm{d}t \lesssim \frac{1}{T}\int_0^T \abs{g(t)}^2 \mathrm{d}t.
\]

Let us now state our main theorems.  The first shows how well we can
estimate the frequencies $f_i$ and their weights $v_i$; we refer to
this $(v_i, f_i)$ pair as a \emph{tone}.

\define[Tone estimation]{thm:main1}{Theorem}{%
Consider any signal $x(t): [0, T] \to \C$ of the form
\begin{align*}
x(t) = x^*(t) + g(t),
\end{align*}
for arbitrary ``noise'' $g(t)$ and an exactly $k$-sparse $x^* =
\sum_{i \in [k]} v_i e^{2\pi \i f_i t}$ with frequencies $f_i
\in [-F, F]$ and frequency separation $\eta = \min_{i \neq j} \abs{f_i
  - f_j}$.  For some parameter $\delta > 0$, define the ``noise level''
\[
\N^2 := \frac{1}{T}\int_0^T \abs{g(t)}^2   \mathrm{d}t + \delta \sum_{i=1}^k \abs{v_i}^2.
\]
We give an algorithm that takes samples from $x(t)$ over any duration
$T > O(\frac{\log(k/\delta)}{\eta})$ and returns a set of $k$
tones $\{(v'_i, f'_i)\}$ that approximates $x^*$ with error
proportional to $\N$.  In particular, every large tone is
recovered: for any $v_i$ with $\abs{v_i} \gtrsim \N$, we have for an
appropriate permutation of the indices that
\begin{align}
  \abs{f'_i - f_i} &\lesssim \frac{\N}{T\abs{v_i}}  & \text{and} &\quad&
  \abs{v'_i - v_i} \lesssim \N.\thmlabel{eq:1}
\end{align}
In fact, we satisfy a stronger guarantee that the \emph{total} error
is bounded:
\begin{align}
  \sum_{i=1}^k \frac{1}{T}\int_{0}^T \abs{v'_i e^{2\pi \i f'_i t} - v_i e^{2\pi \i f_i t}}^2 \mathrm{d}t \lesssim \N^2.\thmlabel{eq:2}
\end{align}
The algorithm takes $O(k\log (FT)\log(\frac{k}{\delta})
\log(k))$ samples and  $O(k\log (FT)\log(\frac{FT}{\delta})
\log(k))$ running time, and succeeds with
probability at least $1-1/k^c$ for an arbitrarily
large constant $c$.
}
\state{thm:main1}

We then show that the above approximation of the individual tones is
good enough to estimate the overall signal $x(t)$ to within constant
factors:

\define[Signal estimation]{thm:main2}{Theorem}{%
  In the same setting as Theorem~\ref{thm:main1}, if the duration is
  slightly longer at $T > O(\frac{\log(1/\delta) + \log^2 k}{\eta})$,
  the reconstructed signal $x'(t) = \sum_{i=1}^k v'_i e^{2\pi \i f'_i
    t}$ achieves a constant factor approximation to the complete
  signal $x$:
\begin{align}
  \frac{1}{T}\int_{0}^T \abs{x'(t) - x(t)}^2 \mathrm{d} t \lesssim \N^2.\thmlabel{eq:3}
\end{align}

The algorithm takes $O(k\log (FT)\log(\frac{k}{\delta})
\log(k))$ samples and $O(k\log (FT)\log(\frac{FT}{\delta})
\log(k))$ running time, and succeeds with
probability at least $1-1/k^c$ for an arbitrarily
large constant $c$.
}

\state{thm:main2}

The above theorems give three different error guarantees, which are
all in terms of a ``noise level'' $\N^2$ that is the variance of the
noise $g(t)$ plus $\delta$ times the energy of the signal.  The
algorithm depends logarithmically on $\delta$, so one should think of
$\N^2$ as being the variance of the noise, e.g. $\sigma^2$ if samples
have error $N(0, \sigma^2)$.

\paragraph{Error guarantees.} Our algorithm does a good job of
estimating the signal, but how exactly should we quantify this?
Because very few previous results have shown robust recovery in the
continuous setting, there is no standard error measure to use.  We
therefore bound the error in three different ways: the maximum error
in the estimation of any tone; the weighted total error in the
estimation of all tones; and the difference between the
reconstructed signal and the true signal over the sampled interval.
The first measure has been studied before, while the other two are to
the best of our knowledge new but useful to fully explain the
robustness we achieve.

The error guarantee~\eqref{eq:1} says that we achieve good recovery of
any tones with magnitude larger than $C\N$ for some constant $C$.
Note that such a requirement is necessary: for tones with $\abs{v_i}
\leq \N$, one could have $g(t) = -v_ie^{2\pi\i f_it}$, completely
removing the tone $(v_i, f_i)$ from the observed $x(t)$ and making it
impossible to find.  For the tones of sufficiently large magnitude, we
find them to within $\frac{\N}{T \abs{v_i}}$.  This is always less
than $1/T$, and converges to $0$ as the noise level decreases.  This
is known as \emph{superresolution}--one can achieve very high
frequency resolution in sparse, nearly noiseless settings.  Moreover,
by Lemma~\ref{lem:lower} our ``superresolution'' precision $\abs{f'_i
  - f_i} \lesssim \frac{\N}{T \abs{v_i}}$ is optimal.


While the guarantee of~\eqref{eq:1} is simple and optimal given its
form, it is somewhat unsatisfying.  It shows that the maximum error
over all $k$ tones is $\N$, while one can hope to bound the
\emph{total} error over all $k$ tones by $\N$.  This is
precisely what Equation~\eqref{eq:2} does.  The guarantee~\eqref{eq:1}
is the precision necessary to recover the tone to within $O(\N)$
average error in time, that is~\eqref{eq:1} is equivalent to
\[
\frac{1}{T}\int_{0}^T \abs{v'_i e^{2\pi \i f'_i t} - v_i e^{2\pi \i f_i t}}^2 \mathrm{d} t \lesssim \N^2 \qquad \forall i \in [k].
\]
In~\eqref{eq:2}, we show that this bound holds even if we sum the left
hand side over all $i \in [k]$, so the \emph{average} error is a
factor $k$ better than would be implied by~\eqref{eq:1}.  It also means that
the total mass of all tones that are not recovered to within
$\pm 1/T$ is $O(\N)$, not just that every tone larger than
$O(\N)$ is recovered to within $\pm 1/T$.

The stronger bound~\eqref{eq:2} can be converted to the
guarantee~\eqref{eq:3}, which is analogous to the $\ell_2/\ell_2$
recovery guarantee standard in compressive sensing.  It bounds the
error of our estimate in terms of the input noise, on average over the
sampled duration.  The standard form of the $\ell_2/\ell_2$ guarantee
in the discrete sparse Fourier transform setting~\cite{GGIMS, GMS,
  HIKP,IKP,IK} compares the energy in frequency domain rather than
time domain.  This cannot be achieved directly in the continuous
setting, since frequencies infinitesimally close to each other are
indistinguishable over a bounded time interval $T$.  But if the signal
is periodic with period $T$ (the case where a discrete sparse Fourier
transform applies directly), then~\eqref{eq:3} is equivalent to the
standard guarantee by Parseval's identity.  So~\eqref{eq:3} seems like
the right generalization of $\ell_2/\ell_2$ recovery to our setting.

\paragraph{Other factors.} Our algorithm succeeds with high
probability in $k$, which could of course be amplified by repetition
(but increasing the sample complexity and running time).  Our running
time, at $O(k \log (FT) \log (FT/\delta) \log k)$, is (after
translating between the discrete and continuous setting) basically a
$\log k$ factor larger than the fastest known algorithm for discrete
sparse Fourier transforms~(\cite{HIKP}).  But since that result only
succeeds with constant probability, and repetition would also increase
that by a $\log k$ factor, our running time is actually equivalent to
the fastest known results that succeed with high probability in $k$.

Our sample complexity is $O(\log (k/\delta) \log k)$ worse than the
presumptive optimal, which is known to be achievable in the discrete
setting~(\cite{IK}).  However, the techniques that~\cite{IK} uses to
avoid the $\log (k/\delta)$ factor seem hard to adapt to the
continuous setting without losing some of the robustness/precision
guarantees.

Another useful property of our method, not explicitly stated in the
theorem, is that the sampling method is relatively simple: it chooses
$O(\log (FT) \log k)$ different random arithmetic sequences of points,
where each sequence has $O(k \log (k/\delta))$ points spread out over
a constant fraction of the time domain $T$.  Thus one could implement
this in hardware using a relatively small number of samplers, each of
which performs regular sampling at a rate of $\frac{k \log
  (k/\delta)}{T}$.  This is in contrast to the Nyquist rate for
non-sparse signals of $2F$ --- in the piano example, each sampler has
a rate on the order of $50$Hz rather than $8000$Hz.

The superresolution frequency is optimal because two signals of
magnitude $v_i$ and frequency separation $\nu < 1/T$ will differ by
$O(\nu^2 T^2 \abs{v_i}^2)$ over the duration $T$, so for $\nu$ below
our threshold the difference is just $\N^2$.  Hence if the observed
signal $x(t)$ looks identical to $(v_i, f_i)$, it might actually be
$(v_i, f'_i)$ with noise equaling the difference between the two.

The only previous result known in the form of~\eqref{eq:1}
was~\cite{M15}, which lost a $\poly(k, \frac{\max \abs{v_i}}{\min
  \abs{v_i}}, \eta)$ factor in noise tolerance and also did not
optimize for sample complexity.

\subsection{Comparison to Naive Methods}

This section compares our result to some naive ways one could try to
solve this problem by applying algorithms not designed for a
continuous, sparse Fourier transform setting.  The next section will
compare our result to algorithms that are designed for such a setting.

\paragraph{Nyquist Sampling.}  The traditional theory of band-limited
signals from discrete samples says that, from samples taken at the
Nyquist rate $2F$, one can reconstruct an $F$-band-limited signal
exactly.  The Whittaker-Shannon interpolation formula then says that
\begin{align}
  x^*(t) = \sum_{i=-\infty}^\infty x^*(2Fi) \sinc (2Ft - i)\label{e:WS}
\end{align}
where $\sinc(t)$ is the normalized $\sinc$ function $\frac{\sin(\pi
  t)}{\pi t}$.  This is for the band-limited ``pure'' signal $x^*$,
but one could then get a relationship for samples of the actual signal
$x(t)$.  This has no direct implications for learning the tones
(e.g. our~\eqref{eq:1} or~\eqref{eq:2}), but for learning the signal
(our~\eqref{eq:3}) there is also an issue.  Even in the absence of
noise and for $k=1$, this method will have error polynomially rather
than exponentially small in the number of samples.

That is, if there is no noise the method has zero error given
infinitely many samples.  But we only receive samples over the
interval $[0, T]$, leading to error.  Consider the trivial setting of
$x(t) = 1$.  The partial sum of~\eqref{e:WS} at a given $t$ will be
missing terms for $i > 2Ft$ and $i < 0$, which (for a random $t$ in
$[0, T/2]$) have magnitude at most $1/(Ft)$.  The terms alternate in
sign, so the sum has error approximately $1/(Ft)^2$.  This means that
the error over the first $1/F$ time is a constant, leading to average
error of $\frac{1}{FT}$.  This is with an algorithm that uses $FT$
samples and time.  By contrast, our algorithm in the noiseless setting
has error exponentially small in the samples and running time.

\paragraph{Discrete Sparse Fourier Transforms.}  An option related to
the previous would be to discretize very finely, then apply a discrete
sparse Fourier transform algorithm to keep the sample complexity and
runtime small.  The trouble here is that sparse Fourier transforms
require sparsity, and this process decreases the sparsity.  In
particular, this process supposes that the signal is periodic with
period $T$, so one can analyze this process as first converting the
signal to one, equivalent over $[0, T]$, but with frequency spectrum
only containing integer multiples of $1/T$.  This is done by
convolving each frequency $f_i$ with a sinc function (corresponding to
windowing to $[0, T]$) then restricting to multiples of $1/T$
(corresponding to aliasing).  The result is that a one-sparse signal
$e^{2 \pi \i f_i t}$ is viewed as having Fourier spectrum
\[
\wh{x''}[j] = \sinc(f_iT - j)
\]
for $j \in \Z$.  When $f_i$ is not a multiple of $1/T$, this means the
signal is not a perfectly sparse signal.  And this is true regardless
of the discretization level, which only affects the subsequent error
from aliasing $j \in \Z$ down to $\Z_n$.  To have error proportional to $\delta
\norm{\wh{x}^*}_2$, one would need to run such methods for a sparsity
level of $k/\delta$.  Thus, as with Nyquist sampling, the sample and
runtime will be polynomial, rather than logarithmic, in $\delta$.

The above discussion refers to methods for learning the signal
(our~\eqref{eq:3}).  In terms of learning the tones, one could
run the algorithm for sparsity $O(k)$ so that $\delta$ is a small
constant, which would let one learn roughly where the peaks are and
get most of the frequencies to the nearest $1/T$.  This would give a
similar bound to our~\eqref{eq:1}, but without the superresolution
effect as the noise becomes small.  On the plus side, the duration
could be just $O(1/\eta)$---which is sufficient for the different
peaks to be distinguishable---rather than $O(\frac{\log k}{\eta})$ as
our method would require, and the time and sample complexities could
save a $\log k$ factor (if one did not want to recover \emph{all} the
tones, just most of them).

Essentially, this algorithm tries to round each frequency to the
nearest multiple of $1/T$, which introduces noise that is a constant
fraction of the signal.  If the signal-to-noise ratio is already low,
this does not increase the noise level by that much so such an
algorithm will work reasonably well.  If the signal-to-noise ratio is
fairly high, however, then the added noise leads to much worse
performance.  Getting a constant factor approximation to the whole
signal is only nontrivial for high SNR, so such a method does badly in
that setting.  For approximating the tones, it is comparable to
our method in the low SNR setting but does not improve much as SNR
increases.


\subsection{Previous Work In Similar Settings}
There have been several works that recover continuous frequencies from
samples in the time domain.  Some of these are in our setting where
the samples can be taken at arbitrary positions over $[0, T]$ and
others are in the discrete-time (DTFT) setting where the samples must
be taken at multiples of the Nyquist frequency $\frac{1}{2F}$.

The results of~\cite{TBSR,CF14,YX15,DB13} show that a convex program
can solve the problem in the DTFT setting using $O(k \log k \log
(FT))$ samples if the duration is $T > O(\frac{1}{\eta})$, in the
setting where $g(t) = 0$ and the coefficients of $x^*$ have random
phases.  The sample complexity can be one log factor better than ours,
which one would expect for the noiseless setting.  None of these
results show robustness to noise, and some additionally require a
running time polynomial in $FT$ rather than $k$.

The result of~\cite{BCGLS} is in a similar setting to our paper, using
techniques of the same lineage.  It achieves very similar sample
complexity and running time to our algorithm, and a guarantee similar
in spirit to~\eqref{eq:1} with some notion of robustness.  However,
the robustness is weaker than ours in significant ways.  They consider
the noise $g(t)$ in frequency space (i.e. $\wh{g}(f)$), then require
that $\wh{g}(f)$ is zero at any frequency within $\eta$ of the signal
frequencies $f_i$, and bound the error in terms of $\N' =
\norm{\wh{g}}_1/k$ instead of $\norm{g}_2$.  This fails to cover
simple examples of noise, including i.i.d. Gaussian noise $g(t) \sim
N(0, \sigma^2)$ and the noise one would get from slow decay of the
signal over time (e.g. $x(t) = x^*(t) e^{-\frac{t}{100T}}$.).  Both
types of noise violate both assumptions on the noise: $\wh{g}(f)$ will
be nonzero arbitrarily close to each $f_i$ and $\norm{\wh{g}}_1$ will
be unbounded.  Their result also requires a longer duration than our
algorithm and has worse precision for any fixed duration.

The result of \cite{M15} studies noise tolerance in the DTFT setting,
ignoring sample complexity and running time.  It shows that the matrix
pencil method~\cite{HS90}, using $FT$ samples, achieves a guarantee of
the form~\eqref{eq:1}, except that the bounds are an additional
$\poly(FT, k, \delta)$ factor larger.  Furthermore, it shows a sharp
characterization of the minimal $T$ for which this is possible by any
algorithm: $T = (1 \pm o(1))\frac{2}{\eta}$ is necessary and
sufficient.  It is an interesting question whether the lower bound
generalizes to our non-DTFT setting, where the samples are not
necessarily taken from an even grid.

Lastly,~\cite{MRS-sparse} tries to apply sparse Fourier transforms to
a domain with continuous signals.  They first apply a discrete sparse
Fourier transform then use hill-climbing to optimize their solution
into a decent set of continuous frequencies.  They have interesting
empirical results but no theoretical ones.

\section{Algorithm Overview}

At a high level, our algorithm is an adaptation of the framework used
by~\cite{HIKP} to the continuous setting.  However, getting our result
requires a number of subtle changes to the algorithm.  This section
will describe the most significant ones.  We assume some familiarity
with previous work in the area~\cite{GGIMS,CCF02,GMS,GLPS,HIKP}.

First we describe a high-level overview of the structure. The
algorithm proceeds in $\log k$ stages, where each stage attempts to
recover each tone with a large constant probability (e.g. 9/10).
In each stage, we choose a parameter $\sigma \approx \frac{T}{k \log
  (k/\delta)}$ that we think of as ``hashing'' the frequencies into
random positions.  For this $\sigma$, we will choose about $\log (FT)$
different random ``start times'' $t_0$ and sample an arithmetic
sequence starting at $t_0$, i.e. observe
\[
x(t_0), x(t_0 + \sigma), x(t_0 + 2\sigma), \dotsc, x(t_0 + (k \log (k/\delta)) \sigma)
\]
We then scale these observations by a ``window function,'' which has
specific properties but among other things scales down the values near
the ends of the sequence, giving a smoother transition between the
time before and after we start/end sampling.  We alias this down to
$B=O(k)$ terms (i.e. add together terms $1, B+1, 2B+1, \dotsc$ to get
a $B$-dimensional vector) and take the $B$-dimensional DFT.  This
gives a set of $B$ values $\wh{u}_i$.  The observation made in
previous papers is that $\wh{u}$ is effectively a \emph{hashing} of
the tones of $\wh{x}$ into $B$ buckets, where $\sigma$ defines a
permutation on the frequencies that affects whether two different
tones land in the same bucket, and $\wh{u}_j$ approximately
equals the sum of all the tones that land in bucket $j$, each
scaled by a phase shift depending on $t_0$.

Because of this phase shift, for each choice of $t_0$ the value of
$\wh{u}_j$ is effectively a sample from the Fourier transform of a
signal that contains only the tones of $\wh{x}^*$ that land in bucket
$j$, with zeros elsewhere.  And since there are $k$ tones and
$O(k)$ buckets, most tones are alone in their bucket.  Therefore
this sampling strategy reduces the original problem of $k$-sparse
recovery to one of $1$-sparse recovery---we simply choose $t_0$
according to some strategy that lets us achieve $1$-sparse recovery,
and recover a tone for each bin.

\paragraph{One-sparse recovery.} The algorithm for one-sparse recovery
in~\cite{HIKP} is a good choice for adaptation to the continuous
setting.  It narrows down to the frequency in a locality-aware way,
maintaining an interval of frequencies that decreases in size at each
stage (in contrast to the method in~\cite{GMS}, which starts from the
least significant bit rather than most significant bit).

If a frequency is perturbed slightly in time (e.g., by multiplying by
a very slow decay over time) this will blur the frequency slightly
into a narrow band.  The one-sparse recovery algorithm of~\cite{HIKP}
will proceed normally until it gets to the narrow scale, at which
point it will behave semi-arbitrarily and return something near that
band.  This gives a desired level of robustness---the error in the
recovered frequency will be proportional to the perturbation.

Still, to achieve our result we need a few changes to the one-sparse
algorithm.  One is related to the duration $T$: in the very last stage
of the algorithm, when the interval is stretched at the maximal
amount, we can only afford one ``fold'' rather than the typical
$O(\log n)$.  The only cost to this is in failure probability, and
doing it for one stage is fine---but showing this requires a
different proof.  Another difference is that we need the final
interval to have precision $\frac{1}{T \rho}$ if the signal-to-noise
ratio is $\rho$---the previous analysis showed $\frac{1}{T
  \sqrt{\rho}}$ and needed to be told $\rho$, but (as we shall see) to
achieve an $\ell_2/\ell_2$ guarantee we need the optimal
$\rho$-dependence and for the algorithm to be oblivious to the value
of $\rho$.  Doing so requires a modification to the algorithm and
slightly more clever analysis.

\paragraph{$k$-sparse recovery.} The changes to the $k$-sparse
recovery structure are broader.  First, to make the algorithm simpler
we drop the~\cite{GLPS}-style recursion with smaller $k$, and just
repeat an $O(k)$-size hashing $O(\log k)$ times.  This loses a $\log
k$ factor in time and sample complexity, but because of the other
changes it is not easy to avoid, and at the same time improves our
success probability.

The most significant changes come because we can no longer measure the
noise in frequency space or rely on the hash function to randomize the
energy that collides with a given heavy hitter.  Because we only look
at a bounded time window $T$, Parseval's identity does not hold and
the energy of the noise in frequency space may be unrelated to its
observed energy.  Moreover, if the noise consists of frequencies
infinitesimally close to a true frequency, then because $\sigma$ is
bounded the true frequency will always hash to the same bin as the
noise.  These two issues are what drive the restrictions on noise in
the previous work~\cite{BCGLS}---assuming the noise is bounded in
$\ell_1$ norm in frequency domain and is zero in a neighborhood of the
true frequencies fixes both issues.  But we want a guarantee in terms
of the average $\ell_2$ noise level $\N^2$ in time domain over the
observed duration.  If the noise level is $\N^2$, because we cannot
hash the noise independently of the signal, we can only hope to
guarantee reliable recovery of tones with magnitude larger than
$\N^2$.  This is in contrast to the $\N^2/k$ that is possible in the
discrete setting, and would naively lose a factor of $k$ in the
$\ell_2/\ell_2$ approximation.

The insight here is that, even though the noise is not distributed
randomly across bins, the total amount of noise is still bounded.  If
a heavy hitter of magnitude $v^2$ is not recovered due to noise, that
requires $\Omega(v^2)$ noise mass in the bin that is not in any other
bin.  Thus the total amount of signal mass not recovered due to noise
is $O(\N^2)$, which allows for $\ell_2/\ell_2$ recovery.

This difference is why our algorithm only gets a constant factor
approximation rather than the $1+\eps$ guarantee that hashing
techniques for sparse recovery can achieve in other settings.  These
techniques hash into $B = O(k/\eps)$ bins so the average noise per bin
is $O(\frac{\eps}{k}\N^2)$.  In our setting, where the noise is not
hashed independently of the signal, this would give no benefit.

Another difference arises in the choice of the parameter $\sigma$,
which is the separation between samples in the arithmetic sequence
used for a single hashing, and gives the permutation during hashing.
In the discrete setting, one chooses $\sigma$ uniformly over $n$,
which in our setting would correspond to a scale of $\sigma \approx
\frac{1}{\eta}$.  Since the arithmetic sequences have $O(k \log
(k/\delta))$ samples, the duration would then become at least $\frac{k
  \log (k/\delta)}{\eta}$ (which is why~\cite{BCGLS} has this
duration).  What we observe is that $\sigma$ can actually be chosen at
the scale of $\frac{1}{k \eta}$, giving the desired
$O(\frac{\log(k/\delta)}{\eta})$ duration.  This causes frequencies at
the minimum separation $\eta$ to always land in bins that are a
constant separation apart.  This is sufficient because we
use~\cite{HIKP}-style window functions with strong isolation
properties (and, in fact,~\cite{HIKP} could have chosen $\sigma
\approx n/B$); it would be an issue if we were using the window
functions of~\cite{GMS,IK} that have smaller supports but less
isolation.

\paragraph{Getting an $\ell_2$ bound} Lastly, converting the
guarantee~\eqref{eq:2} into~\eqref{eq:3} is a nontrivial task that is
trivial in the discrete setting.  In the discrete setting, it follows
immediately from the different frequencies being orthogonal to each
other.  In our setting, we use that the recovered frequencies should
themselves have $\Omega(\eta)$ separation, and that well-separated
frequencies are nearly orthogonal over long enough time scales $T \gg
1/\eta$.

This bears some similarity to issues that arise in sparse recovery
with overcomplete dictionaries.  It would be interesting to see
whether further connections can be made between the problems.



\section{Proof outline}

In this section we present the key lemmas along the path to producing the algorithm. The full proof are presented in appendix.

\paragraph{Notation.} First we define the notation necessary to
understand the lemmas.  The full notation as used in the proofs
appears in Section~\ref{sec:notation}.

The algorithm proceeds in stages, each of which hashes the frequencies
to $B$ bins.  The hash function depends on two parameters $\sigma$ and
$b$, and so we define it as $h_{\sigma, b}(f): [-F, F] \to [B]$.

A tone with a given frequency $f$ can have two ``bad events''
$E_{\mathit{coll}}(f)$ or $E_{\mathit{off}}(f)$ hold for a given
hashing.  These correspond to colliding with another frequency of
$x^*$ or landing within an $\alpha$ fraction of the edge,
respectively; they each will occur with small constant probability.

For a given hashing, we will choose a number of different offsets $a$
that let us perform recovery of the tones that have neither bad
event in this stage.

We use $f \lesssim g$ to denote that there exists a constant $C$ such
that $f \leq C g$, and $f \eqsim g$ to denote $f \lesssim g \lesssim
f$.

\paragraph{Key Lemmas} First, we need to be able to compare the
distance between two pure tone signals in time domain to their
differences in parameters.  The relation is as follows:

\define{lem:two_close_signal}{Lemma}{%
  Let $(v,f)$ and $(v',f')$ denote any two tones, i.e., (magnitude,
  frequency) pairs.  Then for
\begin{align*}
\mathrm{dist}\left((v,f),(v',f')\right)^2 := \frac{1}{T} \int_0^T \left| v e^{2\pi f t\i} - v' e^{2\pi f' t\i } \right|^2\mathrm{d} t,
\end{align*}
we have
\begin{align*}
 \mathrm{dist}\left((v,f),(v',f')\right)^2 &\eqsim (|v|^2 + |v'|^2)\cdot \min (1,T^2\abs{f - f'}^2) + |v -v'|^2,
\end{align*}
and
\begin{align*}
\mathrm{dist}\left((v,f),(v',f')\right) &\eqsim |v| \cdot \min (1,T\abs{f - f'}) + |v -v'|.
\end{align*}
}
\state{lem:two_close_signal}

The basic building block for our algorithm is a function
$\mathsf{HashToBins}$, which is very similar to one of the same name
in~\cite{HIKP}.

The key property of $\mathsf{HashToBins}$ is that, if neither ``bad''
event holds for a frequency $f$ (i.e. it does not collide or land near the boundary
 of the bin), then for the bin $j = h_{\sigma, b}(f)$ we
have that $\abs{\wh{u}_j} \approx \abs{\widehat{x^*}(f)}$ with a phase
depending on $a$.

How good is the approximation?  In the discrete setting, one can show
that each tone has error about $\N^2/B$ in expectation.  Here, because
the hash function cannot randomize the noise, we instead show that the
total error over all tones is about $\N^2$:

\begin{lemma}\label{lem:hbinserror}
  Let $\sigma\in [\frac{1}{B\eta},\frac{2}{B\eta}]$ uniformly at
  random, then $b\in [0,\frac{\lceil F/\eta\rceil}{\sigma B}]$, $a \in
  [0, \frac{cT}{\sigma}]$ be sampled uniformly at random for some
  constant $c > 0$. Let the other parameters be arbitrary in
  $\widehat{u} = \mathsf{HashToBins} (x,P_{\sigma,a,b},
  B,\delta,\alpha)$, and consider $H = \{ f \in \supp(\wh{x}^*) \mid
  \text{neither} ~ E_{\mathit{coll}}(f) \text{~nor~} E_{\mathit{off}}(f) \text{~holds}\}$ and $I =
  [B] \setminus h_{\sigma, b}(\supp(\wh{x}^*))$ to be the bins that
  have no frequencies hashed to them.  Then
\begin{equation*}
\E_{\sigma, b, a}\left[ \sum_{f \in H}\left| \widehat{u}_{h_{\sigma,b}(f)} - \widehat{x^*}(f) e^{a\sigma  2\pi f \i}\right|^2 + \sum_{j \in I} \wh{u}_j^2 \right]  \lesssim \N^2
\end{equation*}
\end{lemma}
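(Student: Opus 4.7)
The plan is to decompose $\widehat{u} = \widehat{u}^{(\ast)} + \widehat{u}^{(g)}$ into the contributions from sampling $x^*$ and from sampling $g$, and to bound the resulting signal-leakage error and pure-noise error independently (combining via $|a+b|^2 \leq 2|a|^2+2|b|^2$). The signal-leakage bound will produce the $\delta \sum_i |v_i|^2$ summand of $\N^2$, and the noise bound will produce the $\frac{1}{T}\int_0^T |g(t)|^2 \, \mathrm{d}t$ summand.

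For the signal-leakage part, the standard analysis of $\mathsf{HashToBins}$ with a flat window (as in~\cite{HIKP}, adapted to continuous frequencies) gives that each tone $(v_i, f_i)$ of $x^*$ contributes to bin $j$ approximately $v_i \, e^{2\pi \i a \sigma f_i}\, W_j(f_i)$, where $W_j$ is a frequency-domain bin-indicator mask that is close to $1$ on bin $j$ and decays rapidly outside. Conditional on $f \in H$, the event $\neg E_{\mathit{off}}(f)$ gives $W_{h_{\sigma,b}(f)}(f) = 1 \pm O(\delta)$ (so the target term $\wh{x^*}(f) e^{2\pi \i a \sigma f}$ is reproduced up to an additive $O(\delta |v|)$), and $\neg E_{\mathit{coll}}(f)$ gives that no other tone of $x^*$ lands in the same bin. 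The remaining leakage from distant frequencies, summed over $f \in H$ and over empty bins $j \in I$, is at most $\delta \sum_i |v_i|^2$ in expectation over $\sigma, b$, because each pair $(f_i, f_{i'})$ only lands in a nearby bin with small probability under the random hash, and the window's off-passband value contributes at most a $\delta$ factor per pair.

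For the noise part, I want to show $\E_{\sigma,b,a}[\sum_j |\widehat{u}_j^{(g)}|^2] \lesssim \frac{1}{T}\int_0^T |g(t)|^2\, \mathrm{d}t$. The vector $\widehat{u}^{(g)}$ is the $B$-point DFT of an aliased, window-weighted sequence of samples of $g$ at arithmetic positions $\{a+\ell\sigma\}$. By Parseval in the DFT, $\sum_j |\widehat{u}_j^{(g)}|^2$ equals $B$ times the squared $\ell_2$-norm of the pre-DFT sequence; expanding the aliasing sum produces diagonal terms $|G(\ell\sigma)|^2 |g(a+\ell\sigma)|^2$ and off-diagonal cross terms carrying phase factors $e^{-2\pi \i b\sigma B(m-m')}$ whose average over $b$ in $[0, \lceil F/\eta\rceil/(\sigma B)]$ is negligible. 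On the remaining diagonal, taking expectation over $a$ uniform in $[0, cT/\sigma]$ turns each $\E_a[|g(a+\tau)|^2]$ into $\approx \frac{\sigma}{cT}\int_0^T |g(t)|^2\, \mathrm{d}t$, provided $a+\tau$ stays in $[0,T]$, which holds because the support of $G$ is chosen to fit within the sampling window. Summing $\sum_\ell |G(\ell\sigma)|^2 \approx \frac{1}{\sigma}\int |G(t)|^2\, \mathrm{d}t$ together with the $B$ from Parseval and the properly normalized $\|G\|_2^2$ yields the claimed bound up to constants.

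The main obstacle is executing the noise calculation cleanly: the aliasing in $\mathsf{HashToBins}$ produces cross terms that only vanish after averaging over $b$, and the boundary of $[0,T]$ interacts with both the support of $G$ and the range of $a$, so one must check that for an appropriately small constant $c$ all sampled points lie inside $[0,T]$ and that no more than a constant-factor loss arises from boundary truncation. This is where the duration assumption $T \gtrsim \log(k/\delta)/\eta$ is needed, so that a window of effective length $\Theta(B\sigma)$ fits comfortably within $T$. The signal-leakage bound is more routine, following the template of~\cite{HIKP} but with the continuous-domain flat window and the modified scaling $\sigma \sim 1/(B\eta)$ highlighted in the algorithm overview.
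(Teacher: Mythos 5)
Your proposal is correct and follows essentially the same route as the paper: the paper also splits $\widehat{u}$ by linearity into the contribution from $x^*$ (proving the $\delta\|\widehat{x^*}\|_1^2$ bound via the flat-window properties when neither bad event holds, in Lemma~\ref{lem:expectation_lemma_when_g_is_zero}) and the contribution from $g$ (proving $\E_{\sigma,a,b}[\sum_j|\widehat{u}_j|^2]\lesssim\frac{1}{T}\int_0^T|g|^2$ via DFT Parseval, cancellation of cross terms in expectation over $b$, conversion of $\E_a$ into a time integral, and $\|G\|_2^2\eqsim 1/B$, in Lemma~\ref{lem:expectation_lemma_when_x_star_is_zero}). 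Your sketch identifies all of these steps, including the role of the $b$-average in killing the aliasing cross terms and the boundary consideration that the $\Theta(B\sigma)$-length window must fit inside $[0,T]$, so the proofs match.
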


We prove Lemma~\ref{lem:hbinserror} by considering the cases of $x^* =
0$ and $g = 0$ separately; linearity then gives the result.  Both
follow from properties of our window functions.

\define{lem:expectation_lemma_when_x_star_is_zero}{Lemma}{
If $x^*(t) = 0, \forall t \in [0,T]$, then 

\begin{equation*}
\underset{\sigma,a,b}{ \mathbb{E} } [  \sum_{j=1}^B |\widehat{u}_j|^2] \lesssim \frac{1}{T} \int_0^T |g(t)|^2 \mathrm{d} t.
\end{equation*}
}
\state{lem:expectation_lemma_when_x_star_is_zero}

\define{lem:expectation_lemma_when_g_is_zero}{Lemma}{
If $g(t) = 0, \forall t \in [0,T] $. Let $H$ denote a set of frequencies, $H=\{f \in \supp(\wh{x}^*) \mid \mathrm{neither} ~
  E_{\mathit{coll}}(f) $ $~\mathrm{nor}~ E_{\mathit{off}}(f) \mathrm{~holds}\}$. 
Then,
\begin{equation*}
\underset{\sigma,a,b}{ \mathbb{E} } [\sum_{f\in H}  \left| \widehat{u}_{h_{\sigma,b}(f)} - \widehat{x^*}(f) e^{a\sigma 2\pi f \i}\right|^2] \leq \delta \| \widehat{x^*}\|_1^2. 
\end{equation*}
}
\state{lem:expectation_lemma_when_g_is_zero}

Lemma~\ref{lem:hbinserror} is essentially what we need for $1$-sparse
recovery.  We first show a lemma about the inner call, which narrows
the frequency from a range of size $\Delta l$ to one of size
$\frac{\Delta l}{\rho s t}$ for some parameters $\rho s t$.  This
gives improved performance (superresolution) when the signal-to-noise
ratio $\rho$ within the bucket is high.  The parameter $s$ and $t$
provide a tradeoff between success probability, performance,
running time, and duration.

\begin{lemma}
  Consider any $B, \delta, \alpha$.  Algorithm $\mathsf{HashToBins}$
  takes $O(B\log (k/\delta))$ samples and runs in $O(\frac{B}{\alpha}\log
  (k/\delta) + B\log B)$ time.
\end{lemma}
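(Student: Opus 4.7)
The plan is to treat this as a bookkeeping statement: identify the three separable cost centers of $\mathsf{HashToBins}$ — (a) the sample access to $x(t)$, (b) evaluation of the window function and aliasing into $B$ bins, and (c) the final $B$-point DFT — and add their costs.

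First I would pin down the sample count. By the description in the Algorithm Overview, $\mathsf{HashToBins}$ reads $x$ on an arithmetic progression $\{t_0 + j\sigma\}_j$ whose length equals the time-domain support of the window function, expressed in units of $\sigma$. The HIKP-style window used here is designed so that its Fourier transform is flat on a $(1-\alpha)/B$-neighborhood of each bin center and has leakage at most $\delta/\poly(k)$ outside the $(1+\alpha)/B$-neighborhood; this forces the time-domain support to be $O(B\log(k/\delta))$ samples. Hence the algorithm uses $O(B\log(k/\delta))$ samples, matching the claim.

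Next I would bound the running time step by step. Multiplying each of the $O(B\log(k/\delta))$ sampled values by the corresponding window coefficient costs $O(1/\alpha)$ per point, because the window is built from a smooth envelope (of transition width $\alpha/B$ in frequency) whose pointwise evaluation in time costs $O(1/\alpha)$ arithmetic operations. This yields a windowing cost of $O((B/\alpha)\log(k/\delta))$. Folding the resulting length-$O(B\log(k/\delta))$ windowed vector modulo $B$ is an $O(B\log(k/\delta))$ accumulation, absorbed into the previous bound. The $B$-point FFT then contributes the additive $O(B\log B)$ term. Summing gives the claimed $O(\frac{B}{\alpha}\log(k/\delta) + B\log B)$ running time.

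The main obstacle is just ensuring a clean accounting of the window function's evaluation cost and its support: one must check that the $1/\alpha$ factor enters only through the window evaluation and not through the sample count, and that the $\log(k/\delta)$ factor enters only through the support size. Because the bounds on the window function are properties established separately (the same object underlies Lemmas~\ref{lem:expectation_lemma_when_x_star_is_zero} and~\ref{lem:expectation_lemma_when_g_is_zero}), the plan is to cite those properties as a black box: support $O(B\log(k/\delta))$ in time and per-point evaluation cost $O(1/\alpha)$. Once those properties are invoked, the remaining argument is a direct count of arithmetic operations, with no probabilistic or analytic content.
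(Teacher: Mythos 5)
The paper never proves this lemma explicitly --- it is asserted as a direct operation count, with the window-function facts (support size $M$, per-coefficient formula) coming from the Definition in Section~\ref{sec:notation}. So the question is only whether your accounting is correct, and there is one genuine gap.

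Your sample count is fine: $\mathsf{HashToBins}$ reads $x$ on an arithmetic progression of length $|\supp(G)| = M = O(B\log(k/\delta))$, which matches both the window definition and Definition~\ref{def:main_sample}. The folding-to-$B$-bins cost and the $O(B\log B)$ FFT are also fine. The problem is how you get the $\frac{1}{\alpha}$ factor: you claim pointwise evaluation of the window ``costs $O(1/\alpha)$ arithmetic operations'' because the window has transition width $\alpha/B$ in frequency. That does not follow, and it is false for this paper's window: $G_i = \frac{\sin(i/B)}{i}\, e^{-i^2/(2\sigma^2)}$ is a closed-form product of a sinc and a Gaussian, evaluable in $O(1)$ per point. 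Narrowness of a frequency-domain feature does not by itself make a single time-domain evaluation expensive.

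The honest source of a $\frac{1}{\alpha}$ dependence in HIKP-style analyses is the window \emph{support}: making the transition band $\alpha/B$ narrow forces the time-domain support to be $\Theta(\frac{B}{\alpha}\log(\cdot/\delta))$, which would inflate the \emph{sample} count as well as the time. The paper sidesteps this by fixing $\sigma = \Theta(B\sqrt{\log(k/\delta)})$ and declaring ``for simplicity, $B$ is a function of $\alpha$,'' i.e.\ treating $\alpha$ as effectively constant, so the $\frac{B}{\alpha}\log(k/\delta)$ term in the time bound is just a loose carry-over from the HIKP statement rather than a separately justified cost. To make your proof correct, drop the per-point-$O(1/\alpha)$ claim; either absorb $\alpha$ into constants as the paper does (in which case the time is $O(B\log(k/\delta) + B\log B)$), or, if you want to track $\alpha$, charge it to the window support and accept that the sample count then also acquires the same $1/\alpha$ factor.
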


\define{lem:correctness_of_locateinner}{Lemma}{
  Given $\sigma$ and $b$, consider any frequency $f$ for which neither 
   $E_{\mathit{coll}}(f)$ nor $E_{\mathit{off}}(f)$ holds, and let $j =
  h_{\sigma,b}(f)$. Let $\mu^2(f) = \mathbb{E}_{a}[|\widehat{u}_j -
  \widehat{x^*}(f)e^{a\sigma 2\pi f\i} |^2 ]$ and $\rho^2 =
  |\widehat{x^*}(f) |^2 / \mu^2(f)$. 
  For sufficiently large $\rho$, and $\forall 0<s<1,t \geq 4$, consider any run of
  $\mathsf{LocateInner}$ with $f \in [l_j -\frac{\Delta l}{2}, l_j+
  \frac{\Delta l}{2}]$. It takes $O(R_{loc})$ random $(\gamma,\beta)
  \in [\frac{1}{2},1] \times [\frac{s t}{4 \sigma \Delta l},\frac{s
     t}{2 \sigma \Delta l}]$ samples over duration $\beta \sigma =
  \Theta(\frac{st}{\Delta l})$, runs in $O(st R_{loc})$ time, to
  learn $f$ within a region that has length $ \Theta( \frac{ \Delta l}{t} )$ 
with failure probability at most $(\frac{4}{s\rho})^{R_{loc}} + t\cdot (60 s)^{R_{loc}/2}$.
}
\state{lem:correctness_of_locateinner}

By repeating this inner loop, we can recover the tones in almost
every bin that does not have the ``bad'' events happen, so we recover a
large fraction of the heavy hitters in each stage. 

\define{lem:correctness_of_locateouter}{Lemma}{
  Algorithm $\mathsf{LocateKSignal}$ takes $O(k\log_{C} (FT)
  \log(k/\delta))$ samples over $O( \frac{\log(k/\delta)}{\eta})$
  duration, runs in $O(k \log_{C} (FT) \log(FT/\delta) )$ time, and
  outputs a set $L \subset [-F, F]$ of $O(k)$ frequencies with minimum
  separation $\Omega(\eta)$.

  Given $\sigma$ and $b$, consider any frequency $f$ for which neither
  of $E_{\mathit{coll}}(f)$ or $E_{\mathit{off}}(f)$ hold. Let $j = h_{\sigma,b}(f)$,
  $\mu^2(f) = \mathbb{E}_{a}[|\widehat{u}_j -
  \widehat{x^*}(f)e^{a\sigma 2\pi f\i} |^2 ]$, and $\rho^2 =
  |\widehat{x^*}(f) |^2 / \mu^2(f)$. If $\rho > C$, then with an
  arbitrarily large constant probability there exists an $f' \in L$
  with
  \[
  \abs{f - f'} \lesssim \frac{1}{T \rho}.
  \]
}
\state{lem:correctness_of_locateouter}

Combining this with estimation of the magnitudes of recovered
frequencies, we can show that the total error over all bins without
``bad'' events---that is, bins with either one well placed frequency
or zero frequencies---is small.  At this point we give no guarantee
for the (relatively few) bins with bad events; the recovered values
there may be arbitrarily large.

\define{lem:whole_of_stage}{Lemma}{
  Algorithm $\mathsf{OneStage}$ takes $O(k\log_{C} (FT)
  \log(k/\delta))$ samples over $O( \frac{\log(k/\delta)}{\eta})$
  duration, runs in $O(k (\log_{C} (FT) \log(FT/\delta) ))$ time, and
  outputs a set of $\{ (v_i',f_i')\}$ of size $O(k)$ with $\min_{i
    \neq j} \abs{f'_i - f'_j} \gtrsim \eta$.  Moreover, one can
  imagine a subset $S \subseteq [k]$ of ``successful'' recoveries,
  where $\mathsf{Pr} [i \in S] \geq \frac{9}{10}~ \forall i \in [k]$
  and for which there exists an injective function $\pi : [k]
  \rightarrow [O(k)]$ so that
  \begin{equation*}
  \underset{\sigma,b}{\mathbb{E}}\left[ \sum_{i\in S} \frac{1}{T} \int_0^T \left|v_i' e^{2\pi f_i' t \i } - v_{\pi(i)} e^{2\pi f_{\pi(i)}t\i } \right|^2 \mathrm{d} t  \right] \lesssim C^2 {\cal N}^2.
  \end{equation*}
  with $1 - 1/k^c$ probability for an arbitrarily large constant $c$.
}

\state{lem:whole_of_stage}

We can repeat the procedure for $O(\log k)$ stages and merge the
results, getting a list of $O(k)$ tones that includes $k$ tones that
match up well to the true tones.  However, we give no guarantee for
the rest of the recovered tones at this point---as far as the
analysis is concerned, mistakes from bins with collisions may cause
arbitrarily large spurious tones.
\define{lem:onemerged}{Lemma}{ Repeating algorithm $\mathsf{OneStage}$
  $O(\log k)$ times, $\mathsf{MergedStages}$ returns a set $\{
  (v_i',f_i')\}$ of size $O(k)$ with $\min_{i \neq j} \abs{f'_i -
    f'_j} \gtrsim \eta$
  that can be indexed by $\pi$ such that
\begin{equation*}
\sum_{i=1}^k \frac{1}{T} \int_0^T \left|v_i' e^{2\pi f_i' t \i } - v_{\pi(i)} e^{2\pi f_{\pi(i)}t\i } \right|^2 \mathrm{d} t \lesssim C^2 {\cal N}^2.
\end{equation*}
with probability $1-1/k^c$ for an arbitrarily large constant $c$.
}
\state{lem:onemerged}

To address the issue of spurious tones, we run the above
algorithm twice and only take the tones that are recovered in
both stages.  We show that the resulting $O(k)$ tones are
together a good approximation to the vector.
\define{lem:prune_twice}{Lemma}{
  If we run $\mathsf{MergedStages}$ twice and take the
  tones $\{(v'_i, f'_i)\}$ from the first result that have
  $f'_i$ within $c\eta$ for small $c$ of some frequency in the second
  result, we get a set of $k''=O(k)$ tones that can be indexed by 
  some permutation $\pi$ such that
\begin{align}
\sum_{i=1}^k \frac{1}{T} \int_0^T \left|v_i' e^{2\pi f_i' t \i } - v_{\pi(i)} e^{2\pi f_{\pi(i)}t\i } \right|^2 \mathrm{d} t + \sum_{i=k+1}^{k''} \abs{v'_i}^2 \lesssim C^2 {\cal N}^2.\thmlabel{eq:prune_twice}
\end{align}
}
\state{lem:prune_twice}

Simply picking out the largest $k$ recovered tones then gives
the result~\eqref{eq:2}.
\define{thm:get_eq2}{Theorem}{
  Algorithm $\mathsf{ContinuousFourierSparseRecovery}$ returns a set $\{ (v_i',f_i')\}$ of
  size $k$ with $\min_{i \neq j} \abs{f'_i - f'_j} \gtrsim \eta$ for which
\begin{equation*}
\sum_{i=1}^k \frac{1}{T} \int_0^T \left|v_i' e^{2\pi f_i' t \i } - v_{\pi(i)} e^{2\pi f_{\pi(i)}t\i } \right|^2 \mathrm{d} t \lesssim C^2 {\cal N}^2
\end{equation*}
with probability $1 - 1/k^c$ for an arbitrarily large constant $c$.
}
\state{thm:get_eq2}

By only considering the term in the sum corresponding to tone $i$
and applying Lemma~\ref{lem:two_close_signal}, we get result~\eqref{eq:1}:

\begin{corollary}\label{cor:max_error}
  With probability $1-1/k^c$ for an arbitrarily large constant $c$, 
  we recover a set of
  tones $\{ (v_i',f_i') \}$ such that, for any $v_i$ with
  $|v_i|\gtrsim \N$, we have for an appropriate permutation of the
  indices that
\begin{align}
  \abs{f'_i - f_i} &\lesssim \frac{\N}{T\abs{v_i}}  & \text{and} &\quad&
  \abs{v'_i - v_i} \lesssim \N.
\end{align}
\end{corollary}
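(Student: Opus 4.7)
The plan is to deduce the per-tone max bound directly from the total-error bound of Theorem~\ref{thm:get_eq2} by restricting the sum to a single index and invoking the distance formula of Lemma~\ref{lem:two_close_signal}. First I would apply Theorem~\ref{thm:get_eq2} to obtain, with the desired $1-1/k^c$ success probability, a set $\{(v'_i,f'_i)\}$ of size $k$ and a permutation $\pi$ satisfying
\[
\sum_{i=1}^k \frac{1}{T}\int_0^T \bigl| v'_i e^{2\pi f'_i t\i} - v_{\pi(i)} e^{2\pi f_{\pi(i)} t\i}\bigr|^2 \mathrm{d}t \lesssim \mathcal{N}^2.
\]
After relabeling the recovered tones by $\pi^{-1}$ so that the matching becomes the identity, every individual summand is also $\lesssim \mathcal{N}^2$; in the notation of Lemma~\ref{lem:two_close_signal} this reads
\[
\mathrm{dist}\bigl((v_i,f_i),(v'_i,f'_i)\bigr)^2 \lesssim \mathcal{N}^2 \qquad \text{for every } i \in [k].
\]

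Next I would unpack the distance using the equivalence
\[
\mathrm{dist}\bigl((v_i,f_i),(v'_i,f'_i)\bigr)^2 \eqsim (|v_i|^2 + |v'_i|^2)\cdot \min\!\bigl(1, T^2|f_i-f'_i|^2\bigr) + |v_i - v'_i|^2.
\]
Since the right-hand side is a sum of nonnegative terms each $\lesssim \mathcal{N}^2$, the magnitude bound $|v'_i - v_i| \lesssim \mathcal{N}$ follows immediately from the $|v_i-v'_i|^2$ term.

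For the frequency bound, I would use the hypothesis $|v_i| \gtrsim \mathcal{N}$. Choosing the hidden constant in this hypothesis sufficiently large, the inequality $|v_i|^2 \min(1,T^2|f_i-f'_i|^2) \lesssim \mathcal{N}^2$ forces $\min(1,T^2|f_i-f'_i|^2) < 1$, so the minimum is attained by $T^2|f_i-f'_i|^2$, yielding
\[
|f_i - f'_i|^2 \lesssim \frac{\mathcal{N}^2}{T^2 |v_i|^2}, \qquad \text{hence} \qquad |f_i - f'_i| \lesssim \frac{\mathcal{N}}{T|v_i|},
\]
as claimed.

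The main obstacle, such as it is, lies in ensuring the constants in the ``$\gtrsim \mathcal{N}$'' assumption on $|v_i|$ are large enough to push the $\min$ off its upper clip of $1$; this is a routine constant-chasing step using the $\eqsim$ constants of Lemma~\ref{lem:two_close_signal} and the suppressed constant of Theorem~\ref{thm:get_eq2}. Beyond that, the argument is purely a term-by-term extraction from the total $\ell_2$ guarantee, so there is no probabilistic or combinatorial difficulty beyond inheriting the success probability of Theorem~\ref{thm:get_eq2}.
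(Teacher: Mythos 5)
Your proposal is correct and matches the paper's argument: the paper derives Corollary~\ref{cor:max_error} exactly by restricting the sum in Theorem~\ref{thm:get_eq2} to a single index and unpacking via Lemma~\ref{lem:two_close_signal}, which is what you do. The constant-chasing you flag (that $|v_i|\gtrsim \N$ with a large enough hidden constant forces $\min(1,T^2|f_i-f'_i|^2)<1$, so the minimum is attained by the quadratic branch) is the only step requiring care, and you have handled it correctly.
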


We then show that~\eqref{eq:2} implies~\eqref{eq:3} for sufficiently
long durations $T$.  A long duration helps because it decreases the
correlation between $\eta$-separated frequencies.

\define{lem:choose_longer_duration}{Lemma}{
  Let $\{(v_i,f_i)\}$ and $\{(v'_i,f'_i)\}$ be two sets of $k$ tones
  for which $\min_{i\neq j} |f_i-f_j| \geq \eta$ and $\min_{i\neq j}
  |f'_i - f'_j| \gtrsim \eta$ for some $\eta > 0$.  Suppose that $T >
  O(\frac{\log^2 k}{\eta})$.  Then these sets can be
  indexed such that
\begin{equation}
\frac{1}{T} \int_0^T | \sum_{i=1}^k  (v'_i e^{2\pi \i f'_i t} - v_i e^{2\pi \i f_i t})|^2 \mathrm{d} t \lesssim \sum_{i=1}^k \frac{1}{T}\int_{0}^T \abs{v'_i e^{2\pi \i f'_i t} - v_i e^{2\pi \i f_i t}}^2\mathrm{d}t.
\end{equation}
}
\state{lem:choose_longer_duration}

Combining Theorem \ref{thm:get_eq2} and Lemma \ref{lem:choose_longer_duration} immediately implies

\begin{theorem}
  Suppose we sample for a duration $T>O(\frac{\log(1/\delta)+\log^2
    k}{\eta})$.  Then the reconstructed signal $x'(t) = \sum_{i=1}^k
  v'_i e^{2\pi \i f'_i t}$ achieves a constant factor
  approximation to the complete signal $x$:
\begin{equation}
\frac{1}{T} \int_0^T |x'(t) -x(t)|^2 \mathrm{d} t \lesssim C^2 \N^2.
\end{equation}
The algorithm takes $O(k\log \frac{F}{\eta}\log(\frac{k}{\delta})
\log(k))$ samples, runs in $O(k\log
\frac{F}{\eta}\log(\frac{FT}{\delta}) \log(k))$ time, and succeeds with
probability  at least $1-1/k^c$ for an arbitrarily
large constant $c$.
\end{theorem}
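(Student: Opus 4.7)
The plan is to combine the two prior results in the obvious way, paying attention only to the duration bookkeeping and to one triangle inequality to peel off the noise $g$. First I would write $x(t) = x^*(t) + g(t)$ so that $x'(t) - x(t) = (x'(t) - x^*(t)) - g(t)$, and then use $|a-b|^2 \leq 2|a|^2 + 2|b|^2$ together with the definition of $\mathcal{N}^2$ to get
\begin{equation*}
\frac{1}{T}\int_0^T |x'(t) - x(t)|^2 \mathrm{d}t \leq 2 \cdot \frac{1}{T}\int_0^T |x'(t) - x^*(t)|^2 \mathrm{d}t + 2\mathcal{N}^2.
\end{equation*}
So it suffices to bound the first term on the right by $O(C^2 \mathcal{N}^2)$.

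Next, writing $x'(t) - x^*(t) = \sum_{i=1}^k (v'_i e^{2\pi \i f'_i t} - v_{\pi(i)} e^{2\pi \i f_{\pi(i)} t})$ for the permutation $\pi$ from Theorem~\ref{thm:get_eq2}, I would invoke Lemma~\ref{lem:choose_longer_duration} to replace the integral of the squared sum by the sum of integrals of the individual squared tone differences. This step requires precisely $T > O(\log^2 k / \eta)$, together with $\min_{i\neq j}|f'_i - f'_j| \gtrsim \eta$ from Theorem~\ref{thm:get_eq2}; both are available. After this application, the right-hand side is exactly $\sum_{i=1}^k \frac{1}{T}\int_0^T |v'_i e^{2\pi \i f'_i t} - v_{\pi(i)} e^{2\pi \i f_{\pi(i)} t}|^2 \mathrm{d}t$.

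Then I would apply Theorem~\ref{thm:get_eq2} directly to bound this sum by $O(C^2 \mathcal{N}^2)$, valid whenever the duration satisfies the hypothesis of that theorem, which inherits $T > O(\log(k/\delta)/\eta)$ from Theorem~\ref{thm:main1}. Taking the duration to be $T > O((\log(1/\delta) + \log^2 k)/\eta)$ simultaneously satisfies both hypotheses (noting $\log(k/\delta) \leq \log(1/\delta) + \log k \leq \log(1/\delta) + \log^2 k$). The sample complexity, running time, and failure probability are inherited verbatim from Theorem~\ref{thm:get_eq2}, since the present theorem only composes that result with a deterministic, sample-free analytic inequality.

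The only real obstacle is duration bookkeeping: making sure the duration $T$ is large enough that Lemma~\ref{lem:choose_longer_duration} can be applied to \emph{both} sets of frequencies (the true $f_i$ with separation $\eta$ and the recovered $f'_i$ with separation $\gtrsim \eta$). Since Theorem~\ref{thm:get_eq2} already guarantees the $\Omega(\eta)$ separation of the output frequencies, this is essentially automatic once $T > O(\log^2 k/\eta)$. Everything else is a two-line triangle-inequality plus two black-box applications.
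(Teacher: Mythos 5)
Your proposal is correct and takes essentially the same route as the paper, which also obtains the result by "combining Theorem~\ref{thm:get_eq2} and Lemma~\ref{lem:choose_longer_duration}." You have simply spelled out the implicit triangle-inequality step $x' - x = (x' - x^*) - g$ and the duration bookkeeping that the paper compresses into the word "immediately."
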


That finishes the proof of our main theorem.  We also show that our
``superresolution'' precision from~\eqref{eq:1} is optimal, which is a
simple corollary of Lemma~\ref{lem:two_close_signal}.

\define{lem:lower}{Lemma}{
  There exists a constant $c > 0$ such that, for a given sample duration
  $T$, one cannot recover the frequency $f$ to within 
  \[
  c\frac{\N}{T \abs{\wh{x}^*(f)}}
  \]
  with $3/4$ probability, for all $\delta > 0$, even if $k=1$.
}
\state{lem:lower}
\newpage
\bibliographystyle{alpha}
\bibliography{ref} 

\newpage
\appendix

\section{Notation and Definitions: Permutation, Hashing, Filters}\label{sec:notation}

This section gives definitions about the permutation, hashing, and
filters that are used throughout the proofs. Let $[n]$ denote the set $\{1,2,\cdots,n-1,n\}$. $\mathbb{R}$ denotes the real numbers, $\mathbb{Z}$ denotes the integer numbers and $\mathbb{C}$ denotes the complex numbers. The convolution of two continuous functions $f$ and $g$ is written as $f * g$,
\begin{equation*}
(f* g) (t) := \int_{-\infty}^{+\infty} f(\tau) g(t-\tau) \mathrm{d} \tau
\end{equation*}
and the discrete convolution of $f$ and $g$ is given by,
\begin{equation*}
(f*g) [n] := \sum_{m=-\infty}^{+\infty} f[m] g[n-m]
\end{equation*}
Let $\i$ denote $\sqrt{-1}$, and $e^{\i \theta} = \cos(\theta) +\i \sin(\theta)$. For any complex number $z\in \mathbb{C}$, we have $ z = a+ \i b$, where $a,b\in \mathbb{R}$. Define $\overline{z} = a - \i b$, $|z|^2 = z \overline{z} = a^2 + b^2$ and let $\phi(z)$ be the phase of $z$. Let $\supp(f)$ denote the support of function/vector $f$, and $\| f\|_0 = |\supp(f)|$. For any $p\in [1,\infty]$, the $\ell_p$ norm of a vector of $x$ is $\| x \|_p = (\sum_i |x_i|^p )^\frac{1}{p}$, defined to be $\max_i |x_i|$ for $p=\infty$.  Let $k$ denote the sparsity of frequency domain. All the frequencies $\{f_1,f_2,\cdots,f_k\}$ are from $[-F, F]$. Let $B=O(k)$ denote the number of hash bins in our algorithm.

We translate the ``permutation'' $P_{\sigma, a, b}$ of \cite{HIKP}
from the DFT setting to the DTFT setting.

\begin{definition}
$(P_{\sigma,a,b} x^*)(t) = x^*(\sigma(t-a)) e^{-2\pi \i\sigma bt}$.
\end{definition}

\begin{lemma}
$\widehat{P_{\sigma,a,b} x^*} (\sigma(f-b)) = e^{-2\pi f \sigma a \i } \widehat{x}^*(f)$
\end{lemma}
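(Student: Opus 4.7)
The plan is to compute the Fourier transform of $P_{\sigma,a,b} x^*$ directly, using the sparse-tones representation $x^*(t)=\sum_{i=1}^k v_i e^{2\pi \i f_i t}$ (which is the setting in which $\widehat{x^*}$ is defined throughout the paper as the tone-amplitude spectrum, i.e.\ $\widehat{x^*}(f_i)=v_i$). The identity is purely a composition of three elementary Fourier rules (dilation, translation, modulation), so there is no obstacle beyond bookkeeping the phases correctly; the scaling by $\sigma$ inside $x^*(\sigma(t-a))$ produces a frequency-rescaling $f \mapsto \sigma f$, the shift by $a$ produces a phase $e^{-2\pi\i \sigma f a}$, and the modulation by $e^{-2\pi\i\sigma b t}$ produces a frequency shift by $-\sigma b$. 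These three effects combine to send a tone at $f$ to a tone at $\sigma(f-b)$ with coefficient $\widehat{x^*}(f)\,e^{-2\pi\i \sigma f a}$.

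Concretely, I would first expand
\begin{align*}
(P_{\sigma,a,b} x^*)(t)
&= x^*(\sigma(t-a))\, e^{-2\pi\i \sigma b t}
= \sum_{i=1}^k v_i\, e^{2\pi \i f_i \sigma(t-a)}\, e^{-2\pi\i \sigma b t}\\
&= \sum_{i=1}^k \bigl(v_i\, e^{-2\pi\i f_i \sigma a}\bigr)\, e^{2\pi\i \sigma(f_i-b) t}.
\end{align*}
This identifies $P_{\sigma,a,b} x^*$ as a $k$-sparse sum of tones, now located at the new frequencies $\sigma(f_i-b)$ with the new amplitudes $v_i e^{-2\pi\i f_i \sigma a}$. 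Reading off the spectrum tone-by-tone gives
\begin{equation*}
\widehat{P_{\sigma,a,b} x^*}\bigl(\sigma(f_i-b)\bigr) \;=\; v_i\, e^{-2\pi\i f_i \sigma a} \;=\; e^{-2\pi\i f_i \sigma a}\,\widehat{x^*}(f_i),
\end{equation*}
which is exactly the claimed formula at $f=f_i$. For $f \notin \{f_1,\dots,f_k\}$ both sides vanish, so the identity holds for all $f$ in $[-F,F]$.

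The only subtle point worth highlighting in the write-up is what $\widehat{\cdot}$ means: if one instead used the analyst's continuous Fourier transform $\widehat{y}(\xi)=\int y(t) e^{-2\pi \i \xi t}\,\mathrm{d} t$, the change of variables $u=\sigma(t-a)$ would yield an extra $1/\sigma$ Jacobian factor. The statement as written corresponds to the tone-amplitude spectrum (equivalently, the Fourier series of a pure sum of exponentials), where no Jacobian appears and the operator $P_{\sigma,a,b}$ acts as a pure permutation-plus-phase on tone coefficients. I would therefore briefly note this convention, then carry out the one-line computation above; since the claim is essentially a direct algebraic rearrangement, I anticipate no real obstacle.
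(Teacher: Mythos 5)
Your proof is correct and is essentially the paper's own argument run in the forward direction: the paper verifies the identity by writing down the time-domain signal corresponding to the claimed spectrum and checking it equals $x^*(\sigma(t-a))e^{-2\pi\i\sigma bt}$, while you expand $P_{\sigma,a,b}x^*$ as a sum of tones and read off the permuted frequencies and phases, but the algebraic rearrangement is identical. Your remark about the tone-amplitude convention (no Jacobian factor) is consistent with how the paper implicitly treats $\widehat{x^*}$.
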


\begin{proof}
  The time domain representation of the given Fourier definition would be
  \begin{align*}
    x'(t) &= \sum_{f \in \supp(\wh{x}^*)} e^{-2\pi \i \sigma a f} \wh{x}^*(f) e^{2\pi \i \sigma(f - b) t}\\
    &= \sum_{f \in \supp(\wh{x}^*)} e^{-2\pi \i \sigma b t} \wh{x}^*(f) e^{2\pi \i \sigma(t-a)f}\\
    &= e^{-2\pi \i \sigma b t}  x^*(\sigma(t-a))
  \end{align*}
  which matches, so the formula is right.
\end{proof}

We also extend the flat window function for the DFT setting \cite{HIKP}, \cite{HIKP12}
to the DTFT setting:

\begin{definition}
Let $M=O(B\log \frac{k}{\delta})$. We say that $(G, \widehat{G'}) = (G_{B,\delta,\alpha}, \widehat{G'}_{B,\delta,\alpha}) \in \mathbb{R}^{M} \times \mathbb{R}^{[-F,F]}$ is a flat window function with parameters $B\geq 1$, $\delta >0$, and $\alpha>0$. For simplicity, let's say $B$ is a function of $\alpha$. Define $|\mathrm{supp}(G)| =M $ and $\widehat{G'}$ satisfies
\begin{itemize}
\item $G_i = \frac{\sin(i\frac{1}{B} )}{i} \cdot e^{-\frac{i^2}{2\sigma^2}}$, where $\sigma = \Theta(B \sqrt{\log(k/\delta)})$.
\item $\widehat{G} (f) =  \overset{M} { \underset  {i=1}{ \sum} } G_i e^{f\cdot\frac{i}{M} 2\pi \i}$.
\item $\mathrm{supp}(\widehat{G'} ) \subset [-\frac{2\pi}{2B}, \frac{2\pi}{2B}]$.
\item $\widehat{G'}(f) =1$ for all $f \in [-\frac{(1-\alpha)2\pi}{2B}, \frac{(1-\alpha)2\pi}{2B}]$.
\item $\widehat{G'}(f) =0$ for all $ |f| \geq \frac{2\pi}{2B}$.
\item $\widehat{G'}(f) \in [0,1]$ for all $f$.
\item $\| \widehat{G'} -\widehat{G} \|_\infty^2 < \delta/k$.
\end{itemize}
\end{definition}

\begin{claim}\label{cla:l2_of_discrete_G}
$ \overset{M} { \underset  {i=1}{ \sum} } G_i^2 \eqsim \frac{1}{B}$, where $M = O(B\log k/\delta)$.
\end{claim}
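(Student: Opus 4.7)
The plan is to establish the two-sided bound $\sum_{i=1}^{M} G_i^2 \eqsim 1/B$ by a direct computation on the formula $G_i = \frac{\sin(i/B)}{i}\cdot e^{-i^2/(2\sigma^2)}$, splitting the sum at $i=B$ so that the two different behaviors of $\sin(i/B)/i$ (linear growth for $i < B$, decay like $1/i$ for $i > B$) can be handled separately. The Gaussian factor only trims tails and should not change the order of magnitude because $\sigma = \Theta(B\sqrt{\log(k/\delta)}) \gg B$.

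For the lower bound I would restrict to $i \in [1,B]$. There $i/B \in [1/B,1] \subset [0,\pi/2]$, so $\sin(x) \geq \frac{2}{\pi}x$ gives $\sin^2(i/B)/i^2 \geq \frac{4}{\pi^2 B^2}$. Since $i \leq B$ and $\sigma^2 = \Theta(B^2 \log(k/\delta))$, we have $i^2/\sigma^2 \leq 1/\log(k/\delta) = O(1)$, so $e^{-i^2/\sigma^2} \geq e^{-O(1)} = \Omega(1)$. Summing,
\begin{equation*}
\sum_{i=1}^{M} G_i^2 \; \geq \; \sum_{i=1}^{B} \frac{\sin^2(i/B)}{i^2}\, e^{-i^2/\sigma^2} \; \gtrsim \; B \cdot \frac{1}{B^2} \; = \; \frac{1}{B}.
\end{equation*}

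For the upper bound I would drop the Gaussian factor (which is at most $1$) and extend the sum to infinity, then use the closed form
\begin{equation*}
\sum_{n=1}^{\infty} \frac{\sin^2(nx)}{n^2} \; = \; \frac{\pi x}{2} - \frac{x^2}{2}, \qquad x \in [0,\pi],
\end{equation*}
which follows from writing $\sin^2(y) = (1-\cos 2y)/2$ and using the standard Fourier series $\sum_{n \geq 1} \cos(n\theta)/n^2 = \pi^2/6 - \pi\theta/2 + \theta^2/4$ on $[0,2\pi]$. Applied with $x = 1/B$ this gives $\sum_{i=1}^{\infty} \sin^2(i/B)/i^2 = \pi/(2B) - 1/(2B^2) = O(1/B)$, hence $\sum_{i=1}^{M} G_i^2 \leq O(1/B)$.

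The only real subtlety is the lower-bound argument: one must verify that the Gaussian does not kill the main contribution, which is why restricting to $i \leq B$ and invoking $\sigma \gg B$ is important; this is exactly the regime where $G_i$ still behaves like a scaled sinc and contributes $\Theta(1/B^2)$ per term across $\Theta(B)$ terms. Combining the two bounds yields the claim.
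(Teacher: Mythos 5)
Your proposal is correct, and the lower-bound half is essentially the paper's argument (restrict to $i \lesssim B$, observe $\sin(i/B)/i \gtrsim 1/B$ there, and use $\sigma = \Theta(B\sqrt{\log(k/\delta)}) \gg B$ so the Gaussian factor is $\Omega(1)$ on that range). The upper bound, however, takes a genuinely different route: the paper splits the sum at $i \eqsim B$ and bounds the head by $\Theta(B)\cdot \Theta(1/B^2)$ and the tail by $\sum_{i \gtrsim B} 1/i^2 \lesssim 1/B$, whereas you drop the Gaussian, extend to infinity, and invoke the exact identity $\sum_{n\geq 1}\sin^2(nx)/n^2 = \pi x/2 - x^2/2$ at $x = 1/B$. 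Your version gives the constant $\pi/2$ for free and is slightly shorter, at the cost of citing (or rederiving) the Fourier series $\sum \cos(n\theta)/n^2 = \pi^2/6 - \pi\theta/2 + \theta^2/4$; the paper's split is more elementary and mirrors the decomposition it already uses for the lower bound, which keeps the two halves of the proof parallel. Both are valid, and the overall strategy — two-sided bound via linear regime for $\sin$ near zero plus harmlessness of the Gaussian truncation — is the same.
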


\begin{proof}
By definition of $G_i$, we have
\begin{equation*}
\sum_{i=1}^M G_i^2  = 2 \sum_{i=1}^{M/2} \frac{\sin^2(i \frac{1}{B})}{ (i)^2}  (  e^{- \frac{ (i)^2}{2\sigma^2}} )^2.
\end{equation*}
There exists some constant $c\in [0,2\pi)$, such that $\frac{\sin(i/B) }{i} \eqsim \frac{1}{B} $ if $i/B < c\pi$.
\begin{eqnarray*}
\sum_{i=1}^M G_i^2 & = &  2 \sum_{i=1}^{\lfloor Bc\pi \rfloor } \frac{\sin^2(i \frac{1}{B})}{ (i)^2}  (  e^{- \frac{ (i)^2}{2\sigma^2}} )^2 + 2 \sum_{i= \lceil Bc\pi \rceil }^{M/2} \frac{\sin^2(i \frac{1}{B})}{ (i)^2}  ( e^{- \frac{ (i)^2}{2\sigma^2}} )^2 \\
& \leq & 2 \sum_{i=1}^{\lfloor Bc\pi \rfloor } \frac{\sin^2(i \frac{1}{B})}{ (i)^2} \cdot 1 + 2 \sum_{i=\lceil Bc\pi \rceil }^{M/2} \frac{\sin^2(i \frac{1}{B})}{ (i)^2} \cdot 1\\
& \lesssim  & \sum_{i=1}^{\lfloor Bc\pi \rfloor }  \frac{1}{B^2} + \sum_{i=\lceil Bc\pi \rceil }^{M/2} \frac{1}{i^2} \\
& \lesssim & \frac{1}{B}.
\end{eqnarray*}
Thus, we show an upper bound. It remains to prove the lower bound.
\begin{eqnarray*}
\sum_{i=1}^M G_i^2 & \geq &  2 \sum_{i=1}^{\lfloor Bc\pi \rfloor } \frac{\sin^2(i \frac{1}{B})}{ (i)^2}  (  e^{- \frac{ (i)^2}{2\sigma^2}} )^2 \\
& \geq &  2 \sum_{i=1}^{\lfloor Bc\pi \rfloor } \frac{\sin^2(i \frac{1}{B})}{ (i)^2}  (  e^{- \frac{ (Bc\pi)^2}{2\sigma^2}} )^2 \\
& \gtrsim &  2 \sum_{i=1}^{\lfloor Bc\pi \rfloor } \frac{1}{B^2}  (  e^{- \frac{ (Bc\pi)^2}{2\sigma^2}} )^2 \\
& \gtrsim &  \frac{1}{B} (  e^{- \frac{ (Bc\pi)^2}{2\sigma^2}} )^2 \\
& \gtrsim &  \frac{1}{B} (  e^{-c_0} )^2 .
\end{eqnarray*}
The last inequality follows by there exists some universal constant $c_0>0$ such that $-\frac{1}{\log(k/\delta)} \gtrsim - c_0$. Thus, we show $\sum_{i=1}^M  G_i^2 \gtrsim \frac{1}{B}$.
\end{proof}

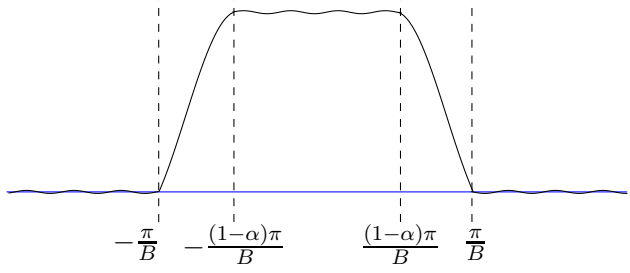
\begin{figure}
\centering
\begin{tikzpicture}
\def\PI{3.1415926}
\def\s{3}
\def\height{0.8}
\def\rh{\height*sin(\s* 0.05 r)/0.05 }
\def\rc{1}
\def\rightposition{0.05+\rc+ 2*\PI/40}
\def\leftposition{ -0.05-\rc- 2*\PI/40}

    	\draw [blue] (-4.25,0) --(4,0);
    	\draw [dashed] (\leftposition-0.02 , -0.4) -- (\leftposition-0.02 , 2.6);
    	\draw [dashed] (\rightposition-0.22 , -0.4) -- (\rightposition-0.22 , 2.6);

    	\draw [dashed] (\leftposition-0.02 -1.0 , -0.4) -- (\leftposition-0.02 -1.0, 2.6);
    	\draw [dashed] (\rightposition-0.22 +0.95 , -0.4) -- (\rightposition-0.22 +0.95, 2.6);

    	\node at (\leftposition-0.02,-0.7) {$-\frac{(1-\alpha)\pi}{B}$};
    	\node at (\rightposition-0.22,-0.7) {$\frac{(1-\alpha)\pi}{B}$};

    	\node at (\leftposition-0.02-\rc-0.3,-0.7) {$-\frac{\pi}{B}$};
    	\node at (\rightposition-0.22+\rc,-0.7) {$\frac{\pi}{B}$};
    	\draw [domain={\leftposition-0.04}:{\rightposition-0.2},variable=\t,smooth,samples=500]
    	plot ( {\t },{\rh + 0.02* sin( 10* \t r)});
    	\draw [domain=-0.05:{-\PI/\s},variable=\t,smooth,samples=500]
    	plot ( {\t - \leftposition-0.07},{\height*sin(\s* \t r)/\t});
    	\draw [domain=0.05:{\PI/\s},variable=\t,smooth,samples=500]
    	plot ( {\t + \rightposition-0.3},{\height*sin(\s* \t r)/\t});

    	\draw [domain={-\PI/\s-\leftposition-1.7}:{-\PI/\s-\leftposition+0.3},variable=\t,smooth,samples=500]
    	plot ( {\t-0.37 },{ 0.02* sin( 10* \t r)});

    	\draw [domain={\PI/\s+\rightposition+2}:{\PI/\s+\rightposition-0.08},variable=\t,smooth,samples=500]
    	plot ( {\t -0.25 },{ 0.02* sin( 10* \t r)});

\end{tikzpicture}
\caption{Filter $\widehat{G'}(f)$ }\label{fig:filter}
\end{figure}

To analyze the details of our algorithm, we explain some lower-level definitions and claims first. Here we give the definition of three notations that related to hash function.

\begin{definition}
$\pi_{\sigma,b} (f) = 2\pi \sigma(f-b) \pmod {2\pi }$. We maps frequency to a circle $[0,2\pi)$, since our observation of sample is the phase of some complex number, which also belongs to $[0,2\pi)$.
\end{definition}

\begin{definition}
$h_{\sigma,b}(f) = \mathrm{round}( \pi_{\sigma,b} (f) \cdot \frac{B}{2\pi})$. $h_{\sigma,b}(f)$ is a ``hash function'' that hashes frequency $f$ into one of the $B$ bins. The motivation is, it is very likely that each bin only has 1 heavy hitters if we choose large enough $B$. Then, for each bin, we can run a $1$-sparse algorithm to recover the frequency.
\end{definition}

\begin{definition}
$o_{\sigma,b}(f) = \pi_{\sigma,b} (f)- \frac{2\pi}{B} \cdot h_{\sigma,b}(f)$. Offset $o_{\sigma,b}(f)$ denotes the distance from $\pi_{\sigma,b}(f)$ to the center of the corresponding bin that frequency $f$ was hashed into.
\end{definition}

Then we define some events that might happen after applying hash function to the entire frequency domain.
\begin{definition}
``Collision'' event $E_{\mathit{coll}}(f)$: holds iff $h_{\sigma,b}(f) \in h_{\sigma,b}(\mathrm{supp}(\widehat{x^*}) \backslash \{f\})$. The ``collision'' event happening means there exists some other frequency $f'$ such that both $f$ and $f'$ are hashed into the same bin. Once two frequencies are colliding in one bin, the algorithm will not be able to recover them.
\end{definition}

\begin{definition}
``Large offset'' event $E_{\mathit{off}}(f)$: holds iff $|o_{\sigma,b}(f)| \geq (1-\alpha) \frac{2\pi}{2B}$. 
The event holds if frequency $f$ is not within factor $1-\alpha$ of the radius close to the center of that hash bin. It causes the frequency to be in the intermediate regime of filter and not recoverable, see Figure \ref{fig:filter}.
\end{definition}

\begin{definition}\label{def:main_sample}
We sample $\sigma$ uniformly at random from $[\frac{1}{B\eta}, \frac{2}{B\eta}]$. Conditioning on $\sigma$ is chosen first, we sample $b$ uniformly at random from $[0, \frac{\lceil F/\eta \rceil}{B\sigma}]$. Then we sample $\gamma $ uniformly at random from $[\frac{1}{2},1]$ and $\beta$ uniformly at random from $[\widehat{\beta},2\widehat{\beta}]$, where $\widehat{\beta}$ is dynamically changing during our algorithm(The details of setting $\widehat{\beta}$ are explained in Lemma \ref{lem:correctness_of_locateinner}).
For $P_{\sigma,\gamma,b}$ and $P_{\sigma,\gamma+\beta,b}$, we take the following two sets of samples over time domain,
\begin{eqnarray*}
&&x(\sigma(1-\gamma)), x(\sigma (2- \gamma)) , x(\sigma(3-\gamma)), \cdots, x( \sigma( B\log(k/\delta) - \gamma) ) \\
&&x(\sigma(1-\gamma -\beta)), x(\sigma (2- \gamma- \beta)) , x(\sigma(3-\gamma -\beta)), \cdots, x( \sigma( B\log(k/\delta) - \gamma- \beta) ) 
\end{eqnarray*}
\end{definition}


Conditioning on drawing $\sigma,b$ from some distribution, we are able to show that the probability of ``Collision'' and ``Large offset'' event holding are small.

\begin{lemma}\label{lem:wrapping}
For any $\widetilde{T}$, and $ 0 \leq \widetilde{\epsilon}, \widetilde{\delta} \leq \widetilde{T}  $, if we sample $\widetilde{\sigma}$ uniformly at random from $[A,2A]$, then
\begin{equation}
\frac{2\widetilde{\epsilon} }{ \widetilde{T} } - \frac{2\widetilde{\epsilon} }{ A } \leq \mathsf{Pr} \left[ \widetilde{\sigma}  {\pmod {\widetilde{T}} } \in [ \widetilde{\delta} - \widetilde{\epsilon}, \widetilde{\delta} + \widetilde{\epsilon} ~] \right] \leq \frac{2\widetilde{\epsilon} }{ \widetilde{T} } + \frac{4\widetilde{\epsilon} }{ A }.
\end{equation} 
\end{lemma}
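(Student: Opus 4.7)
The plan is to reduce the probability computation to a straightforward length-of-interval calculation by exploiting the $\widetilde{T}$-periodicity of the event. Let $E = \{y \in \mathbb{R} : y \pmod{\widetilde{T}} \in [\widetilde{\delta}-\widetilde{\epsilon}, \widetilde{\delta}+\widetilde{\epsilon}]\}$, which is the disjoint union of intervals $I_k = [k\widetilde{T} + \widetilde{\delta} - \widetilde{\epsilon}, k\widetilde{T} + \widetilde{\delta} + \widetilde{\epsilon}]$, each of length $2\widetilde{\epsilon}$, for $k \in \mathbb{Z}$. Since $\widetilde{\sigma}$ is uniform on $[A, 2A]$, the target probability equals $|E \cap [A,2A]|/A$, so it suffices to sandwich the Lebesgue measure $|E \cap [A,2A]|$ between $(A/\widetilde{T}-1)\cdot 2\widetilde{\epsilon}$ and $(A/\widetilde{T})\cdot 2\widetilde{\epsilon} + 4\widetilde{\epsilon}$ and then divide by $A$.

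To carry this out I would write $A = q\widetilde{T} + r$ with $q = \lfloor A/\widetilde{T}\rfloor$ and $0 \leq r < \widetilde{T}$, and split $[A,2A]$ into a ``bulk'' piece $J_1 = [A, A+q\widetilde{T}]$ of length exactly $q\widetilde{T}$ and a ``remainder'' piece $J_2 = [A+q\widetilde{T}, 2A]$ of length $r$. By $\widetilde{T}$-periodicity of $E$, $J_1$ contains exactly $q$ complete translates $I_k$ and no partial ones, so $|E \cap J_1| = 2\widetilde{\epsilon} q$. For the remainder piece, since $|J_2| = r < \widetilde{T}$, at most two of the intervals $I_k$ intersect $J_2$, giving the bound $0 \leq |E \cap J_2| \leq 2 \cdot 2\widetilde{\epsilon} = 4\widetilde{\epsilon}$. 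Combining, $2\widetilde{\epsilon}q \leq |E \cap [A,2A]| \leq 2\widetilde{\epsilon}q + 4\widetilde{\epsilon}$.

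Finally, using $q \in [A/\widetilde{T} - 1,\, A/\widetilde{T}]$, we obtain
\begin{equation*}
\frac{2\widetilde{\epsilon}}{\widetilde{T}} \cdot A - 2\widetilde{\epsilon} \;\leq\; |E \cap [A,2A]| \;\leq\; \frac{2\widetilde{\epsilon}}{\widetilde{T}} \cdot A + 4\widetilde{\epsilon},
\end{equation*}
and dividing through by $A$ yields exactly the claimed sandwich
\begin{equation*}
\frac{2\widetilde{\epsilon}}{\widetilde{T}} - \frac{2\widetilde{\epsilon}}{A} \;\leq\; \mathsf{Pr}\!\left[\widetilde{\sigma} \!\!\pmod{\widetilde{T}} \in [\widetilde{\delta}-\widetilde{\epsilon}, \widetilde{\delta}+\widetilde{\epsilon}]\right] \;\leq\; \frac{2\widetilde{\epsilon}}{\widetilde{T}} + \frac{4\widetilde{\epsilon}}{A}.
\end{equation*}
There is no real obstacle here; the only mild subtlety is bookkeeping at the boundary of $[A,2A]$ when counting how many translates $I_k$ meet the remainder interval $J_2$. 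The asymmetry between the $2\widetilde{\epsilon}/A$ and $4\widetilde{\epsilon}/A$ error terms comes precisely from the fact that $J_2$ may pick up two partial translates (contributing up to $4\widetilde{\epsilon}$) while the ``missing'' contribution from the last period in $J_1$ is at most one full translate (contributing $2\widetilde{\epsilon}$).
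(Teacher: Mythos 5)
Your proof is correct, and it takes a modestly different route from the paper's that is worth noting. The paper sums over integers $s$ in a range $I$, asserting that each $s$ contributes exactly $\frac{2\widetilde{\epsilon}}{A}$ to the probability, and then bounds $|I|$. That assertion is literally false for the one or two boundary values of $s$ whose interval $[s\widetilde{T}+\widetilde{\delta}-\widetilde{\epsilon},\,s\widetilde{T}+\widetilde{\delta}+\widetilde{\epsilon}]$ only partially overlaps $[A,2A]$, so the paper's lower bound in particular rests on an imprecise equality. Your bulk-plus-remainder decomposition of $[A,2A]$ sidesteps this: the piece $J_1$ of length exactly $q\widetilde{T}$ has measure $|E\cap J_1| = q\cdot 2\widetilde{\epsilon}$ by $\widetilde{T}$-periodicity alone (no per-$s$ bookkeeping, no boundary cases), and the remainder $J_2$ of length $<\widetilde{T}$ contributes between $0$ and $4\widetilde{\epsilon}$. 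This makes both sides of the sandwich clean and gives exactly the stated constants, whereas the paper's constants are somewhat hand-waved. One small phrasing issue on your side: $J_1$ need not ``contain exactly $q$ complete translates and no partial ones''---it may contain $q-1$ complete translates together with two partial ones at the ends---but the measure claim $|E\cap J_1|=2\widetilde{\epsilon}q$ is what periodicity actually gives you, and that is all your argument uses. Both your argument and the paper's implicitly assume $2\widetilde{\epsilon}\leq\widetilde{T}$ so that a single period contributes $2\widetilde{\epsilon}$ rather than the clipped value $\widetilde{T}$; the lemma's hypothesis $\widetilde{\epsilon}\leq\widetilde{T}$ does not quite force this, but that is a pre-existing wrinkle in the statement, not a gap you introduced.
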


\begin{proof}
Since we sample $\widetilde{\sigma}$ uniformly at random from $[A,2A]$,  let $I$ denote a set of candidate integers, then the smallest one is $\lfloor A \rfloor$ and the largest one is $\lceil 2A \rceil$. Thus, the original probability equation is equivalent to
\begin{equation}
\mathsf{Pr} \left[ \widetilde{\sigma}  \in [ s\cdot \widetilde{T} + \widetilde{\delta} - \widetilde{\epsilon},  s\cdot \widetilde{T}+ \widetilde{\delta} + \widetilde{\epsilon} ~] ~\exists s\in I\right],
\end{equation}
where $I = \{ \lfloor A  / \widetilde{T} \rfloor ,\cdots, \lceil 2A   / \widetilde{T} \rceil \}$.

Consider any $s\in I$, the probability of $\widetilde{\sigma} $ belonging to the interval $[s\cdot \widetilde{T} +\widetilde{\delta} -\widetilde{\epsilon}, s\cdot \widetilde{T} +\widetilde{\delta} +\widetilde{\epsilon}]$ is 
\begin{equation*}
\mathsf{Pr}\left[ \widetilde{\sigma}  \in [ s\cdot \widetilde{T} + \widetilde{\delta} - \widetilde{\epsilon},  s\cdot \widetilde{T}+ \widetilde{\delta} + \widetilde{\epsilon} ~] \right] = \frac{2\widetilde{\epsilon}}{A }.
\end{equation*}
Taking the summation over all $s\in I$, we obtain
\begin{eqnarray*}
& & \mathsf{Pr} \left[ \widetilde{\sigma}  \in [ s\cdot \widetilde{T} + \widetilde{\delta} - \widetilde{\epsilon},  s\cdot \widetilde{T}+ \widetilde{\delta} + \widetilde{\epsilon} ~] ~\exists s\in I\right] \\
& = & \sum_{s\in I} \mathsf{Pr}\left[ \widetilde{\sigma}  \in [ s\cdot \widetilde{T} + \widetilde{\delta} - \widetilde{\epsilon},  s\cdot \widetilde{T}+ \widetilde{\delta} + \widetilde{\epsilon} ~] \right] \\
& = & \sum_{s\in I} \frac{2\widetilde{\epsilon} }{A  } \\
& = & \frac{2\widetilde{\epsilon} |I| }{A  }.
\end{eqnarray*}
It remains to bound $\frac{2\widetilde{\epsilon} |I| }{A  }$.
Since $|I| = \lceil 2A  /\widetilde{T} \rceil - \lfloor A  /\widetilde{T} \rfloor +1$, then we have an upper bound for $I$,
\begin{equation*}
|I| \leq A  /\widetilde{T} +2.
\end{equation*}
On the other side, we have an lower bound,
\begin{equation*}
|I| \geq A  /\widetilde{T} -1.
\end{equation*}
Plugging upper bound of $I$ into $\frac{2\widetilde{\epsilon} |I|}{A }$, 
\begin{equation*}
\frac{2\widetilde{\epsilon} |I|}{A } \leq \frac{2\widetilde{\epsilon} }{A } (A /\widetilde{T} + 2) = \frac{2\widetilde{\epsilon} }{\widetilde{T}} + \frac{4\widetilde{\epsilon} }{ A  }.
\end{equation*}
Using the lower bound of $|I|$, we have 
\begin{equation*}
\frac{2\widetilde{\epsilon} |I|}{A } \geq \frac{2\widetilde{\epsilon} }{A }  (A  /\widetilde{T}-1) = \frac{2\widetilde{\epsilon} }{\widetilde{T}} - \frac{2\widetilde{\epsilon}}{A }.
\end{equation*}
Thus, we complete the proof.
\end{proof}
The following corollary will be used many times in this paper. The proof directly follows by Lemma \ref{lem:wrapping}.
\begin{corollary}\label{cor:wrapping}
For any $\widetilde{T}$, $\Delta f$, and $ 0 \leq \widetilde{\epsilon}, \widetilde{\delta} \leq \widetilde{T}  $, if we sample $\widetilde{\sigma}$ uniformly at random from $[A,2A]$, then
\begin{equation} 
\frac{2\widetilde{\epsilon} }{ \widetilde{T} } - \frac{2\widetilde{\epsilon} }{ A \Delta f } \leq \mathsf{Pr} \left[ \widetilde{\sigma} \Delta f {\pmod {\widetilde{T}} } \in [ \widetilde{\delta} - \widetilde{\epsilon}, \widetilde{\delta} + \widetilde{\epsilon} ~] \right] \leq \frac{2\widetilde{\epsilon} }{ \widetilde{T} } + \frac{4\widetilde{\epsilon} }{ A \Delta f }.
\end{equation}
\end{corollary}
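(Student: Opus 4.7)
The plan is to reduce Corollary \ref{cor:wrapping} directly to Lemma \ref{lem:wrapping} by an affine change of variable, with no new estimates required. Assuming without loss of generality that $\Delta f > 0$ (the event $\widetilde{\sigma}\Delta f \pmod{\widetilde{T}} \in [\widetilde{\delta}-\widetilde{\epsilon},\widetilde{\delta}+\widetilde{\epsilon}]$ is invariant under replacing $\Delta f$ by $|\Delta f|$ once the sign of the uniform interval is flipped), I would introduce the auxiliary random variable $\tau := \widetilde{\sigma}\Delta f$.

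Since $\widetilde{\sigma}$ is uniform on $[A,2A]$, the map $\widetilde{\sigma} \mapsto \widetilde{\sigma}\Delta f$ pushes the uniform measure forward to the uniform measure on $[A\Delta f,\ 2A\Delta f]$. Writing $A' := A\Delta f$, this is exactly the hypothesis of Lemma \ref{lem:wrapping} applied to the random variable $\tau$ with parameters $(\widetilde{T},\widetilde{\epsilon},\widetilde{\delta},A')$. The constraints $0 \leq \widetilde{\epsilon},\widetilde{\delta} \leq \widetilde{T}$ do not involve $A$, so they are inherited verbatim, and the lemma applies without modification.

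Invoking Lemma \ref{lem:wrapping} for $\tau$ gives
\[
\frac{2\widetilde{\epsilon}}{\widetilde{T}} - \frac{2\widetilde{\epsilon}}{A'} \ \leq\ \mathsf{Pr}\!\left[\tau \pmod{\widetilde{T}} \in [\widetilde{\delta}-\widetilde{\epsilon},\widetilde{\delta}+\widetilde{\epsilon}]\right] \ \leq\ \frac{2\widetilde{\epsilon}}{\widetilde{T}} + \frac{4\widetilde{\epsilon}}{A'}.
\]
Substituting $A' = A\Delta f$ and recognizing that $\tau = \widetilde{\sigma}\Delta f$ recovers exactly the left-hand event in the corollary, producing the desired bounds.

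There is essentially no obstacle here: the only content is the observation that a positive linear rescaling preserves uniformity and that the parameters $\widetilde{T},\widetilde{\epsilon},\widetilde{\delta}$ are untouched by the change of variable, so the entire proof is a one-line reduction. The only mild care needed is handling the sign of $\Delta f$, which is trivial because the mod-$\widetilde{T}$ event depends on $\widetilde{\sigma}\Delta f$ only up to translation and the uniform distribution on $[A,2A]$ gives a uniform distribution on either $[A\Delta f,2A\Delta f]$ or its reflection, both of which have length $A|\Delta f|$.
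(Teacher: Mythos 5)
Your proposal is correct and is essentially identical to the paper's own proof: both observe that $\widetilde{\sigma}\Delta f$ is uniform on $[A\Delta f, 2A\Delta f]$ and then invoke Lemma~\ref{lem:wrapping} with $A$ replaced by $A\Delta f$. Your extra remark about the sign of $\Delta f$ is a small point the paper leaves implicit, but it does not change the argument.
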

\begin{proof}
Since $\widetilde{\sigma}$ is sampled uniformly at random from $[A,2A]$, then $\widetilde{\sigma}\Delta f$ is sampled uniformly at random from $[A \Delta f,2A \Delta f]$. Now applying Lemma \ref{lem:wrapping} by only replacing $A\Delta f$ by $A$.
\end{proof}

\begin{claim}\label{cla:sample_sigma}
Let $\sigma$ be sampled uniformly at random from $[\frac{1}{B\eta},\frac{2}{B\eta}]$ and $ \underset{{i\neq j } }{\min} |f_i -f_j| > \eta$.  $\forall i,j \in [k]$, if $ i \neq j$,  then $\mathsf{Pr} [h_{\sigma,b} (f_i) = h_{\sigma,b}(f_j)] \lesssim \frac{1}{B} $.
\end{claim}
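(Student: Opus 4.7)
The plan is to translate the collision event into a ``small mod one'' condition on $\sigma(f_i - f_j)$, invoke Corollary~\ref{cor:wrapping} for most values of $|f_i - f_j|$, and handle the remaining boundary regime with a direct geometric count.

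First, I would observe that $b$ drops out of the collision condition: since $\pi_{\sigma,b}(f_i) - \pi_{\sigma,b}(f_j) \equiv 2\pi\sigma(f_i - f_j) \pmod{2\pi}$ has no dependence on $b$, and since two reals that round to the same integer must differ by less than one, the event $h_{\sigma,b}(f_i) = h_{\sigma,b}(f_j)$ forces the circle distance between $\pi_{\sigma,b}(f_i)$ and $\pi_{\sigma,b}(f_j)$ to be at most $2\pi/B$. Setting $\Delta = |f_i - f_j|$, this is equivalent to $\sigma\Delta \pmod 1 \in [-1/B, 1/B]$, an event depending only on $\sigma$.

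Next, I would apply Corollary~\ref{cor:wrapping} with $\widetilde T = 1$, $\widetilde\epsilon = 1/B$, $\widetilde\delta = 0$, $\Delta f = \Delta$, and $A = 1/(B\eta)$. This yields $\mathsf{Pr}[\text{collision}] \leq 2/B + 4\eta/\Delta$. In the regime $\Delta \geq 4B\eta$ the correction term is at most $1/B$, already giving $\mathsf{Pr} \lesssim 1/B$.

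The remaining case is $\eta < \Delta < 4B\eta$, where the corollary becomes loose because the range $\sigma\Delta \in [L, 2L]$ with $L = \Delta/(B\eta) \in (1/B, 4)$ does not span many wraparound cycles. Here I would argue directly. The ``bad'' set $\{x \in \mathbb{R} : x \pmod 1 \in [-1/B, 1/B]\}$ is a union of length-$2/B$ intervals centered at the integers, and only integers within distance $1/B$ of $[L, 2L]$ can contribute, of which there are $O(1)$ since $L = O(1)$. Crucially, when $L < 1/2 - 1/(2B)$ the range $[L-1/B, 2L+1/B]$ lies in $(0,1)$ and contains no integer, so the bad set is actually empty and the probability is zero. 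For $L \geq 1/2 - 1/(2B)$ the bad set intersects $[L, 2L]$ in total measure $O(1/B)$; dividing by the length $L = \Omega(1)$ of the range again gives $\mathsf{Pr} \lesssim 1/B$.

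The main obstacle is exactly this boundary regime $\Delta = \Theta(\eta)$, where pairs are at their minimum separation. The corollary's bound of $4\eta/\Delta$ there is $\Theta(1)$, not $1/B$, so a naive invocation fails; the saving observation is that in this regime $\sigma\Delta$ is confined to an interval of length less than one that is in fact bounded away from every integer, so the collision event is not merely rare but impossible.
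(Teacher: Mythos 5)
Your proof is correct and follows essentially the same route as the paper's: reduce collision to $\sigma\Delta \bmod 1$ landing within $1/B$ of an integer, observe that for $\Delta < \frac{(B-1)\eta}{2}$ (your $L < \frac{1}{2} - \frac{1}{2B}$) the scaled interval $[L, 2L]$ is bounded away from every integer so the event is impossible, and apply Corollary~\ref{cor:wrapping} for larger $\Delta$. Your extra split at $\Delta \geq 4B\eta$ with a direct geometric count for the intermediate regime is unnecessary---the corollary already yields $2/B + 4\eta/\Delta \lesssim 1/B$ for all $\Delta \gtrsim B\eta$, which is exactly how the paper handles it---but it is harmless.
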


\begin{proof}
To simplify the proof, define $\Delta f = |f_i - f_j|$.
We consider two cases: (\RN{1}) $\eta \leq |f_i -f_j| < \frac{(B-1)\eta}{2}$, (\RN{2}) $\frac{(B-1)\eta}{2} \leq |f_i - f_j|$.

(\RN{1})
If $\Delta f = \eta$, then $2\pi \sigma \Delta f$ is at least $\frac{2\pi}{\eta B} \cdot \eta = \frac{2\pi}{B}$, which means two frequencies have to go to different bins after hashing. If $\Delta f = \frac{(B-1)\eta}{2}$, then $2\pi \sigma \Delta f$ is at most $\frac{4\pi}{B\eta} \cdot \frac{(B-1)\eta}{2} = (1-1/B)2\pi$. In order to make two frequencies collide, $2\pi \sigma \Delta f$ should belong to $[(1-1/B)2\pi, (1+1/B)2\pi)$. Since for any  $ \Delta f \in [\eta, \frac{(B-1)\eta}{2} ) $, we have $2\pi \sigma \Delta f \in [\frac{1}{B}2\pi, (1-1/B)2\pi)$, which does not intersect interval $[(1-1/B)2\pi, (1+1/B)2\pi)$. Thus,
\begin{equation*}
\underset{\sigma,b}{\mathsf{Pr}} [h_{\sigma,b} (f_i) = h_{\sigma,b}(f_j)] =0.
\end{equation*}

(\RN{2}) 
We apply Corollary \ref{cor:wrapping} by setting $\widetilde{T} = 2\pi$, $\widetilde{\sigma} =2\pi \sigma$, $\widetilde{\delta} = 0$, $\widetilde{\epsilon} = \frac{2\pi}{2B}$, $A = 2\pi \frac{1}{B\eta}$. Then we have
\begin{eqnarray}\label{eq:prob_sigma_b_with_s}
& &\underset{\sigma,b}{\mathsf{Pr}} [h_{\sigma,b} (f_i) = h_{\sigma,b}(f_j)] =\underset{\sigma,b}{\mathsf{Pr}} \left[2\pi \sigma \Delta f \in [s\cdot 2\pi  -\frac{2\pi}{2B}, s\cdot 2\pi + \frac{2\pi}{2B}] ~\exists~s~\in~I \right],
\end{eqnarray}
where 
\begin{equation*}
I = \{  \lfloor  \frac{1}{B\eta} \Delta f \rfloor,\cdots, \lceil   \frac{2}{B\eta} \Delta f \rceil \}.
\end{equation*}
By upper bound of Corollary \ref{cor:wrapping}, Equation (\ref{eq:prob_sigma_b_with_s}) is at most
\begin{equation*}
\frac{1}{\frac{2\pi}{B\eta} \Delta f} \cdot \frac{2\pi}{B}  \cdot ( \frac{1 }{B\eta} \Delta f +2) =\frac{1}{B} + \frac{2\eta}{\Delta f} \leq \frac{1}{B} + \frac{4}{B-1} \lesssim \frac{1}{B},
\end{equation*}
where the first inequality follows by $\frac{(B-1)\eta}{2} \leq |f_i - f_j|$ which is the assumption of part (\RN{2}).

\end{proof}




\begin{claim}\label{cla:sample_b}
 $\forall f$, $\forall ~0<\alpha<1$, $\underset{\sigma,b}{\mathsf{Pr}} \left[ |o_{\sigma,b}(f)| \leq (1-\alpha) \frac{2\pi}{2B} \right]  \geq 1- O(\alpha)$.
\end{claim}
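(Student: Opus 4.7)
The plan is to condition on $\sigma$ and show that the remaining randomness in $b$ makes $\pi_{\sigma,b}(f)$ exactly equidistributed modulo the bin width $2\pi/B$. First I would unpack the definitions: by definition $o_{\sigma,b}(f)$ is the signed distance from $\pi_{\sigma,b}(f)$ to the nearest multiple of $2\pi/B$, so the ``bad'' event $|o_{\sigma,b}(f)| > (1-\alpha)\frac{2\pi}{2B}$ is precisely the event that $\pi_{\sigma,b}(f) \in [0,2\pi)$ lies within $\alpha\pi/B$ of some odd multiple of $\pi/B$ (a bin boundary). The union of these ``bad bands'' has total Lebesgue measure $2\pi\alpha$ inside $[0,2\pi)$; equivalently, reduced modulo the period $2\pi/B$, the bad set is a single interval of length $2\alpha\pi/B$, i.e.\ an $\alpha$ fraction of one period.

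Next, fix $\sigma$. By Definition~\ref{def:main_sample}, $b$ is then uniform on $[0,\lceil F/\eta\rceil/(B\sigma)]$, so $Y := 2\pi\sigma b$ is uniform on $[0, 2\pi N/B]$ where $N := \lceil F/\eta\rceil \in \Z$. The key observation is that the length of this interval is $2\pi N/B$, which is an exact integer multiple (namely $N$) of the period $2\pi/B$ — note that $N$, not $N/B$, is what needs to be an integer. Hence $Y \bmod (2\pi/B)$ is exactly uniform on $[0, 2\pi/B)$, with no boundary correction.

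Since $2\pi/B$ divides $2\pi$, we can drop the outer modulus in the definition of $\pi_{\sigma,b}(f)$ and write
\begin{equation*}
\pi_{\sigma,b}(f) \bmod (2\pi/B) \;=\; \bigl(2\pi\sigma f - Y\bigr) \bmod (2\pi/B),
\end{equation*}
which is a deterministic shift (on the circle of circumference $2\pi/B$) of $Y \bmod (2\pi/B)$ and is therefore itself uniform on $[0, 2\pi/B)$. The bad set inside $[0, 2\pi/B)$ has measure $2\alpha\pi/B$, so conditional on $\sigma$ the bad event has probability exactly $\alpha$; integrating over $\sigma$ yields $\Pr_{\sigma,b}[|o_{\sigma,b}(f)| > (1-\alpha)\frac{2\pi}{2B}] \leq \alpha$, which is the claimed $1 - O(\alpha)$ lower bound on the good event.

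I do not anticipate a real obstacle. The only subtle point is to notice that $\lceil F/\eta\rceil$ is an integer so the range of $Y$ is an exact integer number of periods of $2\pi/B$, eliminating any edge correction. If one preferred to avoid this observation, one could instead invoke Corollary~\ref{cor:wrapping} with $\widetilde T = 2\pi/B$, $\widetilde\sigma = Y$, $A = 2\pi/B$, and $\widetilde\epsilon = \alpha\pi/B$; this yields a bound of the form $\alpha + O(\alpha B/N) = O(\alpha)$ (using $B \lesssim N$), which still matches the claim.
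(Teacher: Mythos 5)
Your proof is correct but takes a genuinely different (and cleaner) route than the paper's. The paper fixes $b$, uses the randomness of $\sigma$ alone via Corollary~\ref{cor:wrapping} (with $\Delta f = |f-b|$) to obtain a lower bound of $(1-\alpha) - \frac{\eta}{|f-b|}$, and then separately argues over the randomness of $b$ that $|f-b| \gtrsim \eta/\alpha$ with high probability; this yields $1-O(\alpha)$ and is somewhat delicate, since $b$ is sampled conditionally on $\sigma$ so fixing $b$ and varying $\sigma$ is not a clean product decomposition. You instead condition on $\sigma$ and use the randomness of $b$ alone, observing that $Y = 2\pi\sigma b$ ranges over an interval of length $2\pi\lceil F/\eta\rceil/B$, an exact integer multiple of the bin width $2\pi/B$, so $o_{\sigma,b}(f)$ is \emph{exactly} uniform on $[-\pi/B,\pi/B)$ and the good event has probability exactly $1-\alpha$. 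This avoids any ``$|f-b|$ is typically large'' argument, avoids boundary-correction error terms, and sidesteps the $\sigma$--$b$ dependence entirely. One small caveat: your fallback invocation of Corollary~\ref{cor:wrapping} with $A = 2\pi/B$ does not literally apply, since $Y$ is uniform on $[0, 2\pi N/B]$ rather than on an interval of the form $[A,2A]$; but your direct equidistribution argument is complete and that fallback is unnecessary.
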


\begin{proof}
Since we draw $\sigma$ uniformly at random from $[\frac{1}{B\eta}, \frac{2}{B\eta}]$, then $2\pi \sigma (f-b) \in [\frac{2\pi}{B\eta}(f-b), \frac{4\pi}{B\eta}(f-b)]$ uniformly at random. The probability is equal to
\begin{equation*}
\underset{\sigma,b}{\mathsf{Pr}} \left[2\pi \sigma(f-b) \in [\cdot \frac{ 2\pi}{B} - (1-\alpha) \frac{2\pi}{2B}, s\cdot \frac{2\pi}{B} + (1-\alpha) \frac{2\pi}{2B}] ~\exists  s\in I \right],
\end{equation*}
where
\begin{equation*}
I = \{ \lfloor \frac{2\pi}{B\eta}(f-b) \frac{B}{2\pi} - \frac{1-\alpha}{2} \rfloor ,\cdots, \lceil \frac{4\pi}{B\eta}(f-b) \frac{B}{2\pi} + \frac{1-\alpha}{2} \rceil \}.
\end{equation*}
We apply Corollary \ref{cor:wrapping} by setting $\widetilde{T} = \frac{2\pi}{B}$, $\widetilde{\sigma} =2\pi \sigma$, $\widetilde{\delta} = 0$, $\widetilde{\epsilon} = (1-\alpha)\frac{2\pi}{2B}$, $A = 2\pi \frac{1}{B\eta}$, $\Delta f = |f-b|$.

By lower bound of Corollary \ref{cor:wrapping}, we have
\begin{equation*}
\mathsf{Pr} \left[ |o_{\sigma,b}(f)| \leq (1-\alpha) \frac{2\pi}{2B} \right] \geq (1-\alpha) - (1-\alpha) \cdot \frac{\eta}{|f-b|}.
\end{equation*}

Since $\alpha \cdot \frac{\eta}{|f-b|} >0$, then
\begin{equation*}
\mathsf{Pr} \left[ |o_{\sigma,b}(f)| \leq (1-\alpha) \frac{2\pi}{2B} \right] \geq (1-\alpha) -   \frac{\eta}{|f-b|}.
\end{equation*}


Recall that we sample $\sigma $ uniformly at random from $ [\frac{1}{B\eta},\frac{2}{B\eta}]$ and sample $b$ uniformly at random from $[0, \frac{\lceil F/\eta\rceil }{B\sigma}]$.
Since $f\in [-F,F]$ and $b$ is uniformly chosen from range $(0,\frac{\lceil F/\eta\rceil }{B\sigma}]$, thus 
for any $C>0$, $\underset{b}{\mathsf{Pr}} [ |f-b| \leq C \eta] \lesssim  \frac{C\eta}{F}$. Replacing $C$ by $1/\alpha$, we have $\mathsf{Pr} [ \frac{\eta}{|f-b|} \leq \alpha] \geq 1-\Theta(\frac{\eta}{\alpha F})$. Compared to $\frac{\eta}{F}$, $\alpha$ is just a constant. Thus, we finish the proof.



\end{proof}

\section{Proofs of basic hashing-related lemmas}
\restate{lem:two_close_signal}
\begin{proof}
Define $\nu=|f - f'|$.  First, let's show the first upper bound.  We have that
\begin{align*}
  \mathrm{LHS} &= \frac{1}{T} \int_0^T \left| v e^{2\pi f t\i} - v' e^{2\pi f' t\i } \right|^2 \mathrm{d} t\\
    &\leq 2 \cdot \frac{1}{T} \int_0^T \tabs{v e^{2\pi f t\i} - v e^{2\pi f' t\i }}^2 + \tabs{v e^{2\pi f' t\i} - v'e^{2\pi f' t\i }}^2 \mathrm{d} t\\
    &= 2\abs{v}^2 \cdot \frac{1}{T} \int_0^T \tabs{e^{2\pi \nu t\i} - 1}^2 \mathrm{d} t + 2\abs{v - v'}^2\\
    &\leq 2\abs{v}^2 \cdot \frac{1}{T} \int_0^T \min(2, 2\pi \nu t)^2 \mathrm{d} t + 2\abs{v - v'}^2\\
    &\leq 2 \abs{v}^2 \cdot \min(4, \frac{4\pi^2}{3} \nu^2 T^2) + 2\abs{v - v'}^2,
  \end{align*}
  as desired.

  Now consider the lower bound.  First we show this in the setting
  where $\abs{v} = \abs{v'}$.  Suppose $v' = v e^{-\theta \i}$.  Then we
  want to bound
  \begin{align*}
    \mathrm{LHS} &= \frac{1}{T} \int_0^T \left| v e^{2\pi f t\i} - v e^{(2\pi f' t - \theta)\i } \right|^2 \mathrm{d} t\\
    &= \abs{v}^2 \frac{1}{T} \int_0^T \left| e^{(2\pi \nu t + \theta)\i} - 1 \right|^2 \mathrm{d} t,
  \end{align*}
  as being at least $\Omega(\abs{v}^2 (\min(1, \nu^2 T^2) +
  \theta^2))$.  In the case that $\nu T < 1/10$, then
  \[
  \left| e^{(2\pi \nu t + \theta)\i} - 1 \right| \gtrsim \abs{2\pi\nu t + \theta},
  \]
  and
  \[
  \E_t[(2\pi\nu t + \theta)^2] \geq (\theta - \frac{2\pi \nu T}{2})^2 + \E[(2 \pi \nu (t - T/2))^2] \gtrsim \nu^2T^2 + (\theta - \frac{2\pi \nu T}{2})^2 \gtrsim \nu^2T^2 + \theta^2.
  \]
  On the other hand, if $\nu T > 1/10$, then $2\pi\nu t - \theta$ is
  $\Omega(1)$ for at least a constant fraction of the $t$, giving that
  \[
  \left| e^{(2\pi \nu t + \theta)\i} - 1 \right| \gtrsim 1.
  \]
  Hence the lower bound holds whenever $\abs{v} = \abs{v'}$.

  Finally, consider the lower bound for $\abs{v} \neq \abs{v'}$.
  Without loss of generality assume $\abs{v'} \geq \abs{v}$, and define
  $v^* = \frac{\abs{v}}{\abs{v'}}v''$.  For any two angles $\theta,
  \theta'$ we have that
  \[
  \abs{v e^{\theta \i} - v' e^{\theta' \i}}^2 \geq \abs{v e^{\theta \i} - v^* e^{\theta' \i}}^2 + \abs{v^* - v'}^2, 
  \]
  because the angle $\angle v v^* v'$ is obtuse.  Therefore
  \begin{align*}
    \mathrm{LHS} &\geq \frac{1}{T} \int_0^T \left| v e^{2\pi f t\i} - v^* e^{2\pi f' t\i } \right|^2 + \abs{v^* - v'}^2 \mathrm{d} t\\
    &\gtrsim  \abs{v}^2 \min(1, \nu^2 T^2) + \abs{v - v^*}^2 + \abs{v^* - v'}^2\\
    &\gtrsim \abs{v}^2 \min(1, \nu^2 T^2) + \abs{v - v'}^2.
  \end{align*}
  Now, if $\abs{v'}^2 \leq 2\abs{v}^2$, this gives the desired bound.
  But otherwise, it also gives the desired bound because $\abs{v -
    v'}^2 \gtrsim \abs{v'}^2$.  So we get the bound in all settings.

  The second equation follows from the first, using that $(a + b)^2
  \eqsim a^2 + b^2$ for positive $a, b$ to show
  \[
  \mathrm{dist}\left((v,f),(v',f')\right) \eqsim (|v| + |v'|)\cdot \min (1,T\abs{f - f'}) + |v - v'|.
  \]
  We can then replace $|v| + |v'|$ with $|v|$ because either they are
  equivalent up to constants or $|v - v'|$ is within a constant factor
  of $|v| + |v'|$.
\end{proof}

\restate{lem:expectation_lemma_when_x_star_is_zero}
\begin{proof}
Since $\widehat{u}_j = \text{FFT} (u_j)$, then $\sum_{j=1}^B |\widehat{u}_j|^2 = B\sum_{j=1}^B |u_j|^2$.
Recall that $(P_{\sigma,a,b}x)(t) =  x(\sigma(t-a))e^{2\pi \sigma b t\i }$, $u_j = \sum_{i=1}^{\log (k/\delta)} y_{j+B i}$ and $y_j = G_j \cdot (P_{\sigma,a,b} g)_j = G_j \cdot g(\sigma(j-a) )e^{2\pi\i \sigma b j}  $. Then

\begin{eqnarray*}
 \underset{\sigma,a,b}{\mathbb{E}}\sum_{j=1}^B |u_j|^2 &= & \underset{\sigma,a,b}{\mathbb{E}} \sum_{j=1}^B \left|\sum_{i=1}^{\log (k/\delta) } y_{j+ Bi} \right|^2 \\
&=& \underset{\sigma,a}{\mathbb{E} } \sum_{j=1}^B \underset{b}{\mathbb{E}} \left|\sum_{i=1}^{\log (k/\delta) } y_{j+ Bi} \right|^2 \\
&=& \underset{\sigma,a}{\mathbb{E} } \sum_{j=1}^B \underset{b}{\mathbb{E}} \left( \sum_{i=1}^{\log (k/\delta) } y_{j+ Bi} \right) \left( \sum_{i'=1}^{\log (k/\delta) } \overline{ y_{j+ Bi'} } \right) \\
&=& \underset{\sigma,a}{\mathbb{E} } \sum_{j=1}^B \underset{b}{\mathbb{E}} \left( \sum_{i=1}^{\log (k/\delta) } y_{j+ Bi} \overline{ y_{j+ Bi} }  + \sum_{i\neq i'}^{\log (k/\delta)} y_{j+Bi} \overline{ y_{j+ Bi'} }  \right). \\
\end{eqnarray*}
For any $(i,j) \in [\log (k/\delta)] \times [B]$, let $S_{i,j} = G_{j+ Bi}  g(\sigma(j+Bi-a)) = y_{j+Bi}e^{-2\pi \i \sigma b (j + Bi)}$, then
\begin{eqnarray*}
 \underset{\sigma,a,b}{\mathbb{E}}\sum_{j=1}^B |u_j|^2 &= &\underset{\sigma,a}{\mathbb{E} } \sum_{j=1}^B \underset{b}{\mathbb{E}} \left( \underbrace{ \sum_{i=1}^{\log (k/\delta)} |S_{i,j}|^2}_{C_1} + \underbrace{ \sum_{i\neq i'}^{\log (k/\delta)} S_{i,j} \overline{S_{i',j}} e^{2\pi\i \sigma b B (i-i')} }_{C_2} \right). \\
\end{eqnarray*}
Consider the expectation of $C_2$:
\begin{eqnarray}\label{eq:reason_for_sample_b}
\underset{b}{\mathbb{E}} C_2 & = & \underset{b}{\mathbb{E}}  \sum_{i\neq i'}^{\log (k/\delta) } S_{i,j} \overline{ S_{i',j}} e^{2\pi\i\sigma b B(i-i' )} \notag \\
 &  = &\sum_{i\neq i'}^{\log (k/\delta) } S_{i,j} \overline{ S_{i',j}} \underset{b}{\mathbb{E}} e^{2\pi\i\sigma b B(i-i' )} \notag \\
& = & 0 ~\quad \text{by Definition \ref{def:main_sample}}
\end{eqnarray}

Note that term $C_1$ is independent of $b$ which means $\underset{b}{\mathbb{E}} C_1 = C_1 $. Thus, we can remove the expectation over $b$. Then,
\begin{eqnarray}
 \underset{\sigma,a,b}{\mathbb{E}}\sum_{j=1}^B | u_j |^2 &=& \underset{\sigma,a}{\mathbb{E}} \sum_{j=1}^B \sum_{i=1}^{\log (k/\delta) } |G_{j + Bi}|^2 \cdot |g(\sigma(j+Bi-a))|^2 \notag \\
 & = & \underset{\sigma,a}{\mathbb{E}} \sum_{i=1}^{B\log (k/\delta) } |G_{i}|^2 \cdot |g(\sigma(i-a)) |^2\notag .
\end{eqnarray}

Now, the idea is to replace the expectation term $\mathbb{E}_{a}$ by
an integral term $\int_{a\in A} (\star) \mathrm{d} a$. Then, replace it by another integral term $\int_{0}^T (\star) \mathrm{d} t$. 
 Let
$A$ denote a set of intervals that we will sample $a$ from. It is easy
to verify that $|A| \lesssim T/\sigma$, since $(\sigma(i-a))$ is
sampled from $[0,T]$.  If we choose $T$ to be a constant factor larger
than $\sigma \abs{\supp(G)}$, then we also have $\abs{A} \gtrsim
T/\sigma$.

 \begin{eqnarray*}
\underset{\sigma}{\mathbb{E}} \underset{a}{\mathbb{E}}\sum_{i=1}^{B\log (k/\delta) } |G_{i}|^2 \cdot |g(\sigma(i-a))|^2  & = & \underset{\sigma}{\mathbb{E}} \frac{1}{|A|} \int_{a\in A} \sum_{i=1}^{B\log (k/\delta) } |G_i|^2 \cdot |g(\sigma(i-a))|^2 \mathrm{d} a \notag \\
  & = & \underset{\sigma}{\mathbb{E}}  \sum_{i=1}^{B\log (k/\delta) } |G_i|^2 \cdot \frac{1}{|A|} \int_{a\in A} |g(\sigma(i-a))|^2 \mathrm{d} a \notag \\
    & = & \underset{\sigma}{\mathbb{E}}  \sum_{i=1}^{B\log (k/\delta) } |G_i|^2 \cdot \frac{1}{\sigma|A|} \int_{a\in A} |g(\sigma(i-a))|^2 \mathrm{d} \sigma a \notag \\
    &\lesssim & \underset{\sigma}{\mathbb{E}} \| G \|_2^2 \cdot \frac{1}{T} \int_0^T |g(t)|^2 \mathrm{d}t .
 \end{eqnarray*}
By Claim \ref{cla:l2_of_discrete_G}, we know that $\| G\|_2^2 \eqsim \frac{1}{B}$. Combining $\sum_{j=1}^B |\widehat{u}_j|^2 = B\sum_{j=1}^B |u_j|^2$ and $\| G\|_2^2 \eqsim \frac{1}{B}$ gives the desired result. 
\end{proof}

\restate{lem:expectation_lemma_when_g_is_zero}
\begin{proof}
For simplicity, let $G =G_{B,\delta,\alpha}$ and $\widehat{G'} = \widehat{G'}_{B,\delta,\alpha}$. we have 
\begin{eqnarray*}
\widehat{y} &= &\widehat{G \cdot P_{\sigma,a,b} x} \\
& = &  \widehat{G} * \widehat{P_{\sigma,a,b}x } \\
& = & \widehat{G'}* \widehat{P_{\sigma,a,b} x} + (\widehat{G} - \widehat{G'}) *\widehat{P_{\sigma,a,b} x}.
\end{eqnarray*}

The $\ell_{\infty}$ norm of second term can be bounded :
\begin{equation*}
\| (\widehat{G} - \widehat{G'} ) * \widehat{P_{\sigma,a,b}x }\|_{\infty} \leq \| \widehat{G} - \widehat{G'}\|_{\infty} \| \widehat{P_{\sigma,a,b}x } \|_1 \leq \sqrt{\delta/k} \| \widehat{x^*} \|_1.
\end{equation*}

Thus, consider the $j$th term of $\widehat{u}$,
\begin{eqnarray*}
\widehat{u}_j & = & \widehat{y}_{jF/B}\\
& = & \sum_{|l|< F/(2B)} \widehat{G'}_{-l} (\widehat{P_{\sigma,a,b} x})_{jF/B +l} \pm \sqrt{\delta/k} \| \widehat{x^*} \|_1 \\
& = & \sum_{| \pi_{\sigma,b}(f) - jF/B | < F/(2B)} \widehat{G'}_{jF/B - \pi_{\sigma,b}(f)} \widehat{P_{\sigma,a,b} x}_{\pi_{\sigma,b}(f) } \pm \sqrt{\delta/k} \| \widehat{x^*} \|_1 \\
& = & \sum_{h_{\sigma,b}(f) = j} \widehat{G'}_{-o_{\sigma,b}(f)} \widehat{x^*}(f) e^{2\pi f \sigma a\i} \pm  \sqrt{ \delta/k} \| \widehat{x^*} \|_1.
\end{eqnarray*}

If neither $E_{ \mathit{coll}}(f)$ nor $E_{\mathit{off}}(f)$ happens, then we know that frequency $f$ is the only heavy hitter hashed into bin $j$ and $\widehat{G'}_{-o_{\sigma,b}(f) }=1$ for frequency $f$. Thus,

\begin{eqnarray*}
\underset{\sigma,a,b}{\mathbb{E}} [\left| \widehat{u}_{h_{\sigma,b}(f)} - \widehat{x^*}(f) e^{a\sigma 2\pi f\i}\right|^2] \leq \delta/k \| \widehat{x^*} \|_1^2.
\end{eqnarray*}
Since the above equation holds for all $f\in H$, we get
\begin{eqnarray*}
\sum_{f\in H} \underset{\sigma,a,b}{\mathbb{E}} [\left| \widehat{u}_{h_{\sigma,b}(f)} - \widehat{x^*}(f) e^{a\sigma 2\pi f\i}\right|^2] \leq k  \delta/k \| \widehat{x^*} \|_1^2 =\delta \| \widehat{x^*} \|_1^2.
\end{eqnarray*}
\end{proof}

\section{Proofs for one stage of recovery}
\paragraph{ Binary search of one-sparse algorithm}
We first explain a simple, clean, but not optimal one-sparse algorithm, then we try to optimize the algorithm step by step. Let ``heavy'' frequency $f\in [-F,F]$, we can split the frequency interval into two regions: left region $[-F,0)$ and right region $[0,F]$. We can observe $ \theta \equiv 2\pi \beta f {\pmod {2\pi} }$ by checking the phase difference between using $P_{1,a,b}$ and $P_{1,a+\beta,b}$, where  $\beta$ is uniformly at random sampled from some suitable range $[\widehat{\beta}, 2\widehat{\beta}]$. For each observation $\theta$, we can guess $m$ different possibilities for $f$, say $\theta_1,\theta_2, \cdots, \theta_m$, if $\theta_i$ belong to the left region, we add a vote to that, otherwise we add a vote to the right region.  After taking enough samples, we choose the region that has the largest vote. This decision will let us narrow down the searching range of the true frequency by half with some good probability. Suppose we decide to choose the right region $[0,F]$, then we can just repeat the above binary search over $[0,F]$ again to get into a region that has size $F/2$. Repeating it $D$ times, we can learn the frequency with a region that has size at most $2F/{2^D}$. But the binary search is not the best approach, actually, we can do much better with using $t$-ary search.

\paragraph{$k$-sparse}

To locate those $k$ heavy signals in the frequency domain, we need to consider the ``bins'' computed by $\mathsf{HashToBins}$ with $P_{\sigma,a,b}$. One of the main difference from previous work \cite{HIKP} is, instead of permuting the discrete coordinates according to $P_{\sigma,a,b}$ and partitioning the coordinates into $B=O(k)$ bins, we permute the continuous frequency domain and partition the frequency domain into $B=O(k)$ bins. With a large constant probability, we can obtain for each heavy signal $f$ that neither $E_{\mathit{coll}}$ nor $E_{\mathit{off}}$ happens. After splitting those $k$ frequencies into different bins, then we can run the one-sparse algorithm for all the bins simultaneously.

\paragraph{$t$-ary search}
To argue the final succeed probability of our algorithm, we need to take the union bound for each array/region in each round and also take the union bound over each round. There are several benefits of changing binary search to $t$-ary search, (1) to reach the same accuracy, the number of rounds $D$ for $t$-ary search is smaller than the number of rounds for binary search; (2) our searching procedure is a ``noisy'' searching problem, for the noiseless version of the searching problem, we do not need to take care of the union bound argument. Having a parameter for the number of arrays/regions is important to optimize the entire procedure.

Recall the traditional binary search problem(noiseless version), given a list of sorted numbers $a[1,2\cdots, n]$ in increasing order. We want to determine if some number $x$ belongs to $a[1,2,\cdots,n]$. When we compare some $a[i]$ with $x$, we will know the true answer.

But the noisy binary search problem is slightly harder. $a[1,2,\cdots,n]$ is still a list of sorted numbers in increasing order, we want to determine if some number $x$ belongs to $a[1,2,\cdots,n]$. But when we compare the $a[i]$ with $x$, we will know the true answer with $9/10$ probability, and get the false answer with $1/10$ probability. In this case, we can not finish the task by following the procedure of traditional binary search algorithm, e.g. making the decision by just comparing $a[i]$ with $x$ once. The reason is after taking the union bound of $\log n$ rounds, the failure probability can be arbitrarily large. One idea to fix this issue is independently comparing $a[i]$ with $x$ multiple times, e.g. $\log \log n$ times. Then we can amplify the succeed probability of each round from $9/10$ to $1-\frac{1}{\poly(\log n)}$, after taking the union bound over $\log n$ rounds, we still have $1-\frac{1}{\poly(\log n)}$ succeed probability.

Our problem is still more complicated than the above noisy binary search problem. In each round of algorithm $\mathsf{LocateInner}$, we do a $t$-array search by splitting the candidate frequency region(that has length $\Delta l$) into $t$ consecutive regions, $Q_1, Q_2, \cdots, Q_t$, each of them has the equal size $\frac{\Delta l}{t}$. By using the hash values of $P_{\sigma,a,b}$ and $P_{\sigma,a+\beta,b}$ (Line 26-27 in Algorithm \ref{alg:noisy_k_sparse2}), we can have an observation over $[0,2\pi)$. For each such observation over $[0,2\pi)$, it was in fact scaled by $ 2\pi \sigma \beta$ and rounded over $[0,2\pi)$. Thus the corresponding frequency location of this observation might belong to $m = \Theta (\sigma \beta \Delta l)$ different possible regions. Since we do not know which one, we just add a vote to all of the possible regions.  To understand $t$-ary search, let's consider this example. Suppose we split the frequency into $20$ regions, the true frequency belongs to region $9$ and each observation is correct with probability $4/5$. For each observation, we will add a vote to a batch of roughly evenly spaced regions.
\begin{enumerate}
\item For the observation 1, we add a vote to region $1,5,9,13,17$.
\item For the observation 2, we add a vote to region $3,9,15$.
\item For the observation 3, we add a vote to region $2,9,16$.
\item For the observation 4, we add a vote to region $4,9,14,19$.
\item For the observation 5, we add a vote to region $2,6,10,14,20$.
\end{enumerate}
where the first four observations are the correct observations and the last one is wrong. Then the true region will have more than half of the $R_{loc}=5$ votes with some good probability.

\paragraph{Details of adding vote}
In previous description, to let people have a better understanding of $t$-ary search, we simplify the step of adding vote. In fact, adding a vote to each of the $m$ candidate region is not enough. The right thing to do is,  not only adding a vote to those $m$ regions, but also adding a vote to a constant number $c_n$ of neighbors of each possible region (e.g. two neighbors nearby that region, line 33 in Algorithm \ref{alg:noisy_k_sparse2}). The reason for doing this is, if the frequency $f$ is located very close to the boundary of two regions, then neither of them will get more than $R_{loc}/2$ votes. After we already have $R_{loc}$ independent observations, we just choose any region that contains more than $R_{loc}/2$ votes and enlarge it by the same constant factor $c_n$ to be the next candidate frequency region for $t$-array search, and plugging the new parameters into algorithm $\mathsf{LocateInner}$ and running it again.

\paragraph{The slightly different last round}
Recall that, in the previous description of each round of $t$-ary search, we're able to decrease the searching range of frequency geometrically. To achieve this progress, we have to increase the sample duration of  $\beta$ is geometrically. At some point, the sample duration will reach $T$. Suppose the sample duration of $\beta$ is geometrically increasing during the first $D-1$ rounds (Line 13-15 in Algorithm \ref{alg:noisy_k_sparse2}) and it becomes $T$ after $D-1$ rounds. If we want to use the same duration $T$ to perform one more round again, can we get some benefit for learning the frequency by doing some tricks? (1) Suppose in all the first $D-1$ rounds, we use $R_{loc} (D-1)$ samples. Then we can use $R_{loc} (D-1)$ for the last round, since it does not increase the sample complexity. This way allows us to learn frequency within $\frac{1}{C T}$. But can we do better than that? (2) Using more samples is a nice observation, another right thing to do is changing the algorithm from reporting region to reporting frequency. In the previous $D-1$ rounds, as the description of $t$-ary search, we just report the region that has more than $R_{loc}/2$ votes. But, we can take the median over all the values that were assigned to any region that has votes more than $R_{loc}/2$. This way can actually allow us to learn frequency location more accurately ( $\eqsim \frac{1}{\rho T}$) without increasing the resolution for $\beta$! Another question people might ask is, if we take median in the last round, why not just do it in every round? The answer is, in the first $D-1$ rounds, our goal is just narrowing down the search range of frequency location and we do not need to report a frequency location. Another reason is reporting a candidate region  needs less samples and has higher succeed probability, but taking the median is more expensive than reporting a candidate region. We cannot pay for taking the median in every round.

To prove main Lemma \ref{lem:correctness_of_locateinner} for one stage recovery, we introduce Lemma \ref{lem:good_vote}. Before explaining the proof, we give some definitions. 
Consider a frequency $f$, and define $j = h_{\sigma,b}(f)$ to be the bin that frequency $f$ was hashed into. Define $\theta = f - b \pmod F$. Define 
\begin{equation*}
\widehat{u} = \mathsf{HashToBins}(x,P_{\sigma,\gamma,b},B,\delta,\alpha) \quad \mathrm{and} \quad \widehat{u}'  = \mathsf{HashToBins}(x,P_{\sigma,\gamma +\beta,b},B,\delta,\alpha).
\end{equation*}
Note that ``bin'' and ``region'' are representing different things in this paper. ``bin'' is related to hash function $h_{\sigma,b}(f)$. ``region'' is only used in the algorithm of one stage recovery in this section. 
Define ``true'' region to be the region that contains frequency $f$. Define ``wrong'' region to be the region that is not within a constant number $c_n$ of neighbors of the ``true'' region.
Part (\RN{1}) of Lemma \ref{lem:good_vote} shows that for each observation generated by a batch of samples drawn from time domain, the counter $v_{j,q'}$ corresponding to the true region will increase by one, with some ``good'' probability. On the other side, Part (\RN{2}) of Lemma \ref{lem:good_vote} shows that the counter $v_{j,q}$ corresponding to wrong region will not increase by one, with some ``good'' probability. Note that, \cite{HIKP} proved the case when $c_n=6$ under discrete setting, we translate it into continuous setting, here.

\begin{lemma}\label{lem:good_vote}
Given $\sigma$ and $b$. Assume $f\in \mathrm{region}(j,q')$ and $\underset{\gamma}{\mathbb{E}} [ \left| \widehat{u}_j - e^{2\pi \gamma\sigma \theta \i }\widehat{x}(\theta) \right|^2 ]  \leq \frac{1}{\rho^2}  |\widehat{x}(\theta)|^2 $. $\forall 0<s<1$, for each two samples $(\gamma,0)$ and $(\gamma,\beta)$, where $\gamma$ is sampled from $[\frac{1}{2},1]$ uniformly at random and $\beta$ is sampled from $[\frac{st}{4\sigma \Delta l}, \frac{st}{2\sigma\Delta l}]$ uniformly at random, we have

$($\RN{1}$)$ for the $q'$, with probability at least $1-(\frac{2}{\rho s})^2$, $v_{j,q'}$ will increase by one.

$($\RN{2}$)$ for any $q$ such that $|q-q'|>3$, with probability at least $1-15s$, $v_{j,q}$ will not increase.
\end{lemma}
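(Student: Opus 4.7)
The plan is to prove the two parts separately using distinct techniques: Markov's inequality on the hashing error for Part~(\RN{1}), and the wrapping estimate from Corollary~\ref{cor:wrapping} applied over the random scale $\beta$ for Part~(\RN{2}). The measured quantity driving the vote is the phase $\phi(\widehat{u}'_j\overline{\widehat{u}_j})$, which in the noiseless case equals $2\pi\sigma\beta\theta\pmod{2\pi}$ and thus pins down $\theta$ modulo $1/(\sigma\beta)$; the $m=\Theta(\sigma\beta\Delta l)$ candidates on $[l_j-\Delta l/2,l_j+\Delta l/2]$ are spaced by $1/(\sigma\beta)$, and each region of the $t$-ary partition has width $\Delta l / t$.

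For Part~(\RN{1}), I would first apply Markov's inequality to the hypothesis $\mathbb{E}_\gamma[|\widehat{u}_j - e^{2\pi\gamma\sigma\theta\i}\widehat{x}(\theta)|^2]\le|\widehat{x}(\theta)|^2/\rho^2$ to conclude that with probability at least $1-(2/(\rho s))^2$ over $\gamma$, we have $|\widehat{u}_j - e^{2\pi\gamma\sigma\theta\i}\widehat{x}(\theta)|\le (s/2)|\widehat{x}(\theta)|$, and simultaneously the same bound for $\widehat{u}'_j$ against $e^{2\pi(\gamma+\beta)\sigma\theta\i}\widehat{x}(\theta)$ (the same $\gamma$-randomness drives both, and the bound on the aggregate squared error follows from a single application of Markov's after observing that the marginal distribution of $\gamma+\beta$ only shifts the interval). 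Using $|e^{\i\alpha}-1|\gtrsim|\alpha|$ for small $\alpha$, this translates to $|\phi(\widehat{u}'_j\overline{\widehat{u}_j}) - 2\pi\sigma\beta\theta|\le O(s)\pmod{2\pi}$. Because $\sigma\beta\cdot(\Delta l/t)=\Theta(s)$, the true region contributes a sub-arc of width $\Omega(s)$ in $[0,2\pi)$, so an $O(s)$ phase error keeps $q'$ (together with its $c_n$ neighbors) inside the candidate set, and the counter $v_{j,q'}$ is incremented.

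For Part~(\RN{2}), fix any wrong region $q$ with $|q-q'|>3$ and consider the randomness over $\beta\in[\tfrac{st}{4\sigma\Delta l},\tfrac{st}{2\sigma\Delta l}]$. Region $q$ receives a vote iff some wrapped candidate $\theta_c+k/(\sigma\beta)$ lands within $(c_n+\tfrac12)\Delta l/t$ of the center of region $q$; equivalently, $\sigma\beta\cdot(\theta-\theta_q)\pmod 1$ lies in an interval of length $O(s)$ around $0$, where $\theta_q$ is the center of $q$. Since $|q-q'|>3$ gives $|\theta-\theta_q|\gtrsim \Delta l/t$, I can apply Corollary~\ref{cor:wrapping} with $\widetilde T=1$, $A=\tfrac{st}{4\sigma\Delta l}\cdot\sigma=\tfrac{st}{4\Delta l}$, $\Delta f = |\theta-\theta_q|$, and $\widetilde\epsilon=O(s)$: the dominant term $2\widetilde\epsilon/\widetilde T$ gives $O(s)$, while the correction $4\widetilde\epsilon/(A\Delta f)$ is also $O(s)$ because $A\Delta f=\Omega(st/\Delta l)\cdot\Omega(\Delta l/t)=\Omega(s)$ and $\widetilde\epsilon/A\Delta f = O(s)$ after bookkeeping. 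Tracking the implicit constants through the neighbor tolerance $c_n$ yields the stated $15s$.

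The main obstacle is Part~(\RN{2}): we must carefully track how random $\beta$ induces the wrapping and ensure that, even after conditioning on the phase error bound from Part~(\RN{1}), the probability of hitting a wrong region is still $O(s)$. In particular, the noise on the phase must be absorbed into $\widetilde\epsilon$ without blowing up the constant, and we need to check that the correction term $4\widetilde\epsilon/(A\Delta f)$ in Corollary~\ref{cor:wrapping} is no larger than the main term, which hinges on the separation bound $|q-q'|>3$. A secondary subtlety in Part~(\RN{1}) is that $\gamma+\beta$ is not identically distributed to $\gamma$, so the single-factor $(2/(\rho s))^2$ failure probability requires using the same $\gamma$ to drive both errors rather than a naive union bound.
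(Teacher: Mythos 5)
Your overall plan mirrors the paper's proof: Part~(\RN{1}) from a Chebyshev/Markov tail bound on the hashing error translated into a phase error, and Part~(\RN{2}) from the wrapping estimate applied to the random $\beta$. However, there is a genuine gap in your Part~(\RN{2}).

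Applying Corollary~\ref{cor:wrapping} directly with $\Delta f = |\theta-\theta_q| \gtrsim \Delta l/t$ does not give an $O(s)$ bound on the correction term. With $A\,\Delta f = \Omega(st/\Delta l)\cdot \Omega(\Delta l/t) = \Omega(s)$ and $\widetilde\epsilon = O(s)$, one has $4\widetilde\epsilon/(A\Delta f) = O(s)/\Omega(s) = O(1)$, not $O(s)$; your claim that this is $O(s)$ ``after bookkeeping'' does not go through. The paper resolves this by splitting on the size of $|f - m_{j,q}|$. In the small-separation regime $|f - m_{j,q}| \in (\tfrac{7\Delta l}{2t}, \tfrac{\Delta l}{st}]$ the wrapping lemma is useless, but there one observes deterministically that $2\pi\beta\sigma|f-m_{j,q}| \in (\tfrac{3s\pi}{2}, \pi]$ for every admissible $\beta$, so the scaled quantity can never wrap into $[-\tfrac{3s}{4}2\pi, \tfrac{3s}{4}2\pi] \bmod 2\pi$ and the vote probability is exactly zero. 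Only when $|f - m_{j,q}| > \tfrac{\Delta l}{st}$ does $A\Delta f \gtrsim 1$, making the correction term $O(s)$ and letting Corollary~\ref{cor:wrapping} deliver the $15s$ bound. Without this two-case analysis your argument fails precisely in the regime where $q$ is near but not adjacent to $q'$.

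A minor point on Part~(\RN{1}): your worry about the union bound is misplaced. The paper applies the tail bound separately to $\widehat u_j$ (against $e^{2\pi\gamma\sigma\theta\i}$) and to $\widehat u'_j$ (against $e^{2\pi(\gamma+\beta)\sigma\theta\i}$), union-bounds to get failure probability $2g$, and then chooses $g=(\tfrac{4}{s\pi\rho})^2$ so that $2g \le (\tfrac{2}{\rho s})^2$ (using $\pi^2 > 8$). There is no need to avoid the union bound or to reuse a single Markov application.
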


\begin{proof} Part (\RN{1}) of Lemma \ref{lem:good_vote}.
We have,
\begin{equation*}
\underset{\gamma}{\mathbb{E}} [ \left| \widehat{u}_j - e^{2\pi \gamma\sigma \theta \i }\widehat{x}(\theta) \right|^2 ]  \leq \frac{1}{\rho^2}  |\widehat{x}(\theta)|^2.
\end{equation*}
By Chebyshev's Inequality, we have $\forall ~g>0$, with probability $1-g$ we have
\begin{equation*}
\begin{split}
\left| \widehat{u}_j - e^{2\pi\gamma\sigma \theta\i } \widehat{x}(\theta)\right| & \leq \frac{1}{\rho} \sqrt{\frac{1}{g}} \left| \widehat{x}(\theta) \right| \\
\| \phi(\widehat{u}_j) -  ( \phi(\widehat{x}(\theta)) - 2\pi\gamma\sigma\theta) \|_\bigcirc &\leq \sin^{-1} (\frac{1}{\rho} \sqrt{\frac{1}{g}}), \\
\end{split}
\end{equation*}
where $\| x-y\|_\bigcirc = \underset{z \in \mathbb{Z} }{\min} |x- y + 2\pi z| $ denote the ``circular distance'' between $x$ and $y$. 
Similarly, replacing $\gamma$ by $\gamma+\beta$, with probability $1-g$, we also have that
\begin{equation*}
\| \phi(\widehat{u}'_j) - (\phi(\widehat{x}(\theta)) - 2\pi(\gamma+\beta )\sigma \theta ) \|_\bigcirc \leq \sin^{-1} (\frac{1}{\rho} \sqrt{\frac{1}{ g}}).
\end{equation*}
Define $c_j = \phi(\widehat{u}_j / \widehat{u}'_j )$. Combining the above two results, with probability $1-2g$ we have 
\begin{eqnarray}\label{eq:cj_within_rho}
\| c_j - 2\pi \beta \sigma\theta \|_\bigcirc & =& \| \phi(\widehat{u}_j) - \phi(\widehat{u}'_j) - 2\pi \beta \sigma\theta \|_\bigcirc\notag \\
&= & \| (\phi(\widehat{u}_j) -  ( \phi(\widehat{x}(\theta)) - 2\pi \gamma \sigma\theta )) )- (\phi(\widehat{u}'_j) - (\phi(\widehat{x}(\theta)) - 2\pi (\gamma+\beta)\sigma\theta ))) \|_\bigcirc \notag \\
&\leq & \| \phi(\widehat{u}_j) -  ( \phi(\widehat{x}(\theta)) - 2\pi \gamma \sigma\theta  ) \|_\bigcirc+ \| \phi(\widehat{u}'_j) - (\phi(\widehat{x}(\theta)) - 2\pi (\gamma + \beta )\sigma\theta ) \|_\bigcirc\notag \\
&\leq & 2  \sin^{-1} (\frac{1}{\rho} \sqrt{\frac{1}{ g} }) \notag.
\end{eqnarray} 
Here, we want to set $g = (\frac{4}{s \pi \rho})^2 $, thus, with probability at least $1- ( \frac{2}{s\rho} )^2$
\begin{equation}\label{eq:cj_is_close_to_2pibetasigmatheta}
\| c_j - 2\pi \beta \sigma\theta \|_\bigcirc < s\pi/2.
\end{equation}
The above equation shows that $c_j$ is a good estimate for $2\pi\beta \sigma\theta$ with good probability. We will now show that this means the true region $Q_{q'}$ gets a vote with large probability.

For each $q'$ with $f \in [l_j- \frac{\Delta l}{2} + \frac{q'-1}{t}\Delta l, l_j -\frac{\Delta l}{2} + \frac{q'}{t} \Delta l] \subset [-F,F]$, we have that $m_{j,q'}= l_j -\frac{\Delta l}{2} + \frac{q'-0.5}{t} \Delta l$ and $\theta_{j,q'} = m_{j,q'} - b \pmod F$ satisfies
\begin{equation*}
|f - m_{j,q'}| \leq \frac{\Delta l}{2t} ~ \mathrm{and} ~|\theta - \theta_{j,q'}| \leq \frac{\Delta l}{2t}.
\end{equation*} 

Since we sample $\beta$ uniformly at random from $[\frac{st}{4\sigma \Delta l}, \frac{st}{2\sigma \Delta l}]$, then $\beta \leq \frac{s t }{2 \sigma \Delta l}$, which implies that $2\pi \beta \sigma \frac{\Delta l}{2t} \leq \frac{s\pi}{2}$. Thus, we can show the observation $c_j$ is close to the true region in the following sense,
\begin{equation*}
\begin{split}
& \| c_j - 2\pi \beta \sigma\theta_{j,q'} \|_\bigcirc \\
\leq & \| c_j - 2\pi \beta \sigma\theta \|_\bigcirc + \| 2\pi \beta\sigma\theta - 2\pi \beta \sigma\theta_{j,q'} \|_\bigcirc \text{~by~triangle~inequality}\\
<& \frac{s\pi}{2} +  2\pi \| \beta\sigma\theta - \beta \sigma\theta_{j,q'} \|_\bigcirc \text{~by~Equation~(\ref{eq:cj_is_close_to_2pibetasigmatheta})}\\
\leq &\frac{s\pi}{2} + 2\pi \beta\sigma \frac{\Delta l}{2t}\\
= & \frac{s\pi}{2} + \frac{s\pi}{2}\\
\leq & s\pi .
\end{split}
\end{equation*}

Thus, $v_{j,q'}$ will increase in each round with probability at least $1-(\frac{2}{\rho s})^2$.
\end{proof}


\begin{proof} Part (\RN{2}) of Lemma \ref{lem:good_vote}.

Consider $q$ with $|q-q'|> 3$. Then $|f - m_{j,q}| \geq \frac{7\Delta l}{2t}$, and (assuming $\beta \geq \frac{s t }{4\sigma \Delta l}$) we have
\begin{equation}\label{eq:lower_bound_beta_xi_k}
2\pi \beta \sigma |f - m_{j,q}| \geq 2\pi \frac{s t }{ 4\sigma \Delta l} \sigma |f - m_{j,q}| =  \frac{s\pi t }{ 2 \Delta l}  |f - m_{j,q}| \geq \frac{7 s\pi}{4} > \frac{3s\pi}{2}.
\end{equation}
There are two cases: $|f - m_{j,q}| \leq \frac{\Delta l}{s t}$ and $|f - m_{j,q}| > \frac{\Delta l}{st}$.

First, if $|f - m_{j,q}| \leq \frac{\Delta l}{s t}$.
In this case, from the definition of $\beta$ it follows that
\begin{equation}\label{eq:upper_bound_beta_xi_k}
2\pi \beta \sigma |f - m_{j,q}| \leq \frac{s\pi t}{\sigma \Delta l} \sigma|f - m_{j,q}| \leq \pi.
\end{equation}

Combining equations (\ref{eq:lower_bound_beta_xi_k}) and (\ref{eq:upper_bound_beta_xi_k}) implies that
\begin{equation*}
\mathsf{Pr} [2\pi \beta \sigma (f - m_{j,q}) \mod 2\pi \in [-\frac{3s}{4}2\pi,\frac{3s}{4}2\pi ] ] =0.
\end{equation*}

Second, if $|f - m_{j,q}| > \frac{\Delta l}{st}$. We show this claim is true:
$ \mathsf{Pr} [2\pi \beta \sigma (f -m_{j,q}) \pmod {2\pi} \in [-\frac{3s}{4} 2\pi, \frac{3s}{4} 2\pi] ]  \lesssim s$.
To prove it, we apply Corollary \ref{cor:wrapping} by setting $\widetilde{T} = 2\pi$, $\widetilde{\sigma} = 2\pi\sigma \beta$, $\widetilde{\delta} = 0$, $\widetilde{\epsilon} = \frac{3s}{4}2\pi$, $A= 2\pi \sigma \widehat{\beta}$, $\Delta f= |f - m_{j,q}|$. By upper bound of Corollary \ref{cor:wrapping}, the probability is at most
\begin{equation*}
\frac{2\widetilde{\epsilon} }{\widetilde{T} } + \frac{4\widetilde{\epsilon} }{A \Delta f} = \frac{3s}{2} + \frac{ 3s}{\sigma \widehat{\beta} \Delta f} \leq \frac{3s}{2} + \frac{3s}{\sigma \frac{st}{4\sigma \Delta l} \frac{\Delta l}{st}} < 15 s.\\
\end{equation*}


Then in either case, with probability at least $1-15s$, we have
\begin{equation*}
\| 2\pi \beta \sigma m_{j,q} - 2\pi \beta \sigma f \|_\bigcirc > \frac{3s}{4} 2\pi.
\end{equation*}
which implies that $v_{j,q}$ will not increase.
\end{proof}

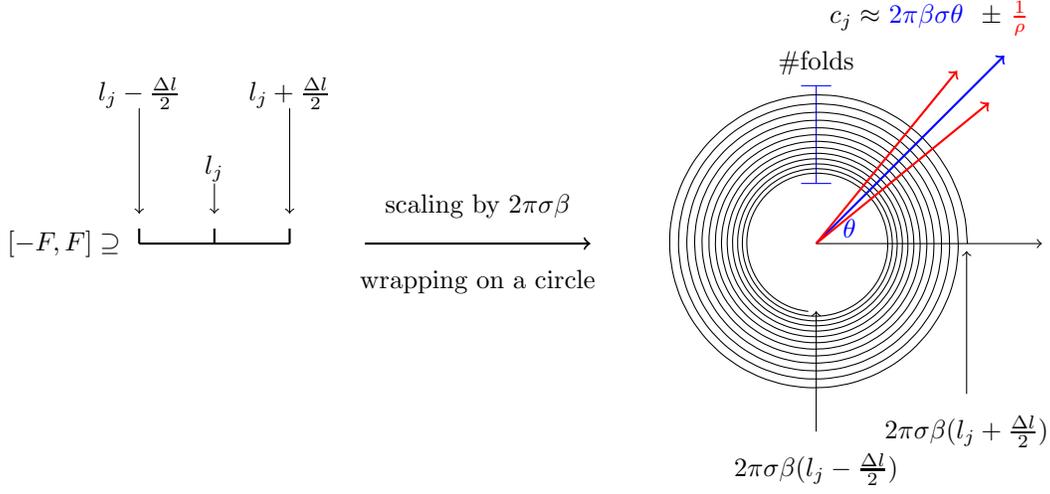
\begin{figure}
\centering{\small
\begin{tikzpicture}[scale = 1.0]
\def\PI{3.1415926}

	\node at (-10,0) {$[-F,F] \supseteq$};

	\draw [->] (0,-2.5) -- (0,-0.9);
	\draw [->] (2,-2) -- (2,-0.1);
	\node at (0,-3) {$2\pi\sigma\beta(l_j-\frac{\Delta l}{2})$};
	\node at (2,-2.5) {$2\pi\sigma\beta(l_j+\frac{\Delta l}{2})$};
	
	\draw [thick] (-9,0) -- (-7,0);
	\node at (-9,2) {$l_j - \frac{\Delta l}{2}$};
	\draw [->](-9,1.8) -- (-9, 0.4);

	\node at (-7,2) {$l_j + \frac{\Delta l}{2}$};
	\draw [->](-7,1.8) -- (-7, 0.4);

	\node at (-8,1) {$l_j$};
	\draw [->] (-8,0.8) -- (-8,0.4);
	
	\draw [thick] (-9,0) -- (-9,0.2);
	\draw [thick] (-8,0) -- (-8,0.2);
	\draw [thick] (-7,0) -- (-7,0.2);

	\node  at (-4.5,0.5)  {scaling by $2\pi \sigma \beta$};
	\node at (-4.5,-0.5) {wrapping on a circle};
	\draw[thick,->] (-6,0) -- (-3,0);

    	\node at (0,2.4) {$\#$folds};
    	\draw [blue] (-0.2,0.8) -- (0.2,0.8);
    	\draw [blue] (-0.2,2.1) -- (0.2,2.1);
    	\draw [blue] (0,0.8) --(0,2.1);
    	\draw [domain=0:80,variable=\t,smooth,samples=500]
        plot ({\t r}: {0.01+2*exp(-0.01*\t)});

        \draw [->,thick,blue] (0,0) -- (2.5,2.5) ;
        \draw [->] (0,0) -- (3,0) ;
        \draw [->,thick,red] (0,0) -- ( { 3*0.76604 }, { 3*0.62478 } );
        \draw [->,thick,red] (0,0) -- ( { 3*0.62478 }, { 3*0.76604 } );
        \node at (1.5,3) {$c_j \approx$ {\color{blue} $2\pi\beta \sigma \theta$ } $\pm$ {\color{red} $\frac{1}{\rho}$} };
        \node at (0.5,0.2) {{\color{blue} $\theta$ } };
\end{tikzpicture}
\caption{For an arbitrary frequency interval $[ l_j - \frac{\Delta l}{2}, l_j+ \frac{\Delta l}{2}]$, we scale it by $2\pi\sigma\beta$ to get a longer interval $[2\pi\sigma\beta (l_j +\frac{\Delta l}{2}) , 2\pi\sigma\beta (l_j - \frac{\Delta l}{2})]$. Then, we wrap the longer interval on a circle $[0,2\pi)$. The number of folds after wrapping is $\lceil \sigma\beta \Delta l\rceil$. For any random sample, the observation $c_j$ is close to the true answer within $1/\rho$ with some ``good'' probability.  }
}
\end{figure}
\restate{lem:correctness_of_locateinner}
\begin{proof}

Let $t$ denote the number of regions, and $[l_j- \frac{\Delta l}{2}, l_j +\frac{\Delta l}{2}]$ be the interval that contains frequency $f$. Let $Q_q$ denote a region that is $[l_j- \frac{\Delta l}{2} + (q-1)\frac{\Delta l}{t},l_j- \frac{\Delta l}{2} + q \frac{\Delta l}{t}]$. Let $\theta = f- b \pmod F$. Recall that we sample $\sigma$ uniformly at random from $[\frac{1}{B\eta}, \frac{2}{B\eta}]$. Then we sample $\gamma$ uniformly at random from $[\frac{1}{2}, 1]$ and sample $\beta$ uniformly at random from $[\frac{ s  t}{4\sigma\Delta l},\frac{ s  t}{2\sigma\Delta l}]$.  Define $c_j = \phi(\widehat{u}_j / \widehat{u}'_j)$. Let $m$ denote the number of folds, which is equal to $\lceil \sigma \beta \Delta l \rceil$. Let $v_{j,q}$ denote the vote of region$(j,q)$.


We hope to show that in any round $r$, each observed $c_j$ is close to $2\pi\sigma\beta  \theta$ with good probability. On the other hand, for each observed $c_j$, we need to assign it to some regions and increase the vote of the corresponding region. The straightforward way is just checking all possible $t$ regions, which takes $O(t)$ time. In fact, there are only $\Theta(m)$ regions close enough to the observation $c_j$, where $m= \Theta(\frac{2\pi\sigma\beta \Delta l}{2\pi}) = \Theta(s t)$. The reason is $2\pi \sigma \beta$ will scale the original length $\Delta l$ interval to a new interval that has length $2\pi\sigma \beta \Delta l$. This new interval can only wrap around circle $[0,2\pi)$ at most $\lceil \sigma \beta \Delta l \rceil$ times. The running time is $O(st R_{loc})$, since we take $R_{loc}$ independent observations.

For each observed $c_j$: Part (\RN{1}) of Lemma \ref{lem:good_vote} says, we assign it to the true region with some good probability; Part (\RN{2}) of Lemma \ref{lem:good_vote} says, we do not assign it to the wrong region with some good probability. Thus taking $R_{loc}$ independent $c_j$, we can analyze the failure probability of this algorithm based on these three cases. 

(\RN{1}) What's the probability of true fold fails?
\begin{eqnarray*}
& & \mathsf{Pr}[ \text{True ~ fold ~ fails}] \\
&=& \mathsf{Pr} [ \text{True ~region ~fails} ~ \geq R_{loc}/2 ~\text{times}] \\
&\leq & 2 \cdot {R_{loc} \choose R_{loc}/2} \cdot \left( \mathsf{Pr} [ \text{True~region~fails~once} ] \right)^{R_{loc}/2} \\
&\leq & 2 \cdot {R_{loc} \choose R_{loc}/2} \cdot ( \frac{2}{s\rho} )^{2R_{loc}/2} \quad \text{by~Lemma~\ref{lem:good_vote}}\\
&\leq & (  \frac{4}{s\rho}  )^{R_{loc}}.
\end{eqnarray*}

(\RN{2}) What if the region that is ``near''(within $c_n$ neighbors) true region becomes true?

Any of those region gets a vote only if true region also gets a vote. Since our algorithm choosing any region that has more than $R_{loc}/2$ votes and enlarging the region size by containing $c_n$ nearby neighbors of that chosen region, then the new larger region must contain the ``real'' true region. 

(\RN{3}) What if the region that is ``far away''(not within $c_n$ neighbors) from true region becomes true?

By Part (\RN{2}) of Lemma \ref{lem:good_vote}, the probability of one such wrong region gets a vote is at most $15s$. Thus, one of the wrong region gets more than $R_{loc}/2$ votes is at most $(60 s)^{R_{loc}/2}$. By taking the union bound over all $t$ regions, we have the probability of existing one wrong region getting more than $R_{loc}/2$ vote is at most  $t \cdot (60s)^{R_{loc}/2}$.

Thus, if we first find any region $Q_q$ that has more than $R_{loc}/2$ votes, and report a slightly larger region $[l_j - \frac{\Delta l}{2} + (q-1)\frac{\Delta l}{t} -  \frac{c_n}{2}  \frac{\Delta l}{t}, l_j - \frac{\Delta l}{2} + q \frac{\Delta l}{t} +  \frac{c_n}{2}  \frac{\Delta l}{t} ]$, it is very likely this large region contains the frequency $f$. Finally, the failure probability of this algorithm is at most $\Theta( (\frac{4}{s\rho})^{R_{loc}} + t \cdot (60s)^{R_{loc}/2} )$.
\end{proof}

\begin{lemma}\label{lem:learn_f_within}
Taking the median of values belong to any region getting at least $\frac{1}{2}R_{loc}$ votes, then we can learn frequency $f$ within $\Theta(\frac{\Delta l}{\rho s t})$ with probability $1- \exp(-\Omega(R_{loc}))$.
\end{lemma}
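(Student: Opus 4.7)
The plan is to show that each individual observation, once correctly identified with a fold, gives an $O(1/\rho)$-accurate estimate of $\theta$, and then to use a median-of-independent-samples argument to amplify constant success probability to $1-\exp(-\Omega(R_{loc}))$.

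First I would revisit the phase-error calculation from Lemma~\ref{lem:good_vote}, but with a different choice of parameter. Starting from the second-moment bound $\mathbb{E}_\gamma[|\widehat{u}_j - e^{2\pi\gamma\sigma\theta\i}\widehat{x}(\theta)|^2] \leq |\widehat{x}(\theta)|^2/\rho^2$ and the analogous bound for $\widehat{u}'_j$, Chebyshev with constant failure parameter $g=\Theta(1)$ (rather than $g=\Theta(1/(s\rho)^2)$ as in Lemma~\ref{lem:good_vote}) gives that with constant probability (say $99/100$) both $|\widehat{u}_j - e^{2\pi\gamma\sigma\theta\i}\widehat{x}(\theta)|$ and $|\widehat{u}'_j - e^{2\pi(\gamma+\beta)\sigma\theta\i}\widehat{x}(\theta)|$ are $O(|\widehat{x}(\theta)|/\rho)$. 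Passing through $\sin^{-1}$ and the triangle inequality, this implies $\|c_j - 2\pi\beta\sigma\theta\|_\bigcirc = O(1/\rho)$ with constant probability. Call such an observation \emph{good}.

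Next, given a good observation $c_j$, once we know which of the $m=\Theta(st)$ possible folds $\theta$ lies in (equivalently, which region $Q_{q'}$ it belongs to), unwrapping yields an estimate $\widehat{\theta}$ with $|\widehat{\theta}-\theta| \leq \frac{1}{2\pi\beta\sigma} \cdot O(1/\rho) = O\!\left(\frac{\Delta l}{st\rho}\right)$, since $\beta \eqsim \frac{st}{\sigma \Delta l}$. So each good observation contributes a value to the correct region (and to a few nearby regions, via the $c_n$-neighborhood padding) that is within $O(\Delta l/(st\rho))$ of $f$.

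Now I would apply a Chernoff bound to the $R_{loc}$ independent draws of $(\gamma,\beta)$: with probability $1-\exp(-\Omega(R_{loc}))$, at least a $0.95$ fraction of the observations are good. Combined with Lemma~\ref{lem:correctness_of_locateinner}, which guarantees that with probability $1-\exp(-\Omega(R_{loc}))$ the region of at least $R_{loc}/2$ votes is within $c_n$ neighbors of the true region $Q_{q'}$, we conclude that among the $\ge R_{loc}/2$ values recorded in the winning region, strictly more than half are good, and in particular the $\Omega(R_{loc})$ bad contributions cannot dominate. Therefore the median of the values in the winning region lies within the range spanned by the good values, which is $O(\Delta l/(st\rho))$ around $\theta$, giving the claimed estimate of $f$.

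The main obstacle will be step~one: previously Lemma~\ref{lem:good_vote} absorbed an $s$ factor into the phase-error bound (via $g=\Theta(1/(s\rho)^2)$) in order to make the voting analysis clean, but here we must separate the two roles of $\rho$ (vote placement vs.\ frequency accuracy) and use the independence of the $R_{loc}$ trials to boost constant-probability $1/\rho$-accuracy into exponentially-small-failure $1/\rho$-accuracy. The bookkeeping that a good observation's contribution actually reaches the winning region---as opposed to a neighboring one---is why the $c_n$-neighbor padding in $\mathsf{LocateInner}$ is essential, and one must be careful that the enlargement does not enlarge the median's error by more than a constant factor of $\Delta l/t$, which is subsumed into the $O(\Delta l/(st\rho))$ target since $s<1$.
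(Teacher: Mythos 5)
Your proposal follows essentially the same route as the paper: invoke the Chebyshev bound with a \emph{constant} failure parameter $g=\Theta(1)$ --- rather than $g=\Theta(1/(s\rho)^2)$ as in Lemma~\ref{lem:good_vote} --- to obtain $\Theta(1/\rho)$ circular phase accuracy per observation with constant success probability $p>1/2$, then amplify by concentration over the $R_{loc}$ independent draws, and finally read off the median of the unwrapped values in the winning region. The paper executes the same plan by reusing Equation~(\ref{eq:cj_within_rho}) with $g=\Theta(1)$ and bounding the failure probability of the median directly via a binomial tail: the winning region receives some $R\geq R_{loc}/2$ values, the median can be $\gtrsim 1/\rho$ off only if at least $R/2$ of the $R_{loc}$ observations are bad, and this is $\leq\sum_{i\geq R/2}{R_{loc}\choose i}(1-p)^ip^{R_{loc}-i}\lesssim e^{-\Omega(R_{loc})}$. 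Your Chernoff-plus-``0.95 fraction good'' phrasing is a slightly looser but equivalent way of organizing the same calculation, and your step of converting $\|c_j - 2\pi\beta\sigma\theta\|_\bigcirc = O(1/\rho)$ into $|\widehat\theta-\theta|=O(\tfrac{1}{\rho\sigma\widehat\beta})=O(\tfrac{\Delta l}{st\rho})$ matches the paper's last step exactly.

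One clause in your final paragraph is wrong, though it does not sink the argument. You claim the $c_n$-neighbor enlargement can add error ``a constant factor of $\Delta l/t$, which is subsumed into the $O(\Delta l/(st\rho))$ target since $s<1$.'' But $\Delta l/t \lesssim \Delta l/(st\rho)$ is equivalent to $s\rho\lesssim 1$, and in this lemma $\rho>C$ while $s\eqsim 1/C$, so $s\rho$ is a constant at least $1$ and the target $\Delta l/(st\rho)$ is \emph{finer} than the region width, not coarser. The correct reason the padding is harmless is the one you already gave a sentence earlier: the median is robust, so as long as a majority of recorded values come from good observations (each within $O(\Delta l/(st\rho))$ of $\theta$), the median lies in their span no matter what outliers the padding admits. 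The padding's role is only to guarantee that the good observations' values land inside the region over which the median is taken; it contributes no additive error term. Drop the ``subsumed since $s<1$'' reasoning and keep the median-robustness argument.
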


\begin{proof}
Let $\text{region}(j,q')$ be the region that getting at least $\frac{1}{2}R_{loc}$ votes. Let $R = |\text{region}(j,q')|$ denote the number of observations/votes assigned to $\text{region}(j,q')$. Since this region getting at least $\frac{1}{2}R_{loc}$ votes, then $\frac{1}{2} R_{loc} \leq R \leq R_{loc}$. Using Equation (\ref{eq:cj_within_rho}) in Lemma \ref{lem:good_vote}, $\forall g >0 $, we have

\begin{equation*}
\| c_j -2\pi \beta \sigma \theta \|_{\bigcirc} \leq 2 \left( \sin^{-1} (\frac{1}{\rho} \sqrt{\frac{1}{g}} ) \right),
\end{equation*}
holds with probability $1-2g$. Choosing $g= \Theta(1)$, we have with constant success probability $p > \frac{1}{2}$,
\begin{equation*}
\| c_j -2\pi \beta \sigma \theta \|_{\bigcirc} \lesssim \frac{1}{\rho},
\end{equation*}
holds.

Taking the median over all the observations that belong to $\text{region}(j,q')$ gives 

\begin{eqnarray}\label{eq:median_of_true_bin}
\mathsf{Pr}\left( \left| \underset{r \in \text{region}(j,q')}{ \text{median}}  c_j^r - 2\pi \beta^r \sigma \theta \right| \gtrsim \frac{1}{\rho} \right) & < &  \sum_{i=  R/2 }^{R_{loc}} { R_{loc} \choose i} (1-p)^i p^{R_{loc}-i} \notag \\
& = & \sum_{i = R_{loc}/2}^{R_{loc }} {R_{loc} \choose i }(1-p)^i p^{R_{loc} - i} + \sum_{i=R/2}^{R_{loc}/2} {R_{loc} \choose i }(1-p)^i p^{R_{loc} - i} \notag \\
& < & \sum_{i=  R_{loc}/2 }^{R_{loc}} { R_{loc} \choose R_{loc}/2} (1-p)^i + \sum_{i=R/2}^{R_{loc}/2} { R_{loc} \choose R_{loc}/2} (1-p)^i  \notag \\
& < & 2 { R_{loc} \choose R_{loc}/2} (1-p)^{R/2} \notag \\
& \leq &  2 (2e)^{R_{loc}/2}  (1-p)^{ R_{loc}/4}  \notag\\
& \leq &  e^{-cR_{loc} },
\end{eqnarray}
where the the second inequality follows by ${R \choose i} \leq {R\choose R/2}$ and $p^{R_{loc}-i} < 1$, $\forall i$ ; the fourth inequality follows by ${n \choose k }\leq (ne/k)^k$; the last inequality follows by choosing some $p$ such that $\frac{1}{2}\log_{2e} \frac{1}{1-p}  - 1> 2c$ where $c>0$ is some constant.
Equation (\ref{eq:median_of_true_bin}) implies that
\begin{equation*}
\mathsf{Pr}\left( \left| \underset{r \in \text{region}(j,q')}{ \text{median}}  \theta^r -\theta \right| < \frac{1}{\rho 2\pi \sigma \widehat{\beta} } \right) > 1- \exp(-\Omega( R_{loc} ) ),
\end{equation*}
where $\forall r\in [R_{loc}]$, $\beta^r $ is sampled uniformly at random from $[\widehat{\beta},2\widehat{\beta}] = [\frac{s t}{4\sigma \Delta l}, \frac{st}{2\sigma \Delta l}]$. Thus, we can learn $f$ within $\Theta(\frac{1}{\rho 2\pi \sigma \widehat{\beta}}) = \Theta(\frac{\Delta l}{\rho s t})$.

\end{proof}

\begin{figure}
\centering
\begin{tikzpicture}
\def\dx{8}
\draw (-\dx,0) -- (\dx,0) ;
\draw (\dx,0) -- (\dx,0.2);
\node at (-\dx,-0.5) {$l_j -\frac{\Delta l}{2}$};

\node at (\dx,-0.5) {$l_j + \frac{\Delta l}{2}$};

\node at (0,-0.5) {$l_j$};

\node at (3,0.5) {$l_j - \frac{\Delta l}{2} + (q'-1)\frac{\Delta l}{t}$};
\draw [red,thick] (2.5,0) -- (2.5,0.3);
\draw [red,thick] (3.0,0) -- (3.0,0.3);

\node at (4,-0.5) {$l_j - \frac{\Delta l}{2} + q'\frac{\Delta l}{t}$};
\draw (-\dx+1,0) -- (-\dx+1,0.2);
\draw (-\dx+2,0) -- (-\dx+2,0.2);
\draw (-\dx+3,0) -- (-\dx+3,0.2);

\draw (-\dx,-2) -- (\dx,-2) ;
\draw (-\dx,-3) -- (\dx,-3) ;
\foreach \i in {-\dx,...,7}
{
	\draw (\i,0) -- (\i,0.2);
	\draw (\i+0.5,0) -- (\i+0.5,0.2);
	\draw (\i,-2) -- (\i,-3);
	\draw (\i+0.5,-2) -- (\i+0.5,-3);
}
\draw (\dx,-2) -- (\dx,-3);

\foreach \i in {1,...,8}
{
	\draw [fill=green,ultra thick,blue!50] (-\dx-2+0.5+\i*4*0.5,-3) rectangle (-\dx-2+0.5+0.5+\i*4*0.5,-2);
}

\draw [->,thick] (\dx-0.25,-3.5) -- (\dx-0.25,-3);
\node at (\dx-0.25,-3.8) {$\text{region}(j,t)$};
\node at (\dx-0.25,-2.5) {$v_{j,t}$};

\draw [->,thick] (-\dx+0.25,-3.5) -- (-\dx+0.25,-3);
\node at (-\dx+0.25,-3.8) {$\text{region}(j,1)$};
\node at (-\dx+0.25,-2.5) {$v_{j,1}$};

\node at (-\dx+0.25+0.5,-2.5) {$v_{j,2}$};

\draw [->,thick] (2.5+0.25,-3.5) -- (2.5+0.25,-3);
\node at (2.5+0.25,-3.8) {$\text{region}(j,q')$};
\node at (2.5+0.25,-2.5) {$v_{j,q'}$};

\end{tikzpicture}
\caption{The number of ``blue'' regions is equal to the number of folds $m$. The number of total regions is $t$. For each observed $c_j$, instead of checking all the $t$ regions, we only assign vote to the these ``blue'' regions. Since only these $m$ ``blue'' regions can be the candidate region that contains frequency $f$.}
\end{figure}
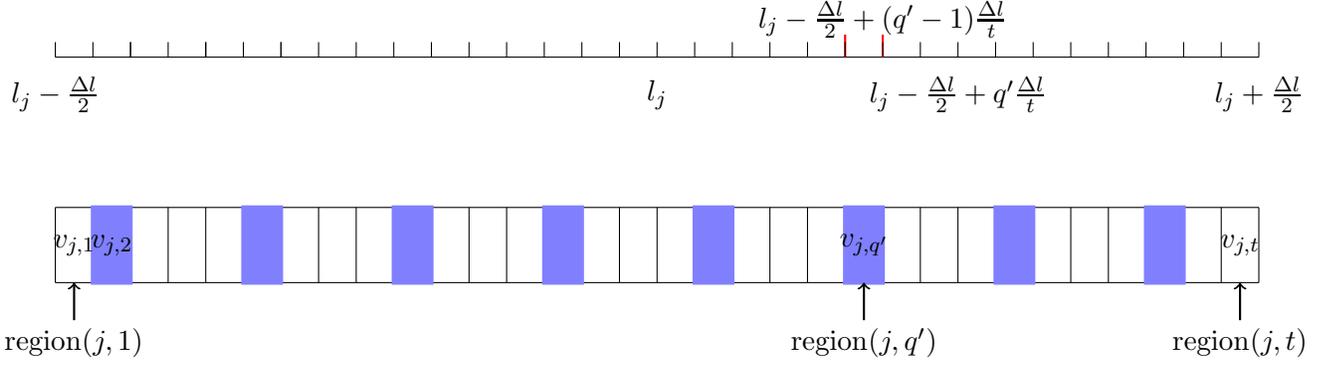

\restate{lem:correctness_of_locateouter}

\begin{proof}

Algorithm $\mathsf{LocateKSignal}$ rerun procedure $\mathsf{LocateInner}$ $D$ times. For the first $D -1$ rounds, the sampling range for $\beta$ is increased by $t$ every time. For the last round, the sampling range for $\beta$ is not increasing any more. On the other hand, the sampling range for $\beta$ for $D -1$ round and the last round are the same.

Recall that, we sample $\sigma$ uniformly at random from $[ \frac{1}{ B \eta}, \frac{ 2}{ B \eta}]$. Then we sample $\gamma$ uniformly at random from $[\frac{1}{2},1]$ and $\beta$ uniformly at random from $[\frac{st}{4\sigma \Delta l}, \frac{st}{2\sigma \Delta l}]$. $c_n$ is some constant for the number of neighbor regions nearby ``true'' region. In the first $D -1$ rounds, we set $s=1/\sqrt{C}$, $\Delta l = F/ (t')^{i-1}_0, \forall i \in [D -1]$, $t\eqsim \log(FT)$, and $t' = \frac{t}{c_n+1}$. For the last round, we set $s \eqsim 1/C$ , $\Delta l \eqsim st/T$ and $t\eqsim \log(FT)/s$. $C$ is known as ``approximation'' factor. Finally, we need to choose depth $D = \log_{t'} (FT/st) \eqsim \log_{t'}(FT)$ and set $R_{loc} = O( \log_{C} (tC) )$ for all the rounds. Define $ \mathit{DR} =\sum_{i=1}^D R_{loc}^i$, which is $ (D-1)R_{loc} + R_{loc}^{D} = O(\log_C (FT))$. 

After setting all the parameters for the first $D-1$ rounds and the last round, we explain some intuitions and motivations for setting the last round in a different way of the first $D-1$ rounds. For the first $D-1$ rounds, it is not acceptable to have constant failure probability for each round, since we need to take the union bound over $D-1$ rounds. But, for the last round, it is acceptable to allow just constant failure probability, since it is just a single round. That's the reason for setting $C$ in a different way.

We have the following reason for choosing the number of regions($=t$) in the last round larger than that of first $D-1$ rounds. For the first $D-1$ rounds, we do not need to learn frequency within $\eqsim \frac{1}{T\rho}$. It is enough to know which region does frequency belong to, although the diameter of the region is large at the beginning. The algorithm is making progress round by round, since the diameter of each region is geometrically decreasing while $\widehat{\beta}$ is geometrically increasing.
For the last round, by Lemma \ref{lem:learn_f_within}, we can learn $f$ within $\Theta(\frac{\Delta l}{\rho s t})$. Since after the last round, we hope to learn frequency within $\frac{1}{T\rho}$, thus we need to choose some $s$, $t$ and $\Delta l$ such that $\frac{1}{T} \eqsim  \frac{\Delta l }{st}$. To get more accuracy result in the last round, we'd like to choose a larger $t$. But there is no reason to increase $\widehat{\beta}$ again, since the  $\widehat{\beta}$ of the $(D-1)$th rounds can tolerance the $t$ we choose at the last round.

To show the constant succeed probability of this Lemma, we separately consider about the failure probability of the first $D-1$ rounds and the last round.
By the union bound, the probability of existing one of the first $D-1$ rounds is failing is,
\begin{eqnarray*}
& &(D-1)  \left( (\frac{4}{s\rho})^{R_{loc}} + t\cdot (60 s)^{R_{loc}/2} \right) \\
&\leq & D  \left( (\frac{4}{s\rho})^{R_{loc}} + t\cdot (60 s)^{R_{loc}/2} \right) \\
&\leq & D  \left( (\frac{4}{sC})^{R_{loc}} + t\cdot (60 s)^{R_{loc}/2} \right) ~{\text{by}~\rho>C>1}\\
&=& D  \left( (\frac{4}{\sqrt{C} })^{R_{loc}} + t\cdot ( \frac{60}{\sqrt{C}})^{R_{loc}/2} \right) ~{\text{by~setting}~s\eqsim 1/\sqrt{C}}\\
&\leq & D \cdot \frac{1}{ (C t)^{c}}   ~{\text{by~setting} ~R_{loc}=O(\log_C(tC))} ,
\end{eqnarray*}
where $c$ is some arbitrarily large constant. Using $t>D$, we can show that failure happening in any of the first $D-1$ rounds is small. Then, we still need to show that the probability of the last round is failing is also small,
\begin{eqnarray*}
& & (\frac{4}{s\rho})^{R_{loc}} + t\cdot (60 s)^{R_{loc}/2} + e^{ -\Theta( R_{loc} ) }\\
& \leq & ( \Theta( \frac{C}{\rho} ) )^{R_{loc}} + t\cdot ( \Theta( \frac{1}{C} ) )^{R_{loc}/2} + e^{ -\Theta( R_{loc} ) } ~{\text{by~setting}~s\eqsim 1/{C}} \\
& \leq & \frac{1}{c_1} + \frac{1}{(tC)^{c_2}} + e^{ -\Theta( R_{loc} ) } ~{\text{by~setting}   ~R_{loc}=O(\log_C(tC))} \\
& \leq & \frac{1}{c_1} + \frac{1}{(tC)^{c_2}} +  \frac{1}{c_3},
\end{eqnarray*}
where in the first line, the first two terms are from Lemma \ref{lem:correctness_of_locateinner} and the third term comes from Lemma \ref{lem:learn_f_within}; in the last line $c_1$, $c_2$ and $c_3$ are some arbitrarily large constants.

The expected running time includes the following part: Running $\mathsf{HashtoBins}$ algorithm $O(\mathit{DR})$ times, each run takes $O(\frac{B}{\alpha}\log\frac{k}{\delta} + B\log B)$. Updating the counter $v$, which takes $O(DR \cdot B t)$ time. The total running time should be 
\begin{eqnarray*}
& & O(\mathit{DR} ( \frac{B}{\alpha}\log\frac{k}{\delta} + B\log B ) + (D R_{loc} Bt)  ) \\
& = & O( \mathit{DR} B \log (k/\delta \cdot B \cdot FT)) \\
& = & O(B\log_{C} (FT) \log (\frac{k}{\delta} B FT)) \\
& = & O(B\log_{C} (FT) \log ( FT/\delta)) \quad \text{by~$FT \gg F \frac{1}{\eta} \gg k$}.
\end{eqnarray*}
 The total number of samples is 
 \begin{equation*}
 O\left( \mathit{DR} \cdot B \log (\frac{k}{\delta})\right) = O(B \log_{C} (FT) \log (k/\delta)).
 \end{equation*}
 The sample duration of Algorithm $\mathsf{LocateKSignal}$ is $ O(\frac{\log\frac{k}{\delta} }{\eta})$.

In conclusion, we can show that for frequency where neither $E_{\mathit{coll}}$ nor
$E_{\mathit{off}}$ holds, we recover an $f'$ with $\abs{f-f'} \lesssim
\frac{1}{T \rho}$ as long as $\rho >C$, with an arbitrarily large constant probability. 
\end{proof}

\restate{lem:whole_of_stage}
\begin{proof}
  Let $H$ denote the set of frequencies $f$ for which neither 
  $E_{\mathit{coll}}(f)$ nor $E_{\mathit{off}}(f)$ holds.  For each such $f$, let $v =
  \wh{x^*}(f)$ and denote
  \begin{equation}\label{eq:mu_equal_expectation}
  \mu^2(f) = \E_{a}[|\widehat{u}_j - ve^{a\sigma 2\pi f\i} |^2]
  \end{equation}
  and $\rho^2(f) = |v|^2 / \mu^2(f)$, as in
  Lemma~\ref{lem:correctness_of_locateouter}.  We have that, if
  $\rho^2(f) > C^2$, then with an arbitrarily large constant probability
  one of the recovered $f' \in L$ has $\abs{f' - f} \lesssim
  \frac{1}{T\rho}$.
  If this happens, then $\mathsf{OneStage}$ will estimate $v$ using $v'
  = \widehat{u}_je^{-a\sigma 2\pi f'\i}$. By triangle inequality,
  \begin{equation}\label{eq:first_bound_for_v_minus_vprime}
  \abs{v' - v}^2 \lesssim \abs{v}^2\abs{e^{a 2\pi \sigma (f' - f)\i} - 1}^2 + |\widehat{u}_j - ve^{a\sigma 2\pi f\i} |^2.
  \end{equation}
  Since $a\sigma \leq T$ and $|f'-f| \lesssim \frac{1}{T\rho}$, then the first term of $\mathrm{RHS}$ of Equation (\ref{eq:first_bound_for_v_minus_vprime}) have
  \begin{equation*}
   \abs{v}^2\abs{e^{a 2\pi \sigma (f' - f)\i} - 1}^2 \lesssim |v|^2 |a\sigma (f'-f)|^2.
  \end{equation*}
  For the second term of $\mathrm{RHS}$ of Equation (\ref{eq:first_bound_for_v_minus_vprime}). Using Equation (\ref{eq:mu_equal_expectation}), we have 
  \begin{equation*}
  |\widehat{u}_j - ve^{a\sigma 2\pi f\i} |^2  \lesssim  \mu^2(f),
  \end{equation*}
  with arbitrarily large constant probability. Combining the bounds for those two terms gives
  \begin{equation*}
  \abs{v' - v}^2 \lesssim  \abs{v}^2 \abs{a\sigma(f' - f)}^2 + \mu^2(f),
  \end{equation*}
  with arbitrarily large constant probability.  Since $a\sigma \leq
  T$, the first term is $\abs{v}^2/\rho^2 = \mu^2$, for
  \[
  \abs{v' - v}^2 \lesssim \mu^2(f).
  \]

  On the other hand, if $\rho^2(f) < C^2$, then $\abs{v} 
  = \rho(f) \mu(f) \lesssim C \mu(f)$ so regardless of the frequency $f'$ recovered, the estimate
  $v'$ will have
  \[
  \abs{v' - v}^2 \lesssim C^2 \mu^2(f).
  \]
  with arbitrarily large constant probability.

  Combining with Lemma~\ref{lem:two_close_signal}, we get for any $f \in H$ that
  the recovered $f', v'$ will have
  \[
  \frac{1}{T} \int_0^T \left|v' e^{2\pi f' t \i } - v e^{2\pi ft\i } \right|^2 \mathrm{d} t  \lesssim C^2 \mu^2(f).
  \]
  with arbitrarily large constant probability.  Let $S \subset H$ be
  the set of frequencies for which this happens.  We can choose our
  permutation $\pi$ to match frequencies in $S$ to their nearest
  approximation.
   By
  Lemma~\ref{lem:hbinserror}, this means that
  \[
  \E_{\sigma, b}\left[ \sum_{i\in S} \frac{1}{T} \int_0^T \left|v_i' e^{2\pi f_i' t \i } - v_{\pi(i)} e^{2\pi f_{\pi(i)}t\i } \right|^2 \mathrm{d} t \right] \lesssim C^2 \N^2.
  \]
  as desired.
\end{proof}

\section{Proofs for combining multiple stages}

We first prove that the median of a bunch of estimates of a frequency
has small error if most of the estimates have small error.
\begin{lemma}\label{lem:medians}
  Let $(v_i, f_i)$ be a set of tones for $i \in S$.  Define $v'$
  and $f'$ to be the (coordinate-wise) median of the $(v_i, f_i)$.
  Then for any $(v^*, f^*)$ we have
  \[
  \frac{1}{T} \int_0^T \left|v^* e^{2\pi f^* t \i } - v' e^{2\pi f't\i } \right|^2 \mathrm{d} t \lesssim \median_i \frac{1}{T} \int_0^T \left|v^* e^{2\pi f^* t \i } - v_{i} e^{2\pi f_it\i } \right|^2 \mathrm{d} t.
  \]
\end{lemma}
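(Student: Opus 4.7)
The plan is to invoke Lemma~\ref{lem:two_close_signal} at both the start and end of the argument, reducing the claim to the single inequality
\[
|v' - v^*| + |v'|\min(1, T|f' - f^*|) \lesssim D,
\]
where $D^2 := \median_i D_i^2$ and $D_i^2 := \frac{1}{T}\int_0^T |v^* e^{2\pi f^* t \i} - v_i e^{2\pi f_i t \i}|^2 \mathrm{d}t$. Let $S' \subseteq S$ be the (at least half) of the indices for which $D_i \leq D$.

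First I bound $|v' - v^*|$. For $i \in S'$, Lemma~\ref{lem:two_close_signal} immediately gives $|v_i - v^*| \lesssim D_i \lesssim D$. Since $v'$ is the coordinate-wise (real and imaginary) median of the $v_i$, and at least half of the $\Re v_i$ lie within $O(D)$ of $\Re v^*$, the elementary one-dimensional median fact yields $|\Re v' - \Re v^*| \lesssim D$, and similarly for the imaginary part; hence $|v' - v^*| \lesssim D$.

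For the frequency term I case-split on $|v^*|$. If $|v^*| \lesssim D$, then $|v'| \leq |v^*| + |v' - v^*| \lesssim D$, so $|v'|\min(1, T|f' - f^*|) \leq |v'| \lesssim D$ trivially. Otherwise $|v^*|$ dominates $D$ by a large constant, and for $i \in S'$ the triangle inequality gives $|v_i| \gtrsim |v^*|$; Lemma~\ref{lem:two_close_signal} then forces $|v_i|\min(1, T|f_i - f^*|) \lesssim D$, so in particular $T|f_i - f^*| \lesssim D/|v^*| \ll 1$. Thus at least half the real numbers $f_i$ lie within $O(D/(T|v^*|))$ of $f^*$, which by the one-dimensional median fact passes to $f'$; combined with $|v'| \lesssim |v^*|$ this gives $|v'|\,T|f' - f^*| \lesssim D$.

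Combining the two bounds and applying Lemma~\ref{lem:two_close_signal} between $(v', f')$ and $(v^*, f^*)$ yields the desired $\frac{1}{T}\int_0^T |v^* e^{2\pi f^* t \i} - v' e^{2\pi f' t \i}|^2 \mathrm{d}t \lesssim D^2 = \median_i D_i^2$. The one subtle point is the case split on $|v^*|$ versus $D$: one cannot control $|f' - f^*|$ directly from the $D_i$'s, because when $|v^*|$ is small even an accurate $D_i$ places essentially no constraint on $f_i$, and frequency control only appears once the amplitude is large enough that the frequency error is the dominant contribution to $D_i$.
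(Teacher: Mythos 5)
Your proof is correct, but it takes a different organizational route than the paper's. The paper applies Lemma~\ref{lem:two_close_signal} once to translate the target inequality into $|v^* - v'|^2 + |v^*|^2\min(1, T^2|f^*-f'|^2) \lesssim \median_i\bigl(|v^*-v_i|^2 + |v^*|^2\min(1,T^2|f^*-f_i|^2)\bigr)$, and then dispatches it in one shot: it proves the clean elementary fact that for $x^{(1)},x^{(2)},\dotsc\in\R^3$ one has $\norm{\median_i x^{(i)}}_2^2 \lesssim \median_i\norm{x^{(i)}}_2^2$ (coordinate-by-coordinate, using $(\median a_i)^2 \leq \median a_i^2$), and handles the $\min(1,\cdot)$ in the frequency term by noting that median commutes with any monotone function, so $|v^*|^2\min(1,T^2|f^*-f'|^2)\lesssim |v^*|^2\median_i\min(1,T^2|f^*-f_i|^2)$. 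No case split on $|v^*|$ versus the error level is needed, because the paper never tries to bound $|f'-f^*|$ in isolation; it always keeps the weight $|v^*|^2$ attached. Your proof instead extracts the "good half" of indices, bounds $|v'-v^*|$ from them, and then case-splits on whether $|v^*|$ dominates $D$ precisely because you \emph{do} want to reason about $|f'-f^*|$ by itself --- a valid and intuitive workaround, but it introduces a delicate step (the small-$|v^*|$ case, and the reliance on "at least half" landing in a window, which needs a little care about tie-breaking at even cardinalities). The paper's version sidesteps these subtleties and is slightly tighter bookkeeping for essentially the same underlying median facts; your version makes more explicit \emph{why} frequency control comes for free once the amplitude is large, which is good intuition but not strictly necessary.
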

\begin{proof}
  By Lemma~\ref{lem:two_close_signal}  we have that
  \[
  \frac{1}{T} \int_0^T \left|v^* e^{2\pi f^* t \i } - v' e^{2\pi f't\i } \right|^2 \mathrm{d} t \eqsim \abs{v^*}^2\min(1, T^2 \abs{f^* - f'}^2) + \abs{v^* - v'}^2.
  \]
  Using that $v'$ is taken as a two dimensional median, it suffices to
  show: if $x^{(1)}, x^{(2)}, \dotsc \in \R^3$ then $x' = \underset{i}{\median} ~ x^{(i)}$ has
  \begin{align}
    \norm{x'}_2^2 \lesssim \median_i \norm{x^{(i)}}_2^2.\label{eq:medianmoo}
  \end{align}
  This follows because in each of the three coordinates $j$, we have
  \[
  (x'_j)^2 = (\median_i x^{(i)}_j)^2\leq \median_i (x^{(i)}_j)^2 \leq \median_i \norm{x^{(i)}}_2^2.
  \]
  so summing over the three coordinates gives~\eqref{eq:medianmoo}, as desired.

  Therefore, for two dimensional median and one dimension median, we have
  \begin{equation}\label{eq:magnitude_median}
  |v^* - v'|^2 \lesssim \median_i |v^* - v_i|^2
  \end{equation}
  and
  \begin{equation*}
  |f^* - f'|^2 \lesssim \median_i |f^* - f_i|^2.
  \end{equation*}
 Moreover, 
\begin{eqnarray}\label{eq:frequency_median}
|v^*|^2 \cdot \min(1,T^2 |f^* -f'|^2)& \lesssim& |v^*|^2 \cdot \min(1,T^2 \median_i |f^* -f_i|^2 ) \notag\\ 
& = & |v^*|^2 \cdot \min(1, \median_i T^2  |f^* -f_i|^2 ) \notag \\
& = & |v^*|^2  \cdot \median_i \min(1, T^2  |f^* -f_i|^2 )
\end{eqnarray}
Combining Equation (\ref{eq:magnitude_median}) and (\ref{eq:frequency_median}), we have
\begin{eqnarray*}
|v^* - v'|^2 + |v^*|^2 \cdot \min(1,T^2 |f^* -f'|^2) & \lesssim& \median_i |v^* - v_i|^2 +  \median_i |v^*|^2 \cdot \min(1,T^2 |f^* -f_i|^2 ) \\
& = &  \median_i |v^* - v_i|^2 +  |v^*|^2 \cdot \min(1,T^2 |f^* -f_i|^2 ).
\end{eqnarray*}
which completes the proof.
\end{proof}

\begin{figure}
\begin{tikzpicture}[>=stealth',shorten >=1pt,auto,node distance=3cm,
  thick,red node/.style={scale=.4,circle,fill=red,draw,font=\sffamily\bfseries},blue node/.style={scale=.4,circle,fill=blue,draw,font=\sffamily\bfseries},green node/.style={scale=.4,circle,fill=green,draw,font=\sffamily\bfseries},yellow node/.style={scale=.4,circle,fill=yellow,draw,font=\sffamily\bfseries},gray node/.style={scale=.4,circle,fill=gray,draw,font=\sffamily\bfseries}]
  [scale=.8,auto=left,every node/.style={thick,->,>=stealth',auto,circle,fill=blue!20}]
\def\x{8.2}
 	\draw [->] (0,0) -- (7,0); 
 	\draw [->] (0,0) -- (0,3);
 	\node at (0,-0.5) {$0$};
 	\node at (7,-0.5) {$F$};

	\node[gray node] (gray1) at (0.20,1.30) {};
	\node[gray node] (gray1) at (0.50,0.40) {};
	\node[gray node] (gray1) at (0.70,2.60) {};
	\node[gray node] (gray1) at (2.30,0.70) {};
	\node[gray node] (gray1) at (2.40,1.60) {};
	\node[gray node] (gray1) at (2.50,2.30) {};
	\node[gray node] (gray1) at (4.80,1.10) {};
 	\node[gray node] (gray1) at (4.95,0.20) {};
 	\node[gray node] (gray1) at (5.20,2.40) {};
 	\node[gray node] (gray1) at (6.50,1.10) {};
 	\node[gray node] (gray1) at (6.60,0.20) {};
 	\node[red node] (red1) at (1.05,1.10) {};
 	\node[red node] (red2) at (1.20,1.05) {};
 	\node[red node] (red3) at (1.15,1.35) {};
 	\node[red node] (red4) at (1.40,1.15) {};
 	\node[red node] (red5) at (1.35,1.00) {};
 	\node[red node] (red6) at (1.10,0.95) {};

 	\node[red node] (red7) at (0.90,1.20){};
 	\node[red node] (red8) at (1.60,1.10){};

 	\node[green node] (green1) at (3.05,0.50) {};
 	\node[green node] (green2) at (3.20,0.65) {};
 	\node[green node] (green3) at (3.15,0.85) {};
 	\node[green node] (green4) at (3.40,0.75) {};
 	\node[green node] (green5) at (3.35,0.70) {};
 	\node[green node] (green6) at (3.10,0.85) {};

	\node[green node] (green6) at (2.90,0.60){};
 	\node[green node] (green6) at (2.75,0.80){};
 	\node[green node] (green6) at (3.55,0.70){};
 	\node[green node] (green6) at (3.55,0.90){};
 	
 	\node[blue node] (blue1) at (4.05,2.60) {};
 	\node[blue node] (blue2) at (4.20,2.55) {};
 	\node[blue node] (blue3) at (4.15,2.35) {};
 	\node[blue node] (blue4) at (4.40,2.55) {};
 	\node[blue node] (blue5) at (4.35,2.40) {};
 	\node[blue node] (blue6) at (4.10,2.85) {};

 	\node[blue node] (blue7) at (4.50,2.60) {};
 	\node[blue node] (blue8) at (4.60,2.20) {};
 	\node[blue node] (blue9) at (3.95,2.40) {};

 	\node[yellow node] (yellow1) at (5.55,1.60) {};
 	\node[yellow node] (yellow2) at (5.70,1.45) {};
 	\node[yellow node] (yellow3) at (5.65,1.95) {};
 	\node[yellow node] (yellow4) at (5.90,1.55) {};
 	\node[yellow node] (yellow5) at (5.95,1.80) {};
 	\node[yellow node] (yellow6) at (5.60,1.85) {};
 	\node[yellow node] (yellow7) at (5.45,1.80) {};
 	\node[yellow node] (yellow8) at (5.40,1.85) {};
 	\node[yellow node] (yellow9) at (6.10,1.90) {};
 	\node[yellow node] (yellow10) at (6.05,1.55) {};

	\node[] at (7.3,2.5) {$\mathsf{OneStage}$};
 	\draw [->] (6.5,2) -- (8.5,2);

 	\draw [->] (\x+0,0) -- (\x+7,0); 
 	\draw [->] (\x+0,0) -- (\x+0,3);
 	\node at (\x+0,-0.5) {$0$};
 	\node at (\x+7,-0.5) {$F$};

	\node[gray node] (gray1) at (\x+0.20,1.30) {};
	\node[gray node] (gray1) at (\x+0.50,0.40) {};
	\node[gray node] (gray1) at (\x+0.70,2.60) {};
	\node[gray node] (gray1) at (\x+2.30,0.70) {};
	\node[gray node] (gray1) at (\x+2.40,1.60) {};
	\node[gray node] (gray1) at (\x+2.50,2.30) {};
	\node[gray node] (gray1) at (\x+4.80,1.10) {};
 	\node[gray node] (gray1) at (\x+4.95,0.20) {};
 	\node[gray node] (gray1) at (\x+5.20,2.40) {};
 	\node[gray node] (gray1) at (\x+6.50,1.10) {};
 	\node[gray node] (gray1) at (\x+6.60,0.20) {};
 	
 	\node[red node] (red1) at (\x+1.05,1.10) {};
 	\node[red node] (red2) at (\x+1.20,1.05) {};
 	\node[red node] (red3) at (\x+1.15,1.35) {};
 	\node[red node] (red4) at (\x+1.40,1.15) {};
 	\node[red node] (red5) at (\x+1.35,1.00) {};
 	\node[red node] (red6) at (\x+1.10,0.95) {};

	\node[red node] (red7) at (\x+0.90,1.20){};
 	\node[red node] (red8) at (\x+1.60,1.10){};
 	
 	\node[green node] (green1) at (\x+3.05,0.50) {};
 	\node[green node] (green2) at (\x+3.20,0.65) {};
 	\node[green node] (green3) at (\x+3.15,0.85) {};
 	\node[green node] (green4) at (\x+3.40,0.75) {};
 	\node[green node] (green5) at (\x+3.35,0.70) {};
 	\node[green node] (green6) at (\x+3.10,0.85) {};

 	\node[green node] (green6) at (\x+2.90,0.60){};
 	\node[green node] (green6) at (\x+2.75,0.80){};
 	\node[green node] (green6) at (\x+3.55,0.70){};
 	\node[green node] (green6) at (\x+3.55,0.90){};

 	\node[blue node] (blue1) at (\x+4.05,2.60) {};
 	\node[blue node] (blue2) at (\x+4.20,2.55) {};
 	\node[blue node] (blue3) at (\x+4.15,2.35) {};
 	\node[blue node] (blue4) at (\x+4.40,2.55) {};
 	\node[blue node] (blue5) at (\x+4.35,2.40) {};
 	\node[blue node] (blue6) at (\x+4.10,2.85) {};

 	\node[blue node] (blue7) at (\x+4.50,2.60) {};
 	\node[blue node] (blue8) at (\x+4.60,2.20) {};
 	\node[blue node] (blue9) at (\x+3.95,2.40) {};

 	\node[yellow node] (yellow1) at (\x+5.55,1.60) {};
 	\node[yellow node] (yellow2) at (\x+5.70,1.45) {};
 	\node[yellow node] (yellow3) at (\x+5.65,1.95) {};
 	\node[yellow node] (yellow4) at (\x+5.90,1.55) {};
 	\node[yellow node] (yellow5) at (\x+5.95,1.80) {};
 	\node[yellow node] (yellow6) at (\x+5.60,1.85) {};

 	\node[yellow node] (yellow7) at (\x+5.45,1.80) {};
 	\node[yellow node] (yellow8) at (\x+5.40,1.85) {};
 	\node[yellow node] (yellow9) at (\x+6.10,1.90) {};
 	\node[yellow node] (yellow10) at (\x+6.05,1.55) {};

 	\draw [dashed] (\x+1.0,2) -- (\x+1.0,-0.2);
 	\draw [dashed] (\x+1.5,2) -- (\x+1.5,-0.2);
 	\node[] at (\x+1.2,-0.5) {$c\eta$};

 	\node[] at (\x+2.3,-0.5) {$\gtrsim \eta$};
 	
 	\draw [dashed] (\x+3.0,2) -- (\x+3.0,-0.2);
 	\draw [dashed] (\x+3.5,2) -- (\x+3.5,-0.2);
 	\node [] at (\x+3.2,-0.5) {$c\eta$};

	\draw [dashed] (\x+4.0,2.5) -- (\x+4.0,-0.2);
 	\draw [dashed] (\x+4.5,2.5) -- (\x+4.5,-0.2);
 	\node [] at (\x+4.2,-0.5) {$c\eta$};

	\node [] at (\x+5.1,-0.5) {$\gtrsim \eta$};
	
 	\draw [dashed] (\x+5.5,2) -- (\x+5.5,-0.2);
 	\draw [dashed] (\x+6.0,2) -- (\x+6.0,-0.2);
 	\node [] at (\x+5.7,-0.5) {$c\eta$};
 	
\end{tikzpicture}
\caption{We demonstrate the algorithm for merging various stages($R=O(\log k)$) on 2-dimensional data. Note that the true data should be 3-dimensional, since for each tone $(v_i,f_i)$, $v_i \in \mathbb{C}$ and $f_i \in \mathbb{R}$. The $x$-axis represents the frequency and the $y$-axis represents the real part of magnitude. }
\end{figure}
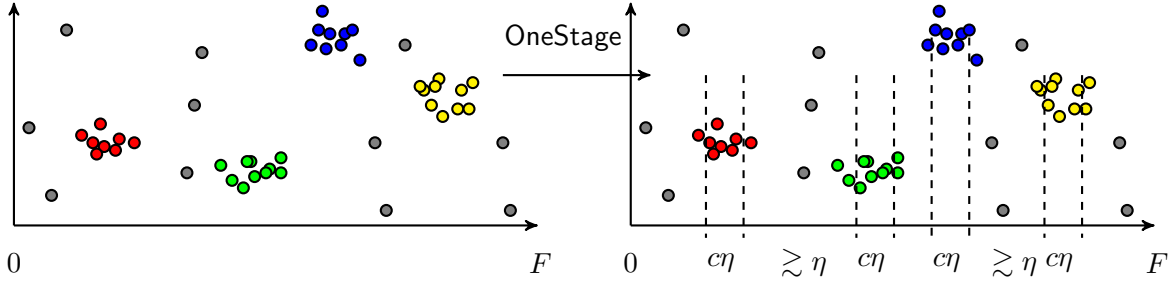

\restate{lem:onemerged}
\begin{proof}
  Our only goal here is to recover the actual tones well, not to
  worry about spurious tones.

  Suppose the number of stages we perform is $R = O(\log k)$.  The
  algorithm for merging the various stages is to scan over a $c\eta$
  size region for small $c$, and take the median (in both frequency
  and magnitude) over $3c\eta$ region around that $c\eta$ 
  if there are at least $\frac{6}{10}R$ results in that $c\eta$
  region. If so, the algorithm will jump to the first right point that
  is at least $\eta$ far away from current region and look for the next $c\eta$ region. 
   Because the minimum separation between frequencies for a
  given stage is $\Omega(\eta)$, this will have minimum separation
  $\eta$ in the output, and because there are $O(k)$ tones
  output at each stage so will this method.  What remains is to show
  that the total error is small.

  We say a stage is ``good'' if the term inside the expectation of
  Lemma~\ref{lem:whole_of_stage} is less than 10 times its
  expectation, as happens with $9/10$ probability:
\begin{equation*}
\mathsf{Pr} \left( \sum_{i\in S} \frac{1}{T} \int_0^T \left|v_i' e^{2\pi f_i' t \i } - v_{\pi(i)} e^{2\pi f_{\pi(i)}t\i } \right|^2 \mathrm{d} t  \leq 10 \underset{\sigma,b}{\mathbb{E}}\left[ \sum_{i\in S} \frac{1}{T} \int_0^T \left|v_i' e^{2\pi f_i' t \i } - v_{\pi(i)} e^{2\pi f_{\pi(i)}t\i } \right|^2 \mathrm{d} t  \right]  \right) \geq \frac{9}{10}.
\end{equation*}
    We say a frequency
  $f_i$ is ``successful'' in a given stage if the stage is good and
  $i$ lies in $S$ for that stage.  Therefore a frequency is successful
  in each stage with at least $8/10$ probability.  For sufficiently
  large $R = O(\log k)$, this will cause all $k$ frequencies to be
  successful more than $\frac{7}{10}R$ times with high probability.
  Suppose this happens.

  Let $\mu^2 (f_i)$ denote the error of $(v_i,f_i)$.
  By Lemma~\ref{lem:whole_of_stage}, the total error over all good
  stages and every successful recovery of a tone in a good stage
  is $O(C^2 \N^2 R)$.  We define $\mu^2(f)$ to be the $6/10R$ worst amount
  of error in the recovery of $f$ over all stages.  Because there are
  $R/10$ worse successful recoveries for each $f$, we have that
  $\sum_i \mu^2(f_i) \lesssim C^2 \N^2$. 


  We will match each $f_i$ with a recovered frequency $f'_i$ with cost
  at most $\mu^2(f_i)$.  If $\mu^2(f_i) \gtrsim \abs{v_i}^2$, then we
  can set an arbitrary $f'_i$ with $v'_i = 0$.  Otherwise, more that
  $\frac{6}{10}R$ successful recoveries of $f_i$ also yield $f'$ that
  are within $O(\frac{1}{T}) \ll c \eta$ of $f_i$.  Thus the algorithm
  for merging tones will find enough tones to report
  something.  What it reports will be the median of at most $R$
  values, $6/10$ of which have less error than $\mu^2(f)$.  Therefore
  by Lemma~\ref{lem:medians} the reported frequency and magnitude will
  have error $O(\mu^2(f))$.  This suffices to get the result.
\end{proof}

\restate{lem:prune_twice}
\begin{proof}
  By Lemma~\ref{lem:onemerged} and Lemma~\ref{lem:two_close_signal}, the
  first run gives us a set of $k'=O(k)$ pairs $\{ (v_i',f_i')\}$ such
  that they may be indexed by using permutation $\pi$ such that the first $k$ are a good
  approximation to $\{(v_i, f_i)\}$ in the sense that
  \begin{equation*}
    \sum_{i=1}^k (|v_{\pi(i)}|^2 + |v'_i|^2)\cdot \min (1,T^2\abs{f_{\pi(i)}-f'_i}^2) + |v_{\pi(i)} -v'_i|^2 \lesssim C^2 \N^2.
  \end{equation*}
  We may as well index to match tones to their nearest match in frequency
  (because the separation for both $f'_i$ and $f_i$ is at least
  $\Omega(\eta) > 1/T$).  That is, we may index such that $\abs{f_j' -
    f_i} \gtrsim \eta$ for any $j \neq i$.

  Now, for indices where $\abs{f_i' - f_{\pi(i)}} > 1/T$, one would do better
  by setting the corresponding $v_i$ to zero.  In particular, suppose
  you knew the true frequencies $f_{\pi(i)}$ and only took the $(v_i', f_i')$
  where $f_i'$ is close to $f_{\pi(i)}$.  That is, there exists some permutation $\pi$,
  for the subset $S \subseteq [k]$
  containing $\{i : \abs{f_{\pi(i)} - f_i'} \leq c/T\}$ for any $c \gtrsim 1$, we have
  \begin{align}
    \sum_{i \in S} ((|v_{\pi(i)}|^2 + |v'_i|^2)\cdot \min
    (1,T^2\abs{f_{\pi(i)}-f'_i}^2) + |v_{\pi(i)} -v'_i|^2) + \sum_{i \in [k]
      \setminus S} (\abs{v_{\pi(i)}}^2 + \abs{v_i'}^2) \lesssim C^2 \N^2. \label{eq:sumS}
  \end{align}
  The problem is that we do not know the set $S$, so we can not throw out
  the other frequencies.  However, we can fake it by running the
  algorithm again on the signal.  In particular, we apply Lemma~\ref{lem:onemerged} 
  again to sparse recovery of the signal defined by
  \[
  \{(v_i, f_i) \mid i \in [k]\} \cup \{(0, f'_i) \mid i \in \{k+1, \dotsc, k'\}\}.
  \]
  That is, we pretend the ``signal'' has terms at the other $f'_i$ for
  $k < i \leq k'$, but with magnitude zero.  This is an identical
  signal in time domain, so it's really just an analytical tool; for
  the analysis, it has $k'=O(k)$ sparsity and $\Omega(\eta)$
  separation, so Lemma~\ref{lem:onemerged} applies again and gets a set of $k''=O(k)$
  pairs $\{(v''_j, f''_j)\}$ such that, for the subset $S^* \subset
  [k']$ containing the indices $i$ where $\abs{f''_j - f'_i}
  \leq c/T$ for some $j\in [k'']$. 
  In other words, there exists some permutation $\tau$ such that
  for the subset $S^* \subset [k'] $ containing $\{i: |f''_{\tau(i)} - f'_i| \leq c/T \}$.
  We have by analogy to~\eqref{eq:sumS} that
\begin{eqnarray*}
C^2 \N^2 & \gtrsim & \sum_{i \in S^* \cap [k]} ((|v_{\pi(i)}|^2 + |v''_{\tau(i)}|^2)\cdot \min (1,T^2\abs{f_{\pi(i)}-f''_{\tau(i)}}^2) + |v_{\pi(i)} -v''_{\tau(i)}|^2) \\
&+& \sum_{i \in S^* \setminus [k]}( (|0|^2 + |v''_{\tau(i)}|^2 ) \cdot \min (1,T^2\abs{f'_i-f''_{\tau(i)}}^2) + |v''_{\tau(i)} - 0|^2) \\
& +&\sum_{i \in [k] \setminus S^* }  (\abs{v_{\pi(i)}}^2 + \abs{v''_{\tau(i)}}^2) + \sum_{i \in [k'] \setminus S^* \setminus [k]}  (\abs{v''_{\tau(i)}}^2) \\
& \gtrsim & \sum_{i \in S^* \cap [k]} ((|v_{\pi(i)}|^2 + |v''_{\tau(i)}|^2)\cdot \min (1,T^2\abs{f_{\pi(i)}-f''_{\tau(i)}}^2) + |v_{\pi(i)} -v''_{\tau(i)}|^2) \\
&+& \sum_{i \in S^* \setminus [k]}|v''_{\tau(i)}|^2  + \sum_{i \in [k] \setminus S^*}  \abs{v_{\pi(i)}}^2, \\
\end{eqnarray*}
where $\pi$ and $\tau$ are two permutations and $|S^*| = k^* = O(k)$.  This last term is precisely the desired total error for the set of
  tones $\{(v''_{\tau(i)}, f''_{\tau(i)}) : i \in S^*\}$, giving the result.

\end{proof}

To prove Theorem \ref{thm:get_eq2}, we still need the following ``local'' Lemma.
\begin{lemma}\label{lem:three_close_signal}
For any three tones $(v_{\pi(i)}, f_{\pi(i)})$, $(v^*_i,f^*_i)$ and $(v'_i,f'_i)$, if $|v'_i| \geq |v^*_i|$ then, 
\begin{eqnarray*}
&&(|v'_i|^2 + |v_{\pi(i)} |^2 ) \cdot \min(1, T^2 |f'_i- f_{\pi(i)}|^2) + |v'_i - v_{\pi(i)}|^2 \\
&\lesssim&  (|v^*_i|^2 + |v_{\pi(i)} |^2 ) \cdot \min(1, T^2 |f^*_i- f_{\pi(i)}|^2) + |v^*_i - v_{\pi(i)}|^2  + |v'_i|^2 .\\
\end{eqnarray*}
\end{lemma}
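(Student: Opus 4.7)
The plan is to recognize that the quantity $(|v|^2+|v'|^2)\min(1,T^2|f-f'|^2)+|v-v'|^2$ appearing on both sides is, by Lemma~\ref{lem:two_close_signal}, equivalent up to constants to $\mathrm{dist}((v,f),(v',f'))^2$, which is just the squared $L^2[0,T]$-norm of the difference of the two complex exponentials (divided by $T$). Since $\mathrm{dist}$ is induced by a norm, it satisfies the triangle inequality, so the proof reduces to a clean metric-space estimate.

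First I will translate the desired inequality via Lemma~\ref{lem:two_close_signal} into the equivalent form
\[
\mathrm{dist}\bigl((v'_i,f'_i),(v_{\pi(i)},f_{\pi(i)})\bigr)^2 \;\lesssim\; \mathrm{dist}\bigl((v^*_i,f^*_i),(v_{\pi(i)},f_{\pi(i)})\bigr)^2 + |v'_i|^2.
\]
Then I apply the triangle inequality (together with $(a+b)^2\le 2a^2+2b^2$) through the intermediate point $(v^*_i,f^*_i)$:
\[
\mathrm{dist}\bigl((v'_i,f'_i),(v_{\pi(i)},f_{\pi(i)})\bigr)^2 \;\leq\; 2\,\mathrm{dist}\bigl((v'_i,f'_i),(v^*_i,f^*_i)\bigr)^2 + 2\,\mathrm{dist}\bigl((v^*_i,f^*_i),(v_{\pi(i)},f_{\pi(i)})\bigr)^2.
\]
So it remains to show that $\mathrm{dist}((v'_i,f'_i),(v^*_i,f^*_i))^2 \lesssim |v'_i|^2$.

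For this last step I invoke the upper bound in Lemma~\ref{lem:two_close_signal}:
\[
\mathrm{dist}\bigl((v'_i,f'_i),(v^*_i,f^*_i)\bigr)^2 \;\lesssim\; (|v'_i|^2+|v^*_i|^2)\min(1,T^2|f'_i-f^*_i|^2) + |v'_i-v^*_i|^2.
\]
Using the hypothesis $|v'_i|\ge |v^*_i|$, the first summand is bounded by $2|v'_i|^2$, and the second summand is bounded by $(|v'_i|+|v^*_i|)^2 \le 4|v'_i|^2$; hence the whole expression is $O(|v'_i|^2)$. Chaining these bounds and converting back to the original $(|v|^2+|v'|^2)\min(1,T^2|\cdot|^2)+|v-v'|^2$ form via the other direction of Lemma~\ref{lem:two_close_signal} yields the stated inequality. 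There is no real obstacle here; the only thing to be careful about is that $\mathrm{dist}$ really is a norm-induced metric (so the triangle inequality applies), which is immediate from its integral definition, and that the hypothesis $|v'_i|\ge|v^*_i|$ is precisely what makes the ``slack'' term depend only on $|v'_i|$ rather than on both magnitudes.
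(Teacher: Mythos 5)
Your proposal is correct, and it takes a genuinely different route from the paper. The paper's own proof is direct and does not pass through the metric interpretation at all: it simply observes that, using $\min(\cdot)\le 1$ and $|v'_i-v_{\pi(i)}|^2\le 2|v'_i|^2+2|v_{\pi(i)}|^2$, the left-hand side is $\lesssim |v'_i|^2+|v_{\pi(i)}|^2$, while dropping the nonnegative first summand of the right-hand side and then using the hypothesis $|v'_i|\ge|v^*_i|$ together with the triangle inequality $|v_{\pi(i)}|\le|v^*_i|+|v^*_i-v_{\pi(i)}|$ gives that the right-hand side is $\gtrsim |v'_i|^2+|v_{\pi(i)}|^2$; the two bounds compared give the claim. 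In contrast, you first convert both sides to $\mathrm{dist}(\cdot,\cdot)^2$ via Lemma~\ref{lem:two_close_signal}, invoke the (norm-induced) triangle inequality through the intermediate tone $(v^*_i,f^*_i)$, and absorb the ``detour'' cost $\mathrm{dist}\bigl((v'_i,f'_i),(v^*_i,f^*_i)\bigr)^2$ into $O(|v'_i|^2)$ using the hypothesis. Both arguments are sound and use the hypothesis $|v'_i|\ge|v^*_i|$ in essentially the same place; your version is conceptually cleaner and makes it transparent that this is just a triangle inequality in the $L^2[0,T]$ metric plus a crude bound on the intermediate leg, whereas the paper's version is slightly more elementary (no appeal to Lemma~\ref{lem:two_close_signal} at all) but reveals less structure -- indeed the paper's bound shows both sides are in fact comparable to $|v'_i|^2+|v_{\pi(i)}|^2$, a stronger fact the statement does not need.
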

\begin{proof}

First, we can show an upper bound for $\mathrm{LHS}$. Using inequality $\min(1,T^2 |f'_i - f_{\pi(i)} |^2 ) \leq 1$, 
\begin{equation*}
\mathrm{LHS} \leq |v'_i|^2 + |v_{\pi(i)}|^2  + | v'_i - v_{\pi(i)} |^2.
\end{equation*}
By triangle inequality,
\begin{equation*}
| v'_i - v_{\pi(i)} |^2 \leq 2 |v'_i|^2 + 2|v_{\pi(i)}|^2.
\end{equation*}
Thus, we obtain,
\begin{equation*}
\mathrm{LHS} \lesssim |v'_i|^2 + |v_{\pi(i)}|^2 .
\end{equation*}
Second, we can show a lower bound for $\mathrm{RHS}$. Since first term of $\mathrm{RHS}$ is nonnegative, then
\begin{equation*}
\mathrm{RHS} \geq |v_i^* - v_{\pi(i)}|^2 + |v'_i|^2.
\end{equation*}
Using $|v'_i| \geq |v^*_i|$, we have
\begin{equation*}
\mathrm{RHS} \gtrsim |v'_i|^2 +2 |v^*_i|^2 +2|v_i^* - v_{\pi(i)}|^2.
\end{equation*}
By triangle inequality,
\begin{equation*}
2 |v^*_i|^2 +2|v_i^* - v_{\pi(i)}|^2 \geq | v_{\pi(i)}|^2.
\end{equation*}
Then, we prove the lower bound for $\mathrm{RHS}$,
\begin{equation*}
\mathrm{RHS} \gtrsim |v'_i|^2 +|v_{\pi(i)}|^2.
\end{equation*}
Combining the lower bound of $\mathrm{RHS}$ and the upper bound of $\mathrm{LHS}$ completes the proof.

\end{proof}

By plugging Lemma \ref{lem:two_close_signal} into Lemma \ref{lem:three_close_signal}, we have 
\begin{corollary}\label{cor:three_close_signal}
For any three tones $(v_{\pi(i)}, f_{\pi(i)})$, $(v^*_i,f^*_i)$ and $(v'_i,f'_i)$, if $|v'_i| \geq |v^*_i|$, then
\begin{equation*}
\frac{1}{T} \int_0^T \left| v'_i e^{2\pi f'_i t \i} - v_{\pi(i)}e^{2\pi f_{\pi(i)} t \i} \right|^2 \mathrm{d} t \lesssim \frac{1}{T} \int_0^T \left| v^*_i e^{2\pi f^*_i t \i} - v_{\pi(i)}e^{2\pi f_{\pi(i)} t \i} \right|^2 \mathrm{d} t + |v'_i|^2.
\end{equation*}
\end{corollary}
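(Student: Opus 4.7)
The plan is to reduce the integral bound to the purely algebraic inequality in Lemma~\ref{lem:three_close_signal} by applying the distance characterization of Lemma~\ref{lem:two_close_signal} on both sides. There is no real obstacle here; the corollary is essentially a restatement of Lemma~\ref{lem:three_close_signal} in ``integral language,'' and the two auxiliary lemmas have already done all the work.

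Concretely, I would first apply Lemma~\ref{lem:two_close_signal} to the tones $(v'_i, f'_i)$ and $(v_{\pi(i)}, f_{\pi(i)})$ to conclude
\[
\frac{1}{T}\int_0^T \left| v'_i e^{2\pi f'_i t\i} - v_{\pi(i)} e^{2\pi f_{\pi(i)} t\i}\right|^2 \mathrm{d}t \eqsim (|v'_i|^2+|v_{\pi(i)}|^2)\min(1, T^2|f'_i-f_{\pi(i)}|^2) + |v'_i - v_{\pi(i)}|^2.
\]
Applying Lemma~\ref{lem:two_close_signal} again to $(v^*_i,f^*_i)$ and $(v_{\pi(i)},f_{\pi(i)})$ gives an analogous equivalence for the first term on the right-hand side of the target inequality. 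Because Lemma~\ref{lem:two_close_signal} provides $\eqsim$ (i.e.\ matching upper and lower bounds up to constants), we can freely move between integrals and their algebraic surrogates while preserving the $\lesssim$ relation.

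Next, using the hypothesis $|v'_i|\geq |v^*_i|$, I invoke Lemma~\ref{lem:three_close_signal}, which states exactly
\[
(|v'_i|^2+|v_{\pi(i)}|^2)\min(1, T^2|f'_i-f_{\pi(i)}|^2) + |v'_i - v_{\pi(i)}|^2 \lesssim (|v^*_i|^2+|v_{\pi(i)}|^2)\min(1,T^2|f^*_i-f_{\pi(i)}|^2) + |v^*_i - v_{\pi(i)}|^2 + |v'_i|^2.
\]
Chaining these three $\eqsim$/$\lesssim$ relations yields the claimed inequality, with the extra $|v'_i|^2$ term surviving untouched. The only thing to double-check is that the constants hidden in the two applications of Lemma~\ref{lem:two_close_signal} compose cleanly into a single universal constant, which is immediate since each instance gives a fixed multiplicative factor independent of the tones; so the argument is a one-line chain of inequalities, and this is all that the proof requires.
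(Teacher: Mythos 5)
Your proposal is correct and matches the paper's own (one-sentence) argument exactly: the paper states that the corollary follows by "plugging Lemma~\ref{lem:two_close_signal} into Lemma~\ref{lem:three_close_signal}," which is precisely the two-sided $\eqsim$ conversion followed by the algebraic lemma that you describe.
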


We use Corollary \ref{cor:three_close_signal} to present the proof of Theorem \ref{thm:get_eq2},
\restate{thm:get_eq2}
\begin{proof}
Let $\{(v^*_i,f^*_i)\}_{i=1,2,\cdots,k''}$ denote the set of tones returned by Lemma \ref{lem:prune_twice}, where $k''= O(k)$ and $k'' > k$. For any $i\in[k]$, tone $(v^*_i,f^*_i)$ was mapped to tone $(v_{\pi(i)}, f_{\pi(i)})$ in Lemma \ref{lem:prune_twice}. Let $\{(v'_i, f'_i)\}_{i=1,2,\cdots,k}$  denote a subset of $\{(v^*_i,f^*_i)\}_{i=1,2,\cdots,k''}$ that satisfies the following two conditions (1) for any $i\in [k]$, $|v'_i|$ is  one of the top-k largest magnitude tones; (2) for any $i\in [k]$, $|v'_i| \geq |v^*_i|$. By Lemma \ref{lem:prune_twice}, we also know that $\min_{i\neq j} |f'_i -f'_j|\gtrsim \eta$.


For any $i\in [k]$, we consider these three tones $(v_{\pi(i)},f_{\pi(i)}), (v^*_i,f^*_i), (v'_i, f'_i)$. If $ (v^*_i,f^*_i) \neq (v'_i, f'_i)$, then applying Corollary \ref{cor:three_close_signal} we have

\begin{equation*}
\frac{1}{T} \int_0^T \left| v'_i e^{2\pi f'_i t \i} - v_{\pi(i)}e^{2\pi f_{\pi(i)} t \i} \right|^2 \mathrm{d} t \lesssim \frac{1}{T} \int_0^T \left| v^*_i e^{2\pi f^*_i t \i} - v_{\pi(i)}e^{2\pi f_{\pi(i)} t \i} \right|^2 \mathrm{d} t + |v'_i|^2.
\end{equation*}
Otherwise $ (v^*_i,f^*_i) = (v'_i, f'_i)$, we also have
\begin{eqnarray*}
\frac{1}{T} \int_0^T \left| v'_i e^{2\pi f'_i t \i} - v_{\pi(i)}e^{2\pi f_{\pi(i)} t \i} \right|^2 \mathrm{d} t & = & \frac{1}{T} \int_0^T \left| v^*_i e^{2\pi f^*_i t \i} - v_{\pi(i)}e^{2\pi f_{\pi(i)} t \i} \right|^2 \mathrm{d} t,
\end{eqnarray*}
which means there exists some universal constant $\alpha$ such that $\forall i \in [k]$, if $(v^*_i,f^*_i) = (v'_i, f'_i)$ then
\begin{equation*}
\frac{1}{T} \int_0^T \left| v'_i e^{2\pi f'_i t \i} - v_{\pi(i)}e^{2\pi f_{\pi(i)} t \i} \right|^2 \mathrm{d} t \leq \alpha \cdot \left( \frac{1}{T} \int_0^T \left| v^*_i e^{2\pi f^*_i t \i} - v_{\pi(i)}e^{2\pi f_{\pi(i)} t \i} \right|^2 \mathrm{d} t + |v'_i|^2 \right)
\end{equation*}
holds. Otherwise $(v^*_i,f^*_i) \neq (v'_i, f'_i)$, then
\begin{eqnarray*}
\frac{1}{T} \int_0^T \left| v'_i e^{2\pi f'_i t \i} - v_{\pi(i)}e^{2\pi f_{\pi(i)} t \i} \right|^2 \mathrm{d} t & = & \alpha \frac{1}{T} \int_0^T \left| v^*_i e^{2\pi f^*_i t \i} - v_{\pi(i)}e^{2\pi f_{\pi(i)} t \i} \right|^2 \mathrm{d} t
\end{eqnarray*}
holds. Let $S$ denote the set of indices $i$ such that $(v^*_i,f^*_i) \neq (v'_i, f'_i)$.
Taking the summation from $i=1$ to $i=k$,
\begin{eqnarray*}
\sum_{i=1}^k\frac{1}{T} \int_0^T \left| v'_i e^{2\pi f'_i t \i} - v_{\pi(i)}e^{2\pi f_{\pi(i)} t \i} \right|^2 \mathrm{d} t & \leq & \sum_{i\in S} \alpha \cdot \left( \frac{1}{T} \int_0^T \left| v^*_i e^{2\pi f^*_i t \i} - v_{\pi(i)}e^{2\pi f_{\pi(i)} t \i} \right|^2 \mathrm{d} t + |v'_i|^2 \right) \\
& + &  \sum_{i\in [k] \setminus S} \alpha \cdot \left( \frac{1}{T} \int_0^T \left| v^*_i e^{2\pi f^*_i t \i} - v_{\pi(i)}e^{2\pi f_{\pi(i)} t \i} \right|^2 \mathrm{d} t \right),
\end{eqnarray*}
furthermore, we have
\begin{eqnarray*}
 \sum_{i=1}^k\frac{1}{T} \int_0^T \left| v'_i e^{2\pi f'_i t \i} - v_{\pi(i)}e^{2\pi f_{\pi(i)} t \i} \right|^2 \mathrm{d} t &\leq& \alpha \sum_{i=1}^k   \left( \frac{1}{T} \int_0^T \left| v^*_i e^{2\pi f^*_i t \i} - v_{\pi(i)}e^{2\pi f_{\pi(i)} t \i} \right|^2 \mathrm{d} t \right) + \alpha \sum_{i\in S} | v'_i |^2.
 \end{eqnarray*}
To finish the proof, we need to show that $ \sum_{i\in S} | v'_i |^2 \leq \sum_{i=k+1}^{k''} |v^*_i|^2$. The point is, for any $i\in S$, we know that $(v'_i, f'_i) \neq (v^*_i, f^*_i)$ which implies that  $(v'_i, f'_i) \notin \{ (v^*_i, f^*_i) \}_{i=1}^{k} $. Thus,
 \begin{eqnarray*}
& & \sum_{i=1}^k\frac{1}{T} \int_0^T \left| v'_i e^{2\pi f'_i t \i} - v_{\pi(i)}e^{2\pi f_{\pi(i)} t \i} \right|^2 \mathrm{d} t \\
&\leq& \alpha \sum_{i=1}^k   \left( \frac{1}{T} \int_0^T \left| v^*_i e^{2\pi f^*_i t \i} - v_{\pi(i)}e^{2\pi f_{\pi(i)} t \i} \right|^2 \mathrm{d} t \right) + \alpha  \sum_{i=k+1}^{k''} |v^*_i|^2 \\
&\lesssim&  \sum_{i=1}^k   \left( \frac{1}{T} \int_0^T \left| v^*_i e^{2\pi f^*_i t \i} - v_{\pi(i)}e^{2\pi f_{\pi(i)} t \i} \right|^2 \mathrm{d} t \right) +  \sum_{i=k+1}^{k''} |v^*_i|^2 \\
&\lesssim& C^2 \N^2,
\end{eqnarray*}
where $k''=O(k)$ is defined in Lemma \ref{lem:prune_twice} and the last inequality follows by using Equation (\ref{eq:prune_twice}) in Lemma \ref{lem:prune_twice}.






\end{proof}

\section{Proofs for converting~\eqref{eq:2} into~\eqref{eq:3}}
In this section, we show that as long as the sample duration $T$ is
sufficiently large, it is possible to convert Equation (\ref{eq:2}) to
Equation (\ref{eq:3}). First, we show an auxiliary lemma,
Lemma~\ref{lem:bound_from_infty_to_infty}, which bounds an integral
that will appear in the analysis.

We will show that
\begin{equation*}
\int^{+\infty}_{-\infty} \min(T,\frac{1}{|f_i-\theta|} ) \cdot \min(T,\frac{1}{|f_j-\theta|}) \mathrm{d} \theta \lesssim \frac{\log (T |f_i-f_j|)}{ |f_i - f_j |}.
\end{equation*}
for $f_j - f_i \geq 2/T$.  We split this into two pieces.

\begin{claim}\label{cla:bound_from_Ti_to_Ti}
Given two frequencies $f_i,f_j$ and $f_j - f_i \geq \frac{2}{T}$, we have
\begin{equation*}
\int_{f_i-\frac{1}{T} }^{f_i } \min(T, \frac{1}{|f_i-\theta|}) \cdot \frac{1}{|f_j - \theta |} \mathrm{d} \theta \lesssim \frac{1}{f_j -f_i}.
\end{equation*}
\end{claim}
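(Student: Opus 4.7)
The plan is to observe that the integral simplifies dramatically on this particular range, so the argument is essentially a one-line bound after making two simple observations about the integrand.

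First, I would note that for any $\theta$ in the integration window $[f_i - 1/T, f_i]$, we have $|f_i - \theta| \leq 1/T$, which gives $1/|f_i - \theta| \geq T$ and hence $\min(T, 1/|f_i-\theta|) = T$. So the first factor in the integrand is just the constant $T$ throughout, and the claim reduces to showing
\[
T \int_{f_i-1/T}^{f_i} \frac{1}{|f_j - \theta|} \, \mathrm{d}\theta \lesssim \frac{1}{f_j - f_i}.
\]

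Next I would control the second factor. For $\theta$ in the integration range and using the hypothesis $f_j - f_i \geq 2/T$, we have
\[
f_j - \theta \;\geq\; f_j - f_i - \tfrac{1}{T} \;\geq\; (f_j - f_i) - \tfrac{1}{2}(f_j - f_i) \;=\; \tfrac{1}{2}(f_j - f_i) > 0,
\]
so $1/|f_j - \theta| \leq 2/(f_j - f_i)$ uniformly on the interval. Plugging this in and using that the interval has length $1/T$,
\[
T \int_{f_i-1/T}^{f_i} \frac{1}{|f_j - \theta|} \, \mathrm{d}\theta \;\leq\; T \cdot \frac{2}{f_j - f_i} \cdot \frac{1}{T} \;=\; \frac{2}{f_j - f_i},
\]
which is the desired bound.

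There is no real obstacle here — the only mildly subtle point is recognizing that the separation hypothesis $f_j - f_i \geq 2/T$ is exactly what is needed so that the $1/T$ slack from the integration endpoint cannot cause $f_j - \theta$ to shrink by more than a factor of two relative to $f_j - f_i$. The logarithmic factor present in the full integral bound stated in the prose does not appear in this sub-piece because the singularity of $1/|f_i-\theta|$ is capped at $T$ (via the $\min$), and we are sitting in the flat region of that cap throughout the window.
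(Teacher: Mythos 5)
Your proof is correct and follows essentially the same route as the paper's: both observe that $\min(T,1/|f_i-\theta|) = T$ on the integration window and that $1/|f_j-\theta|$ is comparable to $1/(f_j-f_i)$ there, then integrate. You are just slightly more explicit about extracting the constant $2$ from the separation hypothesis.
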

\begin{proof}
By $f_i - \frac{1}{T} <\theta < f_i$, we have
\begin{equation*}
\mathrm{LHS} = \int^{f_i }_{f_i - \frac{1}{T} } T \cdot \frac{1}{f_j -\theta} \mathrm{d} \theta.
\end{equation*}
Since $\frac{1}{f_j-\theta}  \eqsim \frac{1}{f_j -f_i}$ for all $\theta \in [f_i-\frac{1}{T}, f_i]$, 
\begin{equation*}
\mathrm{LHS} \lesssim \int_{f_i-\frac{1}{T}}^{f_i} \frac{T}{ f_j - f_i} \mathrm{d} \theta = \frac{1}{f_j -f_i}.
\end{equation*}
\end{proof}

\begin{claim}\label{cla:bound_from_infty_to_Ti}
Given two frequencies $f_i,f_j$ and $f_j - f_i\geq \frac{2}{T}$, we have
\begin{equation*}
\int^{f_i-\frac{1}{T} }_{-\infty } \min(T, \frac{1}{|f_i-\theta|}) \cdot \frac{1}{|f_j - \theta |} \mathrm{d} \theta \lesssim \frac{\log(T |f_j-f_i| )}{f_j -f_i}.
\end{equation*}
\end{claim}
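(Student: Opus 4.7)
The plan is to reduce the integral to a closed form via a change of variables and partial fractions. First, since the domain of integration satisfies $\theta < f_i - \frac{1}{T}$, we have $|f_i - \theta| > \frac{1}{T}$, so $\min(T, \frac{1}{|f_i - \theta|}) = \frac{1}{|f_i - \theta|} = \frac{1}{f_i - \theta}$. Moreover, because $\theta < f_i < f_j$, we have $|f_j - \theta| = f_j - \theta = (f_j - f_i) + (f_i - \theta)$.

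Then I would substitute $u = f_i - \theta$ (so $du = -d\theta$), which maps the region $\theta \in (-\infty, f_i - 1/T]$ to $u \in [1/T, \infty)$. Writing $\Delta = f_j - f_i > 0$, the integral becomes
\[
\int_{1/T}^{\infty} \frac{1}{u(u+\Delta)}\, du.
\]
Next I would use the partial fraction decomposition
\[
\frac{1}{u(u+\Delta)} = \frac{1}{\Delta}\left(\frac{1}{u} - \frac{1}{u+\Delta}\right)
\]
to integrate directly, yielding
\[
\frac{1}{\Delta}\left[\ln \frac{u}{u+\Delta}\right]_{1/T}^{\infty} = \frac{1}{\Delta}\ln\!\left(1 + T\Delta\right).
\]

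Finally, since the hypothesis $f_j - f_i \geq 2/T$ gives $T\Delta \geq 2$, we have $\ln(1 + T\Delta) \lesssim \log(T\Delta) = \log(T|f_j - f_i|)$, and we conclude
\[
\int^{f_i - 1/T}_{-\infty} \frac{1}{f_i - \theta}\cdot\frac{1}{f_j - \theta}\, d\theta \lesssim \frac{\log(T|f_j - f_i|)}{f_j - f_i},
\]
as required. The computation is entirely routine; no step looks like a genuine obstacle, so this is mostly a matter of carrying out the partial fraction decomposition cleanly.
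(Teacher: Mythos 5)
Your proof is correct and takes essentially the same approach as the paper: reduce $\min(T,1/|f_i-\theta|)$ to $1/(f_i-\theta)$ on the integration domain, apply the partial fraction decomposition $\frac{1}{(f_i-\theta)(f_j-\theta)} = \frac{1}{\Delta}\bigl(\frac{1}{f_i-\theta} - \frac{1}{f_j-\theta}\bigr)$, integrate to obtain $\frac{1}{\Delta}\log(1+T\Delta)$, and then use $T\Delta \ge 2$ to absorb the $+1$ into the constant. The substitution $u = f_i - \theta$ is a cosmetic simplification; the argument is otherwise the paper's.
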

\begin{proof}
By $\theta < f_i -\frac{1}{T} < f_j$, we have that
\begin{eqnarray*}
\mathrm{LHS} & = & \int_{-\infty}^{f_i - \frac{1}{T} } \frac{1}{f_i - \theta } \cdot \frac{1}{f_j - \theta } \mathrm{d} \theta\\
&=& \frac{1}{f_j-f_i} \int_{-\infty}^{f_i - \frac{1}{T}} \frac{f_j-f_i}{ (f_i-\theta)(f_j -\theta)} \mathrm{d} \theta \\
& = & \frac{1}{f_j-f_i} \int_{-\infty}^{f_i - \frac{1}{T}} \frac{1}{ f_i-\theta} -\frac{1}{f_j -\theta} \mathrm{d} \theta \\
&= & -\frac{1}{f_j - f_i} \log \frac{f_i -f_i+\frac{1}{T} }{ f_j - f_i + \frac{1}{T} } \\
&= & -\frac{1}{f_j - f_i} \log \frac{1 }{ T(f_j - f_i) + 1 } \\
&\lesssim& \frac{\log (T(f_j-f_i)) }{f_j -f_i} . \\
\end{eqnarray*}\\
\end{proof}

\begin{lemma}\label{lem:bound_from_infty_to_infty}
Given two frequencies $f_i,f_j$ and $f_j - f_i\geq \frac{2}{T}$, we have 
\begin{equation*}
\int^{+\infty}_{-\infty} \min(T,\frac{1}{|f_i-\theta|} ) \cdot \min(T,\frac{1}{|f_j-\theta|}) \mathrm{d} \theta \lesssim \frac{\log (T |f_i-f_j|)}{ |f_i - f_j |}.
\end{equation*}
\end{lemma}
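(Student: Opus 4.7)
The plan is to decompose $\mathbb{R}$ into a handful of intervals determined by where each $\min(T, 1/|f_\cdot-\theta|)$ switches from its ``flat'' value $T$ to its ``tail'' value $1/|f_\cdot-\theta|$, and then apply Claim~\ref{cla:bound_from_Ti_to_Ti} and Claim~\ref{cla:bound_from_infty_to_Ti} (or symmetric analogues of them) to each piece. Writing $\Delta = f_j - f_i \geq 2/T$, the hypothesis $\Delta \geq 2/T$ guarantees that the two ``flat'' intervals $[f_i - 1/T, f_i + 1/T]$ and $[f_j - 1/T, f_j + 1/T]$ are disjoint, so the five natural pieces are
\begin{equation*}
(-\infty, f_i - \tfrac{1}{T}],\ [f_i - \tfrac{1}{T}, f_i + \tfrac{1}{T}],\ [f_i + \tfrac{1}{T}, f_j - \tfrac{1}{T}],\ [f_j - \tfrac{1}{T}, f_j + \tfrac{1}{T}],\ [f_j + \tfrac{1}{T}, +\infty).
\end{equation*}

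On the first interval, $\min(T, 1/|f_i-\theta|) = 1/|f_i-\theta|$ and $\min(T, 1/|f_j-\theta|) = 1/|f_j-\theta|$, so Claim~\ref{cla:bound_from_infty_to_Ti} directly bounds the integral by $\log(T\Delta)/\Delta$. On the last interval, a change of variables $\theta \mapsto (f_i+f_j)-\theta$ reduces to the same situation and yields the same bound. For the second interval, $\min(T,1/|f_i-\theta|) = T$ and $\min(T,1/|f_j-\theta|) = 1/|f_j-\theta|$ (since $\theta \leq f_i + 1/T \leq f_j - 1/T$); Claim~\ref{cla:bound_from_Ti_to_Ti} handles the subinterval $[f_i - 1/T, f_i]$, and the same proof (replacing $f_i - 1/T < \theta < f_i$ by $f_i < \theta < f_i + 1/T$ and noting $1/(f_j-\theta) \eqsim 1/\Delta$ still holds) handles $[f_i, f_i + 1/T]$. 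The fourth interval is symmetric (swap the roles of $f_i$ and $f_j$).

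The only remaining piece is the middle interval $[f_i + 1/T, f_j - 1/T]$, where both factors are in ``tail'' mode. Here I would use the partial fraction trick already used in Claim~\ref{cla:bound_from_infty_to_Ti}: since $\theta > f_i$ and $\theta < f_j$, we have
\begin{equation*}
\frac{1}{(\theta-f_i)(f_j-\theta)} = \frac{1}{\Delta}\Bigl(\frac{1}{\theta-f_i} + \frac{1}{f_j-\theta}\Bigr),
\end{equation*}
and integrating from $f_i + 1/T$ to $f_j - 1/T$ gives
\begin{equation*}
\frac{1}{\Delta}\bigl[\ln(\theta-f_i) - \ln(f_j-\theta)\bigr]_{f_i+1/T}^{f_j-1/T} = \frac{2\ln(T\Delta - 1)}{\Delta} \lesssim \frac{\log(T\Delta)}{\Delta}.
\end{equation*}

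Summing the five contributions, each is $O(\log(T\Delta)/\Delta) = O(\log(T|f_i-f_j|)/|f_i-f_j|)$, which establishes the claim. The main (minor) obstacle is simply a careful case split and recognizing that the middle interval's integrand is exactly the one treated by the partial-fractions computation already present in Claim~\ref{cla:bound_from_infty_to_Ti}; everything else reduces by symmetry to the two claims proved above.
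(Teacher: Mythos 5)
Your proof is correct, and it uses the same basic ingredients as the paper (the split by where each $\min$ flips from $T$ to its tail, plus Claims~\ref{cla:bound_from_Ti_to_Ti} and~\ref{cla:bound_from_infty_to_Ti}), but it is organized differently. The paper first halves the integral by symmetry about $(f_i+f_j)/2$, then on the left half observes $\min(T,1/|f_j-\theta|) = 1/(f_j-\theta)$ throughout, and finally uses the comparability $1/(f_j-\theta) \eqsim 1/\Delta$ on $[f_i-\Delta/2,(f_i+f_j)/2]$ to fold everything into $\int_{-\infty}^{f_i}$, which is then covered entirely by the two claims. Your version skips that reduction and instead decomposes $\mathbb{R}$ into five explicit intervals; this forces you to treat the central region $[f_i+1/T, f_j-1/T]$, where both factors are in tail mode, with a direct partial-fractions computation that the paper never has to perform. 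The trade-off is that the paper's route is more economical (no new integral to evaluate), while yours is more transparent about where each estimate is used and reuses the partial-fractions technique already present in Claim~\ref{cla:bound_from_infty_to_Ti}. Both yield the stated bound $\lesssim \log(T|f_i-f_j|)/|f_i-f_j|$, and your use of $\Delta \geq 2/T$ to keep the flat caps disjoint and to maintain $f_j-\theta \eqsim \Delta$ on $[f_i,f_i+1/T]$ is exactly the role that hypothesis plays in the paper's argument as well.
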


\begin{proof}
By symmetry, we have
\begin{equation*}
\mathrm{LHS} = 2 \int_{-\infty}^{\frac{f_i+f_j}{2}} \min(T,\frac{1}{|f_i-\theta|} ) \cdot \min(T,\frac{1}{|f_j-\theta|}) \mathrm{d} \theta.
\end{equation*}
Since $T > \frac{1}{f_j -\theta}$ when $\theta < \frac{f_i+f_j}{2}$, 
\begin{equation*}
\mathrm{LHS} \leq 2 \int_{-\infty}^{\frac{f_i+f_j}{2}} \min(T,\frac{1}{|f_i-\theta|} ) \cdot \frac{1}{|f_j-\theta|} \mathrm{d} \theta.
\end{equation*}
We also observe that $\frac{1}{|f_j - \theta |} \eqsim \frac{1}{|f_j-f_i|}$ for all $\theta \in [f_i - \frac{f_j-f_i}{2}, f_i + \frac{f_j-f_i}{2}]$,
\begin{equation*}
\int_{f_i - \frac{f_j -f_i}{2} }^{\frac{f_j+f_i}{2}} \min(T,\frac{1}{|f_i-\theta|} ) \cdot \frac{1}{|f_j-\theta|} \mathrm{d} \theta \eqsim \int_{f_i - \frac{f_j -f_i}{2} }^{f_i} \min(T,\frac{1}{|f_i-\theta|} ) \cdot \frac{1}{|f_j-\theta|} \mathrm{d} \theta.
\end{equation*}
Thus, we get
\begin{equation*}
\mathrm{LHS} \lesssim \int_{-\infty}^{f_i} \min (T, \frac{1}{|f_i-\theta|}) \cdot \frac{1}{|f_j-\theta|} \mathrm{d} \theta.
\end{equation*}
Plugging Claim \ref{cla:bound_from_Ti_to_Ti} and \ref{cla:bound_from_infty_to_Ti} into the above formula completes the proof.
\end{proof}

\begin{lemma}\label{lem:sum_of_absolute}
For any $i$, let $a_i(t) = v_i e^{2\pi f_i t \i} - v'_i e^{2\pi f'_i t \i}$, then for $i\neq j$, 
\begin{equation}
\frac{1}{T} \int_0^T a_i(t) \overline{ a_j(t) } \mathrm{d} t \lesssim \frac{\log ( \Delta f_{i,j} T)}{ \Delta f_{i,j} T} \cdot \left( \frac{1}{T} \int_0^T |a_i(t)|^2 \mathrm{d} t \cdot \frac{1}{T} \int_0^T |a_j(t)|^2 \mathrm{d} t \right)^{\frac{1}{2}},
\end{equation}
where $\Delta f_{i,j}=\min ( |f_i-f_j|, |f_i-f'_j|, |f'_i - f_j|, |f_i-f'_j| )$.
\end{lemma}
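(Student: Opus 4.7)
The plan is to move to the frequency side via Parseval, bound the windowed Fourier transform of $a_i$ pointwise by $N_i$ times a simple envelope, and then invoke Lemma~\ref{lem:bound_from_infty_to_infty} to extract the $\frac{\log(T\Delta f_{i,j})}{T\Delta f_{i,j}}$ factor. Write $N_i^2 := \frac{1}{T}\int_0^T |a_i(t)|^2\,\mathrm{d} t$ and $\beta_i := |f_i - f'_i|$, let $\tilde a_i(t) := a_i(t)\mathbf{1}_{[0,T]}(t)$, and introduce $D_T(\nu) := \tfrac{1}{T}\int_0^T e^{2\pi \nu t \i}\,\mathrm{d} t$ together with
\[
P_i(\theta) := v_i D_T(f_i - \theta) - v'_i D_T(f'_i - \theta).
\]
A direct computation shows $\widehat{\tilde a}_i(\theta) = T\cdot P_i(\theta)$, so Plancherel yields $\mathrm{IP} := \frac{1}{T}\int_0^T a_i\overline{a_j}\,\mathrm{d} t = T\int_\R P_i(\theta)\overline{P_j(\theta)}\,\mathrm{d}\theta$ together with $N_i^2 = T\int_\R |P_i(\theta)|^2\,\mathrm{d}\theta$.

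The heart of the argument will be the pointwise envelope
\[
|P_i(\theta)| \lesssim N_i\bigl(h_{f_i}(\theta) + h_{f'_i}(\theta)\bigr), \qquad h_f(\theta) := \min\!\Bigl(1,\tfrac{1}{\pi T|f - \theta|}\Bigr).
\]
To prove it I split $P_i(\theta) = v_i\bigl[D_T(f_i-\theta) - D_T(f'_i - \theta)\bigr] + (v_i - v'_i)\,D_T(f'_i - \theta)$. The second summand is at most $|v_i - v'_i|\,h_{f'_i}(\theta) \lesssim N_i\,h_{f'_i}(\theta)$ by Lemma~\ref{lem:two_close_signal}. For the first, the mean-value theorem together with the derivative estimate $|D'_T(\nu)| \lesssim \min(T, 1/|\nu|)$ yields $|D_T(f_i-\theta) - D_T(f'_i-\theta)| \lesssim \min(1, T\beta_i)\,h_{f_i}(\theta)$, while Lemma~\ref{lem:two_close_signal} supplies $|v_i|\min(1, T\beta_i) \lesssim N_i$; combining the two gives the envelope.

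Given the envelope, the triangle inequality and expanding $(h_{f_i}+h_{f'_i})(h_{f_j}+h_{f'_j})$ give
\[
|\mathrm{IP}| \leq T\int|P_i|\,|P_j|\,\mathrm{d}\theta \;\lesssim\; T\,N_i N_j \sum_{p\in\{i,i'\},\,q\in\{j,j'\}} \int h_{f_p}(\theta)\,h_{f_q}(\theta)\,\mathrm{d}\theta.
\]
Each of the four inner integrals equals $T^{-2}\int \min(T, 1/|f_p - \theta|)\min(T, 1/|f_q - \theta|)\,\mathrm{d}\theta$, so Lemma~\ref{lem:bound_from_infty_to_infty} bounds it by $\frac{\log(T|f_p-f_q|)}{T^2|f_p - f_q|}$. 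Since $x \mapsto \log(Tx)/x$ is decreasing for $Tx \geq e$ and each $|f_p - f_q| \geq \Delta f_{i,j}$, each term is $\lesssim \frac{\log(T\Delta f_{i,j})}{T^2\Delta f_{i,j}}$; summing the four yields $|\mathrm{IP}| \lesssim \frac{\log(T\Delta f_{i,j})}{T\Delta f_{i,j}}\,N_i N_j$ as required.

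The main obstacle I anticipate is verifying the pointwise envelope uniformly in $\theta$. One must separately handle the superresolution regime $T\beta_i < 1$ (where the two tones in $a_i$ partially cancel and one must exploit the smallness of $D_T(f_i-\theta) - D_T(f'_i-\theta)$ via the derivative estimate on $D_T$) versus the well-separated regime $T\beta_i \geq 1$ (where $|v_i|+|v'_i| \lesssim N_i$ and the direct decay bound on $D_T$ already suffice), as well as the subcase of $\theta$ close to $\{f_i, f'_i\}$ where the pure mean-value bound needs to be replaced with the trivial $|D_T|\leq 1$; folding all of these into a single clean inequality is the technical crux.
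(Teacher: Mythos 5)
Your approach is essentially identical to the paper's: multiply by the indicator of $[0,T]$, pass to frequency space by Plancherel, bound the windowed transform pointwise by $N_i$ times a $\min\!\bigl(1, \tfrac{1}{T|\,\cdot\, - f|}\bigr)$ envelope, and finish with Lemma~\ref{lem:bound_from_infty_to_infty}. The final envelope $|P_i(\theta)| \lesssim N_i\bigl(h_{f_i}(\theta) + h_{f'_i}(\theta)\bigr)$ is correct, and you are right that proving it requires a case split on $T\beta_i \lessgtr 1$. However, the intermediate inequality you assert,
\[
|D_T(f_i - \theta) - D_T(f'_i - \theta)| \;\lesssim\; \min(1, T\beta_i)\,h_{f_i}(\theta),
\]
is false in the well-separated regime: take $T\beta_i \gg 1$ and $\theta = f'_i$; the left side is $|1 - D_T(\pm\beta_i)| \approx 1$, while $h_{f_i}(f'_i) \approx \frac{1}{T\beta_i} \ll 1$. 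Your last paragraph correctly anticipates the fix---when $T\beta_i \geq 1$, drop the mean-value split, bound $|P_i| \leq |v_i|\,h_{f_i} + |v'_i|\,h_{f'_i}$ directly, and use Lemma~\ref{lem:two_close_signal} to get $|v_i|, |v'_i| \lesssim N_i$---and this is exactly the two-case argument the paper runs (MVT plus $|v_i - v'_i| + \nu_i T|v'_i| \lesssim \|a_i\|$ when $\nu_i \le 1/T$; four-term direct decay when $\nu_i > 1/T$). So the plan is sound, but the case split is not a fold-in afterthought: it is the actual proof of the envelope, and the single-inequality MVT claim should simply not be asserted outside $T\beta_i < 1$.
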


\begin{figure}[!t]
\begin{tikzpicture}
\def\PI{2*3.14159265359}
\def\s{2}
\def\dy{4}
    \node at (2,1) {$F(\theta)$};
    \node at (-8,-0.5) {$-\infty$};
    \node at (8,-0.5) {$+\infty$};
    \draw [domain=-0.05:-8,variable=\t,smooth,samples=500]
    	plot (\t,{sin(\s*\t r)/\t});
    \draw [domain=0.05:8,variable=\t,smooth,samples=500]
    	plot (\t,{sin(\s*\t r)/\t});
    \draw [blue] (-8,0) -- (8,0);
    \draw [blue] (0,-1.5) -- (0,2.5);
	\node at (2,1-\dy) {$F'(\theta)$};
	\node at (-8,-0.5-\dy) {$-\infty$};
    \node at (8,-0.5-\dy) {$+\infty$};
      \draw [domain=-0.05:-8,variable=\t,smooth,samples=500]
    	plot (\t,{-\dy+(\s*\t*cos(\s*\t r) - sin(\s*\t r))/(\t*\t) });
    \draw [domain=0.05:8,variable=\t,smooth,samples=500]
    	plot (\t,{-\dy+(\s*\t*cos(\s*\t r) - sin(\s*\t r))/(\t*\t) });
    \draw [blue] (-8,0-\dy) -- (8,0-\dy);
    \draw [blue] (0,-1.5-\dy) -- (0,2-\dy);
\end{tikzpicture}\caption{$F(\theta)$ is a $\sinc$ function and has derivate function $F'(\theta)$.}
\end{figure}
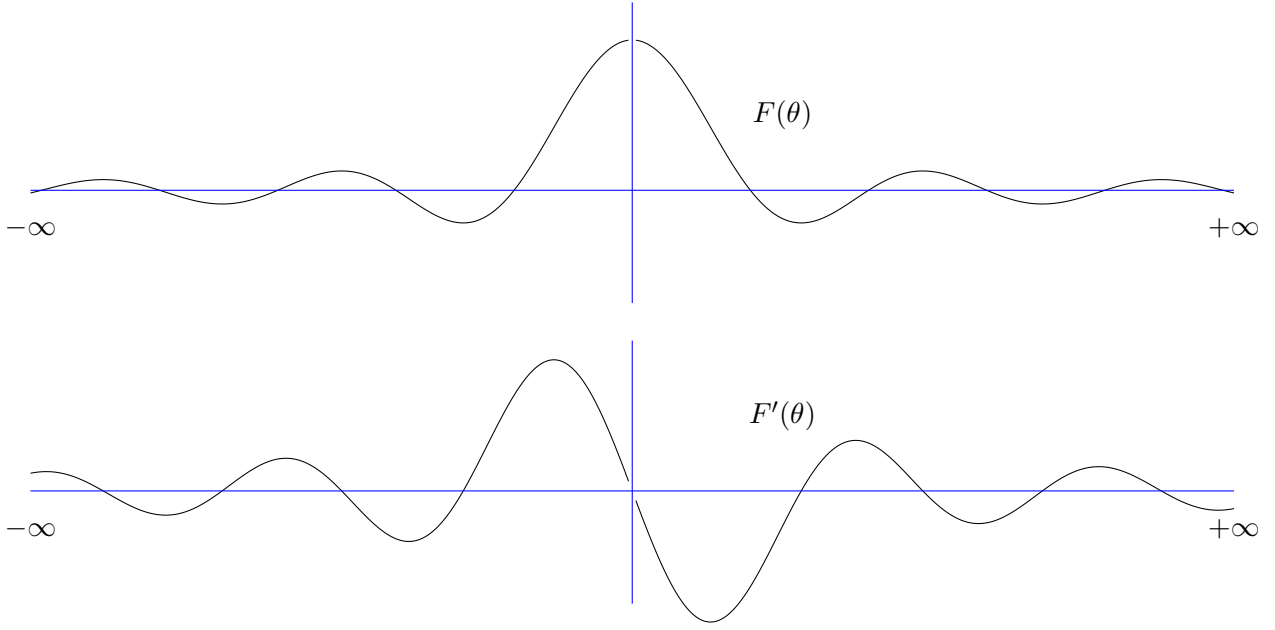

\begin{proof}
Let $\nu_i = |f_i - f'_i|$ and $\nu_j = |f_j - f'_j|$. Define $\| a_i\| = \sqrt{ \frac{1}{T} \int_0^T |a_i(t) |^2 \mathrm{d} t}$.
We define $f(t)$ and $F(\theta)$ to be a rectangle and $\sinc$ function respectively:

\begin{equation*}
f(t) = \left\{
\begin{array}{l l}
     1 & \quad \text{if $ 0 \leq t\leq T$}\\
     0 & \quad \text{otherwise}
  \end{array}
\right.
\end{equation*}

\begin{equation*}
F(\theta) = \frac{\sin( 2\pi \theta T )}{2\pi \theta}
\end{equation*}
where $F = \widehat{f}$.

$F(\theta)$ has the derivative,
\begin{equation*}
F'(\theta) = \frac{ 2\pi\theta T \cos (2\pi\theta T) - \sin (2\pi \theta T)}{2\pi \theta^2},
\end{equation*}
which means that
\begin{equation*}
|F'(\theta) | \lesssim \left\{
\begin{array}{l l}
     T^2 & \quad \text{if $ \theta \leq 1/T$}\\
     T/|\theta| & \quad \text{otherwise}
  \end{array}
\right.
\end{equation*}
Let $y_i(t) = a_i(t) \cdot f(t)$, then
\begin{eqnarray*}
\widehat{y}_i(\theta) &= &\widehat{a}_i(\theta) * \widehat{f}(\theta) \\
 &= &\widehat{a}_i(\theta) * F(\theta) ~\text{by}~F=\widehat{f}\\
& = & v_i F(f_i - \theta) - v'_{i} F(f'_{i} -\theta)\\
& = & (v_i - v'_{i} ) F(f_i -\theta) + v'_{i} (f_i - f_i') \cdot F'(x-\theta) \text{~some $x \in [f_i,f'_{i} ] $}.
\end{eqnarray*}

We split into two cases.  First, if $\nu_i \leq \frac{1}{T}$, then
\begin{eqnarray}\label{eq:bound_y_i_theta}
|\widehat{y}_i(\theta)|& \lesssim & (|v_i - v'_i| + \nu_i T | v'_i|) \cdot \min (T, \frac{1}{|f_i -\theta|}) \notag \\
& \lesssim & \| a_i \| \cdot \min(T,\frac{1}{|f_i -\theta|}) \quad \text{by~Lemma \ref{lem:two_close_signal}},
\end{eqnarray}
where the first line holds for both $f_i > f_i'$ and $f_i \leq f_i'$ since the triangle inequality.  Therefore
\begin{eqnarray}\label{eq:bound_diagonal_by_integral_two}
& & \frac{1}{T} \int_0^T a_i(t) \overline{a_j(t)} \mathrm{d} t \notag \\
& = & \frac{1}{T} \int_{-\infty}^{\infty} y_i(t) \overline{y_j(t)} \mathrm{d} t ~\text{by}~ y_i(t) = a_i(t) \cdot f(t) ~\text{and}~ f(t)=1 ~\text{if}~ t\in [0,T] \notag \\
& = & \frac{1}{T} \int_{-\infty}^{\infty} \widehat{y}_i(\theta) \overline{\widehat{y}_j(\theta)} \mathrm{d} \theta ~\text{by~the~property~of~Fourier~Transform}\notag \\
& \lesssim & \frac{1}{T} \|a_i\| \|a_j\| \underbrace{\int_{-\infty}^{+\infty} \min(T,\frac{1}{|f_i-\theta|}) \cdot \min(T,\frac{1}{|f_j-\theta|}) \mathrm{d} \theta}_{C} ~\text{by~Equation~(\ref{eq:bound_y_i_theta})}.
\end{eqnarray}
Using Lemma \ref{lem:bound_from_infty_to_infty}, we have following bound for term $C$,

\begin{equation*}
\int_{-\infty}^{+\infty} \min(T, \frac{1}{|f_i-\theta|}) \cdot \min(T,\frac{1}{|f_j-\theta|}) \mathrm{d} \theta \lesssim \frac{\log T|f_j-f_i|}{|f_j - f_i| }. 
\end{equation*}

\begin{figure}
\centering
\begin{tikzpicture}
\def\PI{2*3.14159265359}
\def\T{4}
\def\freqi{4}
\def\freqj{9}
    \draw [domain=\freqi+1/\T:\freqj+3,variable=\t,smooth,samples=100]
    	plot (\t,{ 1/(\t-\freqi)/\T });

    \draw [domain=\freqi-1/\T:\freqi-3,variable=\t,smooth,samples=100]
    	plot (\t,{ 1/(-\t+\freqi)/\T });
    \draw (\freqi-1/\T,1) --(\freqi+1/\T,1);
    	
    \draw [domain=\freqj-1/\T:\freqi-3,variable=\t,smooth,samples=100]
    	plot (\t,{ 1/(-\t+\freqj)/\T });

    \draw [domain=\freqj+1/\T:\freqj+3,variable=\t,smooth,samples=100]
    	plot (\t,{ 1/(\t-\freqj)/\T });

    	\draw (\freqj-1/\T,1) --(\freqj+1/\T,1);

    \node at (1,-0.5) {$-\infty$};
    \node at (12,-0.5) {$+\infty$};
    \draw [blue] (1,0) -- (12,0);
    \draw (\freqi,0) -- (\freqi,-0.2);
    \node at (\freqi,-0.5) {$f_i$};
    \draw (\freqj,0) -- (\freqj,-0.2);
    \node at (\freqj,-0.5) {$f_j$};

    \draw (\freqj+1/\T,0) -- (\freqj+1/\T,-0.2);
    \node at (\freqj+1/\T+0.5,-0.8) {$f_j+\frac{1}{T}$};

     \draw (\freqj-1/\T,0) -- (\freqj-1/\T,-0.2);
    \node at (\freqj-1/\T-0.5,-0.8) {$f_j-\frac{1}{T}$};

     \draw (\freqi+1/\T,0) -- (\freqi+1/\T,-0.2);
    \node at (\freqi+1/\T+0.5,-0.8) {$f_i+\frac{1}{T}$};

     \draw (\freqi-1/\T,0) -- (\freqi-1/\T,-0.2);
    \node at (\freqi-1/\T-0.5,-0.8) {$f_i-\frac{1}{T}$};

   	\node at (\freqi, 1.5) {$\min(T,\frac{1}{|\theta-f_i|})$};
   	\node at (\freqj, 1.5) {$\min(T,\frac{1}{|\theta-f_j|})$};
\end{tikzpicture}\caption{$\int_{-\infty}^{+\infty} \min(T,\frac{1}{|\theta-f_i|}) \cdot \min (T,\frac{1}{|\theta-f_j|}) \mathrm{d} \theta$}
\end{figure}
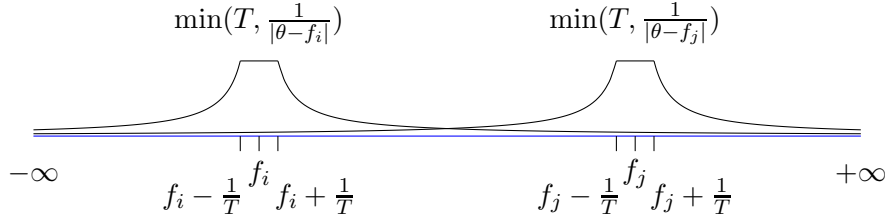


This gives the result for $\nu_i \leq \frac{1}{T}$.  In the alternate case, we have $\nu_i > \frac{1}{T}$, then
\begin{eqnarray*}
|\widehat{y}_i(\theta) | & \lesssim & v_i \cdot \min(T,\frac{1}{|f_i-\theta|} ) + v'_i \cdot \min(T,\frac{1}{|f'_i - \theta|}) \\
& \lesssim & \| a_i \| \cdot \left( \min(T, \frac{1}{|f_i-\theta|}) + \min(T, \frac{1}{ | f'_i- \theta |}) \right) \quad \text{by~Lemma~\ref{lem:two_close_signal}} .
\end{eqnarray*}
By similar reason for Equation (\ref{eq:bound_diagonal_by_integral_two}), we have 

\begin{eqnarray*}
& & \frac{1}{T} \int_0^T a_i(t) \overline{a_j(t)} \mathrm{d} t \notag \\
& \lesssim & \frac{1}{T} \|a_i\| \|a_j\| \underbrace{\int_{-\infty}^{+\infty} \min(T,\frac{1}{|f_i-\theta|}) \cdot \min(T,\frac{1}{|f_j-\theta|}) \mathrm{d} \theta}_{C_1} \\
& + & \frac{1}{T} \|a_i\| \|a_j\| \underbrace{\int_{-\infty}^{+\infty} \min(T,\frac{1}{|f_i-\theta|}) \cdot \min(T,\frac{1}{|f'_j-\theta|}) \mathrm{d} \theta}_{C_2} \\
& + & \frac{1}{T} \|a_i\| \|a_j\| \underbrace{\int_{-\infty}^{+\infty} \min(T,\frac{1}{|f'_i-\theta|}) \cdot \min(T,\frac{1}{|f_j-\theta|}) \mathrm{d} \theta}_{C_3} \\
& + & \frac{1}{T} \|a_i\| \|a_j\| \underbrace{\int_{-\infty}^{+\infty} \min(T,\frac{1}{|f'_i-\theta|}) \cdot \min(T,\frac{1}{|f'_j-\theta|}) \mathrm{d} \theta}_{C_4} .
\end{eqnarray*}
Applying  Lemma \ref{lem:bound_from_infty_to_infty} on the term $C_1$, $C_2$, $C_3$ and $C_4$ respectively, 
\begin{equation*}
\frac{1}{T} \int_0^T a_i(t) \overline{a_j(t) } \lesssim \| a_i\| \| a_j\| \frac{\log(T\Delta f_{i,j})}{\Delta f_{i,j}},
\end{equation*}
where $\Delta f_{i,j} = \min(|f_i-f_j|, |f_i-f'_j|, |f'_i-f_j|, |f'_i - f'_j|)$. 

\end{proof}


\begin{lemma}\label{lem:squares_of_sum_less_than_sum_of_squares}
  Let $\{(v_i,f_i)\}$ and $\{(v'_i,f'_i)\}$ be two sets of $k$ tones
  for which $\min_{i\neq j} |f_i-f_j| \geq \eta$ and $\min_{i\neq j}
  |f'_i - f'_j| \geq \eta$ for some $\eta > 0$.  Suppose that $T >
  C/\eta$ for a sufficiently large constant $C$.  Then these sets can
  be indexed such that
\begin{equation}
\frac{1}{T} \int_0^T | \sum_{i=1}^k (v'_i e^{2\pi \i f'_i t}  - v_i e^{2\pi \i f_i t}) |^2 \mathrm{d} t \leq (1+ O(\frac{\log(k\eta T) \log(k) }{\eta T})) \sum_{i=1}^k \frac{1}{T} \int_0^T  |  v'_i e^{2\pi \i f'_i t}  - v_i e^{2\pi \i f_i t} |^2 \mathrm{d} t.
\end{equation}
\end{lemma}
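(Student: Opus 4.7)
The plan is to reindex the two sets via a nearest-neighbor partial matching so that all cross-distances $\Delta f_{i,j}$ are bounded below by $\Omega(\eta)$, then expand the squared integral and bound each cross term using Lemma~\ref{lem:sum_of_absolute}.

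First I would choose the indexing. Define a bipartite graph on the two frequency sets with edge $f_i \sim f'_j$ iff $|f_i - f'_j| < \eta/4$. Because each set has separation $\geq \eta$, two neighbors of a single vertex would themselves be within $\eta/2 < \eta$, so each vertex has degree at most one and the graph is a partial matching. I would extend it to a bijection $\pi:[k]\to[k]$ by pairing the unmatched vertices arbitrarily, and relabel $f'_j\leftarrow f'_{\pi(j)}$. For any $i\neq j$ in the new indexing, $|f_i-f_j|$ and $|f'_i-f'_j|$ are $\geq \eta$ by the separation hypotheses; $|f_i-f'_j|\geq \eta/4$ because any close partner of $f_i$ in the graph is necessarily paired with $i$ itself; symmetrically $|f'_i-f_j|\geq \eta/4$. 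Hence $\Delta f_{i,j}\geq \eta/4$ for all $i\neq j$.

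Next I would expand the square. Writing $\|a_i\|^2=\frac{1}{T}\int_0^T|a_i(t)|^2\,\mathrm{d}t$,
\begin{equation*}
\frac{1}{T}\int_0^T\Bigl|\sum_{i=1}^k a_i(t)\Bigr|^2\mathrm{d}t = \sum_i \|a_i\|^2 + \sum_{i\neq j}\frac{1}{T}\int_0^T a_i(t)\overline{a_j(t)}\,\mathrm{d}t.
\end{equation*}
Since $T>C/\eta$ for sufficiently large $C$, we have $\Delta f_{i,j}\geq \eta/4 > 2/T$, so Lemma~\ref{lem:sum_of_absolute} applies and yields $\bigl|\frac{1}{T}\int_0^T a_i\overline{a_j}\bigr|\lesssim \frac{\log(\Delta f_{i,j}T)}{\Delta f_{i,j}T}\|a_i\|\|a_j\|$. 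Applying AM--GM to $\|a_i\|\|a_j\|$ and using symmetry of $\Delta f_{i,j}$, the total cross contribution is at most $\bigl(\max_i\sum_{j\neq i}\frac{\log(\Delta f_{i,j}T)}{\Delta f_{i,j}T}\bigr)\sum_i\|a_i\|^2$.

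The core estimate bounds this inner sum. For any $D\geq \eta/4$, each of the four component distances defining $\Delta f_{i,j}$ (say $|f_i-f'_j|$) lies within $D$ of $f_i$ or of $f'_i$ for at most $O(D/\eta)$ values of $j$, since both $\{f_j\}$ and $\{f'_j\}$ have separation $\geq \eta$. Union-bounding the four distances, $\#\{j\neq i:\Delta f_{i,j}\leq D\}\lesssim D/\eta$, so sorting the distances in increasing order the $m$-th smallest value is $\gtrsim m\eta$, giving
\begin{equation*}
\sum_{j\neq i}\frac{\log(\Delta f_{i,j}T)}{\Delta f_{i,j}T} \lesssim \sum_{m=1}^{k-1}\frac{\log(m\eta T)}{m\eta T} \leq \frac{\log(k\eta T)}{\eta T}\sum_{m=1}^{k-1}\frac{1}{m} \lesssim \frac{\log(k\eta T)\log k}{\eta T},
\end{equation*}
using monotonicity of $\log(x)/x$ for $x\geq e$ (we have $m\eta T\geq \eta T\gg 1$). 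This is precisely the prefactor claimed. The main obstacle is the indexing step: without the nearest-neighbor matching, one could have $f_i$ arbitrarily close (or equal) to some $f'_j$ with $j\neq i$, making $\Delta f_{i,j}$ vanish and the cross-term bound useless; verifying that the partial matching always extends to a bijection under which $\Delta f_{i,j}\geq \eta/4$ holds is the combinatorial heart of the argument.
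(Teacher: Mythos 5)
Your proof is correct and follows the same overall strategy as the paper's: build a nearest-neighbor matching between the two frequency sets so that all cross-distances $\Delta f_{i,j}$ are bounded away from zero, expand the square into diagonal plus off-diagonal terms, bound each off-diagonal term via Lemma~\ref{lem:sum_of_absolute}, symmetrize with AM--GM, and estimate the per-row sum $\sum_{j\ne i}\log(\Delta f_{i,j}T)/(\Delta f_{i,j}T)$.

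Where you differ from the paper is in how the per-row sum is controlled, and your version is actually the more robust one. The paper asserts, after the $\eta/3$-matching and sorting the $f_i$, that $\Delta f_{i,j}\gtrsim\eta|i-j|$, and plugs that directly into the sum. That intermediate inequality does not hold in general: if the two sorted sets are disjoint and offset (take $f_i=(i-1)\eta$ and $f'_i=(i-1-k)\eta$, say), then $|f_1-f'_k|=\eta$ while $|1-k|=k-1$, so $\Delta f_{1,k}=\Theta(\eta)\not\gtrsim\eta|1-k|$. Your packing argument — counting, for each $i$ and scale $D$, that at most $O(D/\eta)$ indices $j$ can have any of the four component distances $\le D$, hence the $m$-th smallest $\Delta f_{i,j}$ is $\gtrsim m\eta$ — only uses the separation within each set together with the matching lower bound $\Delta f_{i,j}\gtrsim\eta$, and it correctly recovers the $\frac{\log(k\eta T)\log k}{\eta T}$ prefactor even in the offset example. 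So the two proofs share structure and the same key lemma, but your handling of the combinatorial step is a genuine improvement: it makes no claim tied to a particular ordering of the indices and is insensitive to how the unmatched frequencies are paired, which closes what is at best an imprecision in the paper's argument.
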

\begin{proof}
For simplicity, let $a_i(t) =  v_i e^{2\pi f_i t\i} - v'_{i} e^{2\pi f'_{i} t \i }$.
Let's express the square of summations by diagonal term and off-diagonal term, and then bound them separately.
\begin{eqnarray}\label{eq:sum_of_diagonal_nondiagonal}
& &\int_0^T \left|\sum_{i=1}^{k} a_i(t) \right|^2 \mathrm{d} t \notag \\
& = & \int_0^T \left (\sum_{i=1}^{k} a_i(t) \right) \left(\sum_{i=1}^{k} \overline {a_i(t) } \right)   \mathrm{d} t \notag \\
& = & \int_0^T \sum_{i=1}^k \underbrace{ a_i(t) \overline{a_i(t)} }_{\mathrm{ diagonal} } +  \sum_{i\neq j}^k  \underbrace{ a_i(t) \overline{a_{j}(t)} }_{\mathrm{off\text{-}diagonal}} \mathrm{d} t .
\end{eqnarray}






Using the result of Lemma \ref{lem:sum_of_absolute} and $a^2+b^2 \geq 2ab$, we can upper bound the off-diagonal term,
 \begin{eqnarray}
& &\int_0^T a_i(t) \overline {a_j(t)} \mathrm{d} t \notag \\
&\lesssim & \frac{\log( \Delta f_{i,j} T)}{ \Delta f_{i,j} T} \cdot \sqrt{ \int_0^T |a_i(t)|^2 \mathrm{d} t  \int_0^T |a_j(t)|^2 \mathrm{d} t  } \quad \text{by~Lemma~\ref{lem:sum_of_absolute}}\notag \\
&\lesssim & \frac{\log(\Delta f_{i,j} T)}{\Delta f_{i,j} T} \cdot  \left( \int_0^T |a_i(t)|^2 \mathrm{d} t + \int_0^T |a_j(t)|^2 \mathrm{d} t \right) \quad \text{by~$2ab\leq a^2+b^2$} \notag .
\end{eqnarray}
where $\Delta f_{i,j} = \min(|f_i-f_j|, |f_i-f'_j|, |f'_i-f_j|, |f'_i
- f'_j|)$.  If we index such that any $f_i$ is matched with any $f_j'$
with $\abs{f_i - f_j'} < \eta/3$ -- which is possible, since at most
one such $f_j'$ will exist by the separation among the $f_j'$, and
that $f_j'$ will be within $\eta/3$ of at most on $f_i$ -- then we
have $\Delta f_{i, j} \gtrsim \abs{f_i - f_j}$.  If we order the $f_i$
in increasing order, then in fact $\Delta f_{i, j} \gtrsim \eta \abs{i
  - j}$.

If $T > C/\eta$ for a sufficiently large constant $C$, this means that
$\Delta f_{i, j} T \gtrsim \abs{i - j} \eta T \geq e$.  Since $\frac{\log
  x}{x}$ is decreasing on the region, this implies
\[
\frac{\log(\Delta f_{i,j} T)}{\Delta f_{i,j} T} \lesssim \frac{\log(\abs{i - j} \eta T)}{\abs{i-j}\eta T}.
\]
Thus, we have
\begin{eqnarray}\label{eq:bound_for_diagonal_term}
& &\int_0^T a_i(t) \overline {a_j(t)} \mathrm{d} t \lesssim  \frac{\log( |i-j|\eta T)}{  |i-j| \eta T } \cdot \left( \int_0^T |a_i(t)|^2 \mathrm{d} t + \int_0^T |a_j(t)|^2 \mathrm{d} t \right) \quad  .
\end{eqnarray}

Finally, we have 
\begin{eqnarray*}
& &\int_0^T \left|\sum_{i=1}^{k} a_i(t) \right|^2 \mathrm{d} t - \sum_{i=1}^k \int_0^T  |a_i(t)|^2 \mathrm{d} t \\
& = &  \sum_{i\neq j}^k \int_0^T a_i(t) \overline{a_{j}(t)}  \mathrm{d} t \quad \text{by~Equation (\ref{eq:sum_of_diagonal_nondiagonal})} \\
&\lesssim &  \sum_{i\neq j}^k \frac{\log(|i-j| \eta T)}{|i -j|\eta T} \cdot \left( \int_0^T |a_i(t)|^2 \mathrm{d} t + \int_0^T |a_j(t)|^2 \mathrm{d} t \right) \quad \text{by~Equation~(\ref{eq:bound_for_diagonal_term}) } \\
& \leq  &  \frac{\log(k\eta T)}{\eta T} \sum_{i=1}^k \sum_{j\neq i}^k \frac{1}{|i-j|} \cdot \left( \int_0^T |a_i(t)|^2 \mathrm{d} t + \int_0^T |a_j(t)|^2 \mathrm{d} t \right) \\
& =  & 2\frac{\log(k\eta T)}{\eta T} \sum_{i=1}^k \sum_{j\neq i}^k \frac{1}{|i-j|} \int_0^T |a_i(t)|^2 \mathrm{d} t \quad \text{by~symmetry} \\
& \lesssim  &  \frac{\log(k\eta T)}{\eta T}   \sum_{i=1}^k \int_0^T |a_i(t)|^2 \mathrm{d} t \sum_{j\neq i}^k \frac{1}{|i-j|} \\
& \lesssim & \frac{\log(k\eta T) \log (k) }{\eta T} \sum_{i=1}^k \int_0^T |a_i(t)|^2 \mathrm{d} t \quad \text{by~$\sum_{i=1}^{k} \frac{1}{i} \eqsim \log(k)$ }.
\end{eqnarray*}
Thus, we complete the proof.
\end{proof}

\restate{lem:choose_longer_duration}
\begin{proof}
Directly follows by Lemma \ref{lem:squares_of_sum_less_than_sum_of_squares}.
\end{proof}

\section{Lower Bound}
\restate{lem:lower}
\begin{proof}

  Suppose this were possible, and consider two one-sparse signals $y$
  and $y'$ containing tones $(v, f)$ and $(v, f')$,
  respectively.  By Lemma~\ref{lem:two_close_signal},
  \[
  \int_0^T \abs{y(t) - y(t)}^2 \mathrm{d}t \lesssim   \abs{v}^2T^2\abs{f - f'}^2.
  \]
  Consider recovery of the signal $x(t) = y(t)$, and suppose it
  outputs some frequency $f^*$.  This must simultaneously be a good
  recovery for the decomposition $(x^*, g) = (y, 0)$ and $(x^*, g) =
  (y', y-y')$.  These have noise levels $\N^2$ bounded by $\delta
  \abs{v}^2$ and $\delta \abs{v}^2 +O(\abs{v}^2T^2\abs{f - f'}^2)$,
  respectively.  By the assumption of good recovery, and the triangle
  inequality, we require
  \[
  c\frac{2\sqrt{\delta \abs{v}^2} + \sqrt{O(\abs{v}^2T^2\abs{f - f'}^2)}}{Tv} \gtrsim \abs{f - f'}
  \]
  or
  \[
  c \cdot O(\sqrt{\frac{\delta}{T\abs{f - f'}}} + 1) \geq 1.
  \]
  Because $\delta$ may be chosen arbitrarily small, we can choose a
  small constant $c$ such that this is a contradiction.
\end{proof}

\begin{algorithm}
\caption{Continuous Fourier Sparse Recovery}\label{alg:noisy_k_sparse1}
\begin{algorithmic}[1]
\Procedure{$\mathsf{ContinuousFourierSparseRecovery}$}{$x,k,\delta,\alpha,C, F, T,\eta$ }
----- The $\mathrm{Main ~ Algorithm}$
	\State $F$ is the upper bound of frequency, $T$ is the sample duration, $C$ is the approximation factor.
	\State $\delta,\alpha$ are the parameters associated with Hash function.
	\State $c= 1/10$ and $b=8/10$
  	\State $R_1 \leftarrow \mathsf{NoisyKSparseCFFT}(x,k,\delta, \alpha,C, F) $
 	\State $S_1 \leftarrow \mathsf{MergedStages}(R_1,O(k\log k), \eta,c,b)$
 	\State $R_2 \leftarrow \mathsf{NoisyKSparseCFFT}(x,O(k),\delta, \alpha,C, F, T) $
 	\State $S_2 \leftarrow \mathsf{MergedStages}(R_2,O(k\log k), \Omega(\eta),c,b)$
  	\State $S \leftarrow S_1 \cap S_2$, which means only keeping the tones that $S_1$ agrees with $S_2$ by Lemma \ref{lem:prune_twice}.
  	\State $S^* \leftarrow \mathsf{Prune}(S,k)$, which means only keeping the top-$k$ largest magnitude tones.
  	\State \Return $S^*$ 
\EndProcedure
\Procedure{$\mathsf{NoisyKSparseCFFT}$}{$x,k,\delta,\alpha,C,F,T$}
  \State Let $B = k/\epsilon$.
  \For{ $c=1 \to \log(k)$}
  	\State Choose $\sigma$ uniformly at random from $[\frac{1}{B \eta}, \frac{2}{B \eta}]$.
  	\State Choose $b$ uniformly at random from $[0,\frac{2\pi \lceil F/\eta \rceil}{\sigma B}]$.
  	\State $R_c\leftarrow \mathsf{OneStage}(x,B,\delta,\alpha,\sigma,b,C,F,T) $
  \EndFor
  \State \Return $(R_1,R_2,\cdots, R_{\log(k)})$.
\EndProcedure
\Procedure{$\mathsf{MergedStages}$}{$R,m,\eta,c,b$}
	\State $R$ is a list of $m$ tones $(v'_i, f'_i)$
	\State $c$ is some constant $<1$.
	\State $b$ is some constant $<1$.
	\State Sort list $R$ based on $f'_i$.
	\State Building the 1D range search $\mathsf{Tree}$ based on $m$ points by regarding each frequency $f'_i$ as a 1D point on a line where $x_i = f'_i$.
	\State $S\leftarrow \emptyset$, $i \leftarrow 0$
	\While {$i < m$}
		\If {$\mathsf{Tree.Count}(f'_i, f'_i+ c\eta) \geq b\log k$}
			\State $f \leftarrow \text{median} ~\{ ~f'_j ~| ~f'_j \in [f'_i - c\eta, f'_i+ 2c\eta]  \}$
			\State $v  \leftarrow \text{median} ~\{ ~v'_j ~| ~f'_j \in [f'_i-c\eta, f'_i+ 2c\eta]  \}$
			\State $S \leftarrow S \cup (f,v)$
			\State $i \leftarrow \mathsf{Tree.Search}(f'_i+2c\eta+ \eta/2)$, which means walk to the first point that is on the right of $f'_i+2c\eta+ \eta/2$
		\Else
			\State $i \leftarrow i+1$
		\EndIf
	\EndWhile
	\State \Return $S$
\EndProcedure
\end{algorithmic}
\end{algorithm}

\begin{algorithm}
\caption{Continuous Fourier Sparse Recovery}\label{alg:noisy_k_sparse2}
\begin{algorithmic}[1]
\Procedure{$\mathsf{HashToBins}$}{$x,P_{\sigma,a,b},B,\delta,\alpha$}
  \State Compute $\widehat{y}_{j F/B}$ for $j\in [B]$, where $y = G_{B,\alpha,\delta} \cdot (P_{\sigma,a,b}x)$
  \State \Return $\widehat{u}$ given by $\widehat{u}_j =  \widehat{y}_{j F/B}$
\EndProcedure
\Procedure{$\mathsf{OneStage}$}{$x,B,\delta,\alpha,\sigma,b,C,F,T$}
\State $L\leftarrow \mathsf{LocateKSignal}(x,B,\delta,\alpha,\sigma,b,C,F,T)$
\State Choose $a \in [0, 1]$ uniformly at random.
\State $\widehat{u} \leftarrow \mathsf{HashToBins}(x,P_{\sigma,a,b}, B, \delta,\alpha)$
\State \Return $\{(\wh{u}_{h_{\sigma, b}(f')} e^{-2\pi\sigma a f' \i}, f')$ for $f' \in L$ if not $E_{\mathit{off}}(f')\}$.
\EndProcedure
\Procedure{$\mathsf{LocateKSignal}$}{$x,B,\delta,\alpha,\sigma,b,C,F,T$}
	\State Set $t \eqsim \log (FT)$, $t'= t/(c_n+1)$, $D \eqsim  \log_{t'} (FT) $ , $R_{loc} \eqsim \log_{C} (tC)$, $l^{(1)}=F/2$.
	\For{ $i\in [D - 1]$}
		\State $\Delta l \eqsim F/(t')^{i-1} $, $s = \frac{1}{\sqrt{C}}$, $\widehat{\beta} = \frac{ t s}{2\sigma \Delta l}$
		\State $l^{(i+1)} \leftarrow \mathsf{LocateInner}(x,B,\delta,\alpha,\sigma,b,\widehat{\beta},l^{(i)}, \Delta l,t, R_{loc}, {\bf false})$.
	\EndFor
	\State Set $s=1/C$, $t\eqsim \log(FT)/s$, $\Delta l \eqsim st/T $, $\widehat{\beta} = \frac{ t s}{2\sigma \Delta l}$ , $R_{loc} \eqsim \log_{C} (tC)$
	\State $l^{(*)} \quad~\leftarrow~ \mathsf{LocateInner}(x,B,\delta,\alpha,\sigma,b,\widehat{\beta},l^{(D)}, \Delta l,t, R_{loc}, {\bf true} )$.
	\State \Return  $l^{(*)}$.
\EndProcedure
\Procedure{$\mathsf{LocateInner}$}{$x,B,\delta,\sigma,b,\widehat{\beta},l,\Delta l ,t,R_{loc},last$}
	\State Let $v_{j,q} = 0$ for $(j,q)\in [B] \times [t]$.
	\For {$r\in [R_{loc}]$}
		\State Choose $\gamma\in [\frac{1}{2},1]$ uniformly at random.
		\State Choose $\beta \in [\frac{1}{2} \widehat{\beta}, 1 \widehat{\beta}]$ uniformly at random.
		\State $\widehat{u} \leftarrow \mathsf{HashToBins}(x,P_{\sigma,\gamma,b}, B, \delta,\alpha)$.
		\State $\widehat{u}' \leftarrow \mathsf{HashToBins}(x,P_{\sigma,\gamma+\beta,b}, B, \delta,\alpha)$.
		\For {$j\in [B]$}
			\For {$i \in [m]$}
				\State $\theta_{j,i}^r = \frac{1}{2\pi\sigma\beta} ( \phi(\widehat{u}_j/\widehat{u'}_j) +2\pi s_i), s_i\in [ \sigma\beta(l_j -\Delta l/2) ,  \sigma\beta(l_j + \Delta l/2)  ] \cap \mathbb{Z}_+$
				\State $f_{j,i}^r = \theta_{j,i}^r + b \pmod F$
				\State suppose $f_{j,i}^r$ belongs to $\text{region}(j,q)$, 
				\State add a vote to both region($j,q$) and two neighbors nearby that region, e.g. region($j,q-1$) and region($j,q+1$)
			\EndFor
		\EndFor
	\EndFor
	\For {$j \in [B]$}
		\State $q_j^* \leftarrow \{q | v_{j,q}>\frac{R_{loc}}{2} \}$
		\If {$last~={\bf true} $}
			\State $l_j^* \leftarrow \text{median} \{ f_{j,i}^r | f_{j,i}^r \in \text{region}(j,q_j^*), i\in [f], r\in [R_{loc}]\}$
		\Else
			\State $l_j^* \leftarrow \mathrm{center~~of~region}(j, q_j^*)$
		\EndIf
	\EndFor
	\State \Return $ l^*$
\EndProcedure
\end{algorithmic}
\end{algorithm}



\end{document}